\title{Population Protocols with Unordered Data}
\author{Michael Blondin}{Department of Computer Science, Universit\'e de Sherbrooke, Canada}{michael.blondin@usherbrooke.ca}{https://orcid.org/0000-0003-2914-2734}{Supported by a Discovery Grant from the Natural Sciences and Engineering Research Council of Canada (NSERC), and by the Fonds de recherche du Québec – Nature et technologies (FRQNT).}
\author{Fran\c{c}ois Ladouceur}{Department of Computer Science, Universit\'e de Sherbrooke, Canada}{francois.ladouceur@usherbrooke.ca}{https://orcid.org/0009-0000-7651-6685}{Supported by a scholarship from the Natural Sciences and Engineering Research Council of Canada (NSERC), and by the Fondation J.A. De S\`eve.}
\authorrunning{M. Blondin and F. Ladouceur} 
\keywords{population protocols, unordered data, colored Petri nets}
\begin{document}

\maketitle

\begin{abstract}
  Population protocols form a well-established model of computation of
  passively mobile anonymous agents with constant-size memory. It is
  well known that population protocols compute Presburger-definable
  predicates, such as absolute majority and counting predicates. In
  this work, we initiate the study of population protocols operating
  over arbitrarily large data domains. More precisely, we introduce
  \emph{population protocols with unordered data} as a formalism to
  reason about anonymous crowd computing over unordered sequences of
  data. We first show that it is possible to determine whether an
  unordered sequence from an infinite data domain has a datum with
  absolute majority. We then establish the expressive power of the
  ``immediate observation'' restriction of our model, namely where, in
  each interaction, an agent observes another agent who is unaware of
  the interaction.
\end{abstract}

\section{Introduction}\label{sec:intro}
\subparagraph*{Context.}

Population protocols form a well-established model of computation of
passively mobile anonymous agents with constant-size
memory~\cite{AADFP06}. Population protocols allow, \eg, for the formal
analysis of chemical reaction networks and networks of mobile sensors
(see~\cite{MS18} for a review article on population protocols and more
generally on dynamic networks).

In a population protocol, anonymous agents hold a mutable state from a
finite set. They collectively seek to evaluate a predicate on the
initial global state of the population. At each discrete moment, a
scheduler picks two agents who jointly update their respective
states according to their current states. Such a scheduler is assumed
to be ``fair'' (or, equivalently, to pick pairs of agents uniformly at
random). Let us illustrate the model with a classical protocol for the
aboslute majority predicate. Consider a population of $\ell$ (anonymous)
agents, each initialized with either $Y$ or $N$, that seek to compute
whether the number of $Y$ exceeds the number of $N$, \ie, to
collectively evaluate the predicate $\varphi(\#Y, \#N) \defeq (\#Y >
\#N)$. For example, a population of $\ell = 5$ agents may be
initialized to $\multiset{Y, N, Y, Y, N}$. An update of two agents
occurs according to these four rules:
\begin{center}
  \begin{tabular}{lcl}
    \toprule
    \emph{strong to weak} &
    \emph{propagation of winning side} &
    \emph{tiebreaker} \\
    \midrule
    \multirow{2}{*}{$\multiset{Y, N} \to \multiset{n, n}$} &
    $\mathmakebox[30pt][r]{\multiset{Y, n}} \to \mathmakebox[30pt][l]{\multiset{Y, y}}$ &
    \multirow{2}{*}{$\multiset{y, n} \to \multiset{n, n}$} \\
    & $\mathmakebox[30pt][r]{\multiset{N, y}} \to \mathmakebox[30pt][l]{\multiset{N, n}}$ & \\
    \bottomrule
  \end{tabular}
\end{center}
A possible execution from the aforementioned population is
$
\multiset{Y, N, Y, Y, N} \allowbreak
\reach{} \allowbreak
\multiset{Y, N, Y, n, n} \allowbreak
\reach{} \allowbreak
\multiset{Y, n, n, n, n}
\reach{}
\multiset{Y, y, n, n, n}
\reach{}
\cdots
\reach{}
\multiset{Y, y, y, y, y}$.

Agents in states $\{Y, y\}$ believe that the output of $\varphi$
should be $\true$, while agents in $\{N, n\}$ believe that it should
be $\false$. Thus, in the above execution, a lasting $\true$-consensus
has been reached by the population (although no agent is locally
certain of it).

It is well known that population protocols compute precisely the
predicates definable in Presburger arithmetic, namely first-order
logic over the naturals with addition and order. This was first shown
through convex geometry~\cite{AAER07}, and reproven using the theory
of vector addition systems~\cite{EGLM17}. For example, this means
that, given voting options $\{1, \ldots, k\}$, there is a population
protocol that determines whether some option $i$ has an absolute majority, \ie,
whether more than $\ell / 2$ of the $\ell$ agents initially hold a
common $i \in \{1, \ldots, k\}$.

Since $k$ must be stored in the state-space, such a majority protocol
can only handle a fixed number of voting options. As rules also depend on
$k$, this means that a whole population would need to be reconfigured
in order to handle a larger $k$, \eg\ if new voting options are made
available. This is conceptually impractical in the context of flocks
of anonymous mobile agents. Instead, we propose that the input of an
agent can be modeled elegantly as drawn from an infinite set $\D$,
with rules independent from $\D$.

\subparagraph*{Contribution.}

In this work, we initiate the study of population protocols operating
over arbitrarily large domains. More precisely, we propose a more
general model where each agent carries a read-only datum from an
infinite domain $\D$ together with a mutable state from a finite
set. In this setting, a population can, \eg, seek to determine whether
there is an absolute majority datum. For example, if $\D \defeq \{1, 2, 3,
\ldots\}$, then the population initialized with $\multiset{(1, x), (1,
  x), (2, x), (3, x), (1, x)}$ should reach a lasting
$\true$-consensus, while it should reach a lasting $\false$-consensus from
$\multiset{(1, x), (1, x), (2, x), (3, x), (2, x)}$.

As in the standard model, a fair scheduler picks a pair of agents. An
interaction occurs according to a rule of the form $\multiset{p,
q} \xrightarrow{d \sim e} \multiset{p', q'}$, where $d, e \in \D$ are
the data values of the two agents, and where ${\sim} \in
\{=, \neq\}$ compares them. As for states, we assume that arbitrarily
many agents may be initialized with the same datum; and that agents
can only compare data through (in)equality. So, $\D$ is not a set of
(unique) identifiers and hence agents remain anonymous as in the
standard model.

To illustrate our proposed model of computation, we first show that it
can compute the absolute majority predicate. This means that a \emph{single}
protocol can handle \emph{any} number of options in an absolute majority
vote. From the perspective of distributed computing, this provides a
framework to reason about anonymous crowd computing over unordered
sequences of data. From the standpoint of computer-aided verification,
this opens the possibility of formally analyzing single protocols
(\eg\ modeled as colored Petri nets) rather than resorting to
parameterized verification, which is particularly difficult in the
context of counter systems.

As a stepping stone towards pinpointing the expressive power of
\emph{population protocols with unordered data}, we then characterize
\emph{immediate observation} protocols. In this well-studied
restriction, rules have the form $\multiset{p, q} \xrightarrow{d \sim
  e} \multiset{p, q'}$, \ie\ an agent updates its state by observing
another agent (who is unaware of it). In standard population
protocols, this class is known to compute exactly predicates from
$\textbf{COUNT}_{*}$~\cite{AAER07}. The latter is the Boolean closure
of predicates of the form $\# q \geq c$, where $\# q$ counts the
number of agents in state $q$, and $c \in \N$ is a constant. In our
case, we show that immediate observation protocols compute exactly
\emph{interval predicates}, which are Boolean combinations of such
\emph{simple interval predicates}:
\begin{align}
  \exists \text{ pairwise distinct } d_1, d_2, \ldots, d_n \in \D :
  \bigwedge_{i = 1}^n \bigwedge_{j = 1}^m \# (d_i, q_j) \in T(i,
    j),\label{eq:count}
\end{align}
where $\# (d_i, q_j)$ counts agents in state $q_j$ with datum $d_i$,
and each $T(i, j) \subseteq \N$ is an interval.

In order to show that immediate observation protocols do not compute
more than interval predicates, we exploit the fact that (finitely
supported) data vectors are well-quasi-ordered. While our approach is
inspired by~\cite{AAER07}, it is trickier to simultaneously deal with
the several sources of unboundedness: number of data values, number of
agents with a given datum, and number of agents with a given state. As
a byproduct, we show that the absolute majority predicate cannot be computed by
immediate observation protocols.

To show the other direction, \ie\ that interval predicates are
computable by immediate observation protocols, we describe a protocol
for simple interval predicates. In contrast with the standard setting,
we need to implement existential quantification. This is achieved by a
data leader election and a global leader election. We call the latter
elected agent the ``controller''. Its purpose is to handle the
bookkeeping of data leaders choosing their role in~\eqref{eq:count}. A correction mechanism is carefully implemented
so that the population only reaches a $\true$-consensus upon locking a
correct assignment to the existential quantification.

\subparagraph*{Related work.}

It has been observed by the verification and concurrency communities
that population protocols can be recast as Petri nets. In particular,
this has enabled the automatic formal analysis of population
protocols~\cite{EGLM17,BEJM21} and the discovery of bounds on their
state complexity~\cite{CGHE22,BEGHJ20}. Our inspiration comes from the
other direction: we introduce protocols with data by drawing from the
recent attention to colored Petri
nets~\cite{GSAH19,HLT17,HLLLST16}. Our model corresponds to unordered
data Petri nets where the color and number of tokens is invariant.

Population protocols for computing majority and plurality have been
extensively studied (\eg,
see~\cite{DEGSUS21,BBHK22,BEFKKR18,BBBEHKK22} for recent results). To
the best of our knowledge, the closest work is~\cite{GHMSS16}, where
the authors propose space-efficient \emph{families} of deterministic
protocols for variants of the majority problem including plurality
consensus. They consider the $k$ voting options as ``colors''
specified by $\lceil \log k \rceil$ bits stored within the agents.

Other incomparable models of distributed systems with some sort of
data include broadcast networks of register automata~\cite{DST16},
distributed register automata~\cite{BBR19}, and distributed memory
automata~\cite{BRS21}. Such formalisms, inspired by register
automata~\cite{KF94}, allow identities, control structures and
alternative communication mechanisms; none allowed in population
protocols.

\subparagraph*{Paper organization.}

\Cref{sec:prelims} provides basic definitions and introduces our
model. In \Cref{sec:maj}, we present a protocol that computes the
absolute majority predicate. \Cref{sec:io} establishes the expressive
power of immediate observation protocols. We conclude in
\Cref{sec:conclusion}.

\renewcommand{\c}{\mat{C}}

\section{Preliminaries}\label{sec:prelims}
We write $\N$ and $[a..b]$ to respectively denote sets $\{0, 1, 2,
\ldots\}$ and $\{a, a+1, \ldots, b\}$. The \emph{support} of a
multiset $\vec{m}$ over $E$ is $\act{\vec{m}} \defeq \{e \in E :
\vec{m}(e) > 0\}$ (We use the notation $\act{\vec{m}}$ rather than
$\supp{\vec{m}}$ as we will later refer to ``active states''.) We
write $\N^E$ to denote the set of multisets over $E$ with finite
support. The empty multiset, denoted $\0$, is such that $\0(e) = 0$
for all $e \in E$. Let $\vec{m}, \vec{m}' \in \N^E$. We write $\vec{m}
\leq \vec{m}'$ iff $\vec{m}(e) \leq \vec{m}'(e)$ for all $e \in E$.
We define $\vec{m} + \vec{m}'$ as the multiset such that $(\vec{m} +
\vec{m}')(e) \defeq \vec{m}(e) + \vec{m}'(e)$ for all $e \in E$. The
difference, denoted $\vec{m} - \vec{m}'$, is defined similarly
provided that $\vec{m} \geq \vec{m}'$.

\subsection{Population protocols with unordered data}\label{ssec:udpp}

A \emph{population protocol with unordered data}, over an
infinite domain $\D$ equipped with equality, is a tuple $(Q, \delta,
I, O)$ where
\begin{itemize}
\item $Q$ is a finite set of elements called \emph{states},

\item $\delta \subseteq Q^2 \times \{=, \neq\} \times Q^2$ is the set
  of \emph{transitions},

\item $I \subseteq Q$ is the set of \emph{initial states}, and

\item $O \colon Q \to \{\false, \true\}$ is the \emph{output}
  function.
\end{itemize}

We refer to an element of $\D$ as a \emph{datum} or as a
\emph{color}. We will implicitly assume throughout the paper that
$\delta$ contains $((p, q), \sim, (p, q))$ for all $p, q \in Q$ and
${\sim} \in \{=, \neq\}$.

A \emph{form} $\vec{f}$ is an element from $\N^Q$. We denote the set
of all forms by $\F$. Given $Q' \subseteq
Q$, let $\vec{f}(Q') \defeq \sum_{q \in Q'} \vec{f}(q)$. A
\emph{configuration} is a mapping $\c$ from $\D$ to $\F$ such that
$\supp{\c} \defeq \{d \in \D : \c(d) \neq \0\}$ is finite, and
$\sum_{d \in \D, q \in Q} \c(d)(q) \geq 2$. We often write $\c(d)(q)$
as $\c(d, q)$. Informally, the latter denotes the number of agents
with datum $d$ in state $q$. We extend this notation to subsets of
states: $\c(d, Q') \defeq \c(d)(Q')$. We naturally extend $+$, $-$ and
$\leq$ to $\D \to \F$, \eg\ $\c + \c'$ is such that $(\c + \c')(d)
\defeq \c(d) + \c'(d)$ for all $d \in \D$.

Let $\c$ be a configuration. We define the \emph{active states} 
as the set $\act{\c} \defeq \{q \in Q : \c(d, q) > 0 \text{ for some } d \in
\D\}$. We say that $\c$ is \emph{initial} if $\act{\c} \subseteq
I$. Given $Q' \subseteq Q$, let $\card{Q'}{\c} \defeq \sum_{d \in \D}
\c(d, Q')$ and $|\c| \defeq \card{Q}{\c}$. The \emph{output} of $\c$
is defined by $O(\c) \defeq b$ if $O(q) = b$ for every $q \in \act{\c}$;
and by $O(\c) = \bot$ otherwise. Informally, $O(\c)$ indicates whether
all agents agree on some output $b$.

\begin{example}\label{ex:config}
  Let $\D \defeq \{\tokone, \toktwo, \tokthree, \ldots\}$ and $Q
  \defeq \{p, q\}$. Let $\vec{f} \defeq \multiset{p, p, q}$ and
  $\vec{f}' \defeq \multiset{q}$. Let $\c \defeq \{\tokone \mapsto
  \vec{f}, \toktwo \mapsto \vec{f}', \tokthree \mapsto \vec{0},
  \ldots\}$. We have $\mat{C}(\tokone, p) = 2$, $\mat{C}(\tokone, q) =
  \mat{C}(\toktwo, q) = 1$, $\mat{C}(\toktwo, p) = \mat{C}(\tokthree,
  p) = \mat{C}(\tokthree, q) = 0$ and $\card{\{q\}}{\c} =
  2$. Configuration $\mat{C}$ represents a population of four agents
  carrying an immutable datum and a mutable state:
  $\multiset{(\tokone, p), (\tokone, p), (\tokone, q), (\toktwo,
    q)}$. \qed
\end{example}

For the sake of brevity, given a form $\f$, let $\vec{f}_d \colon \D
\to \F$ be defined by $\vec{f}_d(d) \defeq \vec{f}$ and $\vec{f}_d(d')
\defeq \0$ for every $d' \neq d$. Furthermore, given a state $q$, let
$\q_d \colon \D \to \F$ be defined by $\q_d(d)(q) \defeq 1$ and
$\q_d(d')(q') \defeq 0$ for every $(d', q') \neq (d, q)$.

Let $\c$ be a configuration and let $t = ((p, q), \sim, (p', q')) \in
\delta$. We say that transition $t$ is \emph{enabled} in $\c$ if there
exist $d, e \in \D$ such that $d \sim e$, $\c \geq \p_d + \q_e$. If the
latter holds, then $t$ can be used to obtain the configuration $\c'
\defeq \c - (\p_d + \q_e) + (\vec{p'}_d + \vec{q'}_e)$,  which we denote $\c
\reach{t} \c'$. We write $\c \reach{} \c'$ to denote that $\c
\reach{t} \c'$ holds for some $t \in \delta$. We further define
$\reach{*}$ as the reflexive-transitive closure of $\reach{}$.

\begin{example}
  Let $O(p) \defeq \false$, $O(q) \defeq \true$ and $t \defeq ((p, q),
  =, (q, q))$. Using the notation of \Cref{ex:config} to represent
  configurations, we have:
  \[
  \multiset{(\tokone, p), (\tokone, p), (\tokone, q), (\toktwo, q)}
  \reach{t}
  \multiset{(\tokone, p), (\tokone, q), (\tokone, q), (\toktwo, q)}
  \reach{t}
  \multiset{(\tokone, q), (\tokone, q), (\tokone, q), (\toktwo, q)}.
  \]
  Let $\c$, $\c'$ and $\c''$ denote the three configurations above. We
  have $O(\c) = O(\c') = \bot$ and $O(\c'') = \true$. Moreover,
  transition $t$ is not enabled in $\c''$ as no datum $d \in \D$
  satisfies $\c''(d, p) \geq 1$ and $\c''(d, q) \geq 1$. So, the
  agents have ``converged to a $\true$-consensus''. \qed
\end{example}

An \emph{execution} is an infinite sequence of configurations $\c_0
\c_1 \cdots$ such that $\c_0 \reach{} \c_1 \reach{} \cdots$. We say
that such an execution \emph{converges} to output $b \in \{\false,
\true\}$ if there exists $\tau \in \N$ such that $O(\c_\tau) =
O(\c_{\tau+1}) = \cdots = b$. An execution $\c_0 \c_1 \cdots$ is
\emph{fair} if, for every configuration $\c'$, it is the case that
$|\{i \in \N : \c_i \reach{*} \c'\}| = \infty$ implies $|\{i \in \N :
\c_i = \c'\}| = \infty$. In words, fairness states that if $\c'$ is
reachable infinitely often, then it appears infinitely often along the
execution. Informally, this means that some ``progress'' cannot be
avoided forever.

Let $\Sigma$ be a nonempty finite set. An \emph{input} is some
$\mat{M} \in \N^{\D \times \Sigma}$ with $\sum_{d \in \D, \sigma \in
  \Sigma} \mat{M}(d, \sigma) \geq 2$. An input $\mat{M}$ is
translated, via a bijective \emph{input mapping} $\iota \colon \Sigma
\to I$, into the initial configuration $\iota(\mat{M}) \defeq \sum_{d
  \in \D, \sigma \in \Sigma} \sum_{j=1}^{\mat{M}(d, \sigma)}
\vec{\iota(\sigma)}_d$. We say that a protocol \emph{computes} a predicate
$\varphi$ if, for every input $\mat{M}$, every fair execution starting
in $\iota(\mat{M})$ converges to output $\varphi(\mat{M})$. By abuse
of notation, we sometimes write $\varphi(\c_0)$ for $\varphi(\iota^{-1}(\c_0))$.

\begin{example}
  Let $\D \defeq \{\tokone, \toktwo, \tokthree, \ldots\}$, $\Sigma
  \defeq \{x_1, \ldots, x_4\}$, $I \defeq \{q_1, \ldots, q_4\}$ and
  $\iota(x_i) \defeq q_i$. The input $\mat{M} \defeq \multiset{
  (\tokone, x_1), (\tokone, x_1),
  (\tokone, x_2), (\tokone, x_2), (\tokone, x_2),
  (\tokone, x_4),
  (\toktwo, x_1), (\toktwo, x_3)
  }$
  yields the initial configuration
  $
  \iota(\mat{M}) = 
  \{
  \tokone \mapsto \multiset{q_1, q_1, q_2, q_2, q_2, q_4},
  \toktwo \mapsto \multiset{q_1, q_3},
  \tokthree \mapsto \vec{0},
  \ldots
  \}
  $.
\end{example}

Observe that, as for standard protocols, the set of predicates
computed by population protocols with unordered data is closed under
Boolean operations. Given a protocol that computes $\varphi$, we
obtain a protocol that computes $\lnot \varphi$ by changing the value
of $O(q)$ to $\lnot O(q)$ for all $q \in Q$. Given predicates $\psi_1$
and $\psi_2$, respectively computed by protocols $(Q_1, \delta_1, I_1,
O_1)$ and $(Q_2, \delta_2, I_2, O_2)$, it is easy to obtain a protocol
computing $\psi = \psi_1 \land \psi_2$ by having both protocols run in
parallel. This is achieved by defining $(Q \defeq Q_1 \times Q_2,
\delta, I \defeq I_1\times I_2, O)$ where
\begin{itemize}
\item $\delta$ contains $(((p_1, p_2), (q_1, q_2)), \sim, ((p_1',
  p_2), (q_1', q_2)))$ for every $((p_1, q_1), \sim, (p_1', q_1')) \in
  \delta_1$;

\item $\delta$ contains $(((p_1, p_2), (q_1, q_2)), \sim, ((p_1,
  p_2'), (q_1, q_2')))$ for every $((p_2, q_2), \sim, (p_2', q_2'))
  \in \delta_2$;
  
\item $O(q_1, q_2) = O_1(q_1) \land O_2(q_2)$.
\end{itemize}

\section{A protocol for the majority predicate}\label{sec:maj}
\newcommand{\majphi}{\varphi_{\text{maj}}}

Let $\Sigma \defeq \{x\}$. In this section, we present a protocol for the
absolute majority predicate defined as $\majphi(\mat{M}) \defeq \exists d \in
\D : \mat{M}(d, x) > \sum_{d'\neq d} \mat{M}(d', x)$. Since each input
pair has the form $(d, x)$ with $d \in \D$, we omit the ``dummy
element'' $x$ in the informal presentation of the protocol. Note that for the sake of brevity, we use the term majority instead of absolute majority for the remainder of this paper.

Our protocol is not unlike the classical (sequential) Boyer--Moore
algorithm~\cite{BM91}: we seek to elect a color as the majority
candidate, and then check whether it indeed has the majority. It is
intended to work in stages. In the \emph{pairing stage}, each unpaired
agent seeks to form a pair with an unpaired agent of a distinct
color. For example, if the initial population is
$\multiset{\tokone, \tokone, \tokone, \tokone, \toktwo, \toktwo, \tokthree}$,
then we (non-deterministically) end up with either of these two
pairings:
\begin{center}
  \begin{tabular}{ll}
    \toprule
    \emph{paired agents} & \emph{agents left unpaired} \\
    \midrule
    $\multiset{
      \tokone\link\toktwo,
      \tokone\link\toktwo,
      \tokone\link\tokthree}$ &    
    $\multiset{\tokone}$ \\
    $\multiset{
      \tokone\link\toktwo,
      \toktwo\link\tokthree}$ &    
    $\multiset{\tokone, \tokone, \tokone}$ \\
    \bottomrule
  \end{tabular}
\end{center}

The agents left unpaired must all have the same color $d$,
\eg\ ``$\tokone$'' in the above example. Moreover, if the population
has a majority color, then it must be $d$.

Since the agents are anonymous and have a finite memory, they cannot
actually remember with whom they have been paired. Thus, once a
candidate color has been elected, \eg ``$\tokone$'' in the above
example, there is a \emph{grouping stage}. In the latter, unpaired
agents indicate to agents of their color that they are part of the
candidate majority group. This is done by internally storing the value
``$Y$'', which stands for ``Yes''. Similarly, unpaired agents indicate
to agents of a distinct color that they are part of the candidate
minority group using ``$N$''. Once this is over, the \emph{majority
  stage} takes place using the classical protocol from the
introduction.

Two issues arise from this idealized description. First, the protocol
is intended to work in stages, but they may occur concurrently due to
their distributed nature. For this reason, we add a correction
mechanism:
\begin{itemize}
\item If an unpaired agent of the candidate majority color $d$ finds a
  paired agent of color $d$ (resp.\ $d' \neq d$) with ``$N$''
  (resp.\ ``$Y$''), then it flips it to ``$Y$'' (resp.\ ``$N$'');

\item If an unpaired agent of the candidate majority color $d$ finds a
  paired agent of color $d$ (resp.\ $d' \neq d$) with either ``$n$''
  or ``$y$'', then it flips it to ``$\overline{Y}$''
  (resp.\ ``$\overline{N}$'').
\end{itemize}
The intermediate value $\overline{Y}$ (resp.\ $\overline{N}$) must be
reverted to $Y$ (resp.\ $N$) by finding an agent that has initially played role
$Y$ (resp.\ $N$) and is then reset to its original value.

The second issue has to do with the fact that, in even-size
populations, all agents may get paired. In that case, no unpaired
agent is left to group the agents. To address this, each agent carries
an ``even bit'' to indicate its belief on whether some unpaired agent
remains.

\subsection{States}

The set of states is defined as $Q \defeq \{\false, \true\}^3 \times
\{Y, N, \overline{Y}, \overline{N}, y, n\}$. To ease the reader's
understanding, we manipulate states with four ``macros''. Each macro has a
set of possible values; each state is a combination of values for the
different macros.
\begin{center}
  \begin{tabular}{ccc}
    \begin{tabular}{cccc}
      \toprule 
      \emph{name} & \emph{values for $q\in Q$} & \emph{value for $q \in I$} \\
      \midrule
      $\pair{q}$ & $\{\false, \true\}$ & $\false$ \\
      $\grp{q}$  & $\{\false, \true\}$ & $\true$  \\
      \bottomrule
    \end{tabular}
    & \hspace*{-14pt} &
    \begin{tabular}{cccc}
      \toprule 
      \emph{name} & \emph{values for $q\in Q$} & \emph{value for $q \in I$} \\
      \midrule
      $\even{q}$ & $\{\false, \true\}$ & $\false$ \\
      $\maj{q}$  & $\{Y, N, \overline{Y}, \overline{N}, y, n\}$ & $Y$ \\
      \bottomrule
    \end{tabular}
  \end{tabular}
\end{center}

The input mapping is defined by $\iota(x) \defeq q_I$, where $q_I$ is
the unique state of $I$. Informally, $\pair{q}$ indicates whether the
agent has been paired; $\grp{q}$ indicates whether the agent belongs
to the candidate majority group; $\maj{q}$ is the current value of the
majority computation; and $\even{q}$ is the even bit.

\subsection{Transitions and stages}

We describe the protocol by introducing rules corresponding to
each stage. Note that a rule is a structure on which transitions can be based; therefore, a single rule can yield multiple transitions of the same nature. For convenience, some lemmas are stated before they can
actually be proven, as they require the full set of transitions to be
defined first. Proofs in the appendix take into account the complete
list of transitions.

As the set of transitions for the protocol is lengthy, we present it
using a ``precondition-update'' notation where for any two agents in
state $p, q \in Q$, respectively with colors $d_1, d_2 \in \D$, a
single transition whose preconditions on $p, q$ and $d_1, d_2$ are met
is used. The result of such an interaction is the agent initially in
state $p$ updating its state to $p'$, where $p'$ is identical to $p$
except for the specified macros; and likewise for $q$. To help the readability, the
precondition and update of states $p$ and $q$ are on distinct lines in
the forthcoming tables.

\subsubsection{Pairing stage}

The first rule is used for the pairing stage whose main goal is to
match as many agents as possible with agents of a different color:

\begin{center}
  \begin{tabular}{cccc}
    \toprule
    \emph{rule}
    & \emph{state precondition}
    & \emph{color precondition}
    & \emph{state update} \\

    \midrule\phantomsection\label{maj1}
    \multirow{2}{*}{(1)}
    & $\lnot\pair{p}$
    & \multirow{2}{*}{$d_1 \neq d_2$}
    & $\pair{p'} \land \even{p'}$ \\

    & $\lnot\pair{q}$
    &
    & $\pair{q'} \land \even{q'}$ \\
    \bottomrule
  \end{tabular}
\end{center}
This rule gives rise to the following lemmas concerning the end of the
pairing stage and the nature of unpaired agents, if they exist. For
the remainder of the section, let us fix a fair execution $\c_0
\c_1\cdots$ where $\c_0$ is initial. Moreover, let $P \defeq \{q \in Q
: \pair{q}\}$ and $U \defeq Q \setminus P$.

\begin{restatable}{lemma}{lempair}\label{lem:pair}
  There exists $\tau \in \N$ such that
  $\card{U}{\c_\tau} = \card{U}{\c_{\tau+1}} = \cdots$. Furthermore,
  for every $i \geq \tau$, all unpaired agents of $\c_i$ share the
  same color, \ie\ the set $\{d \in \D : \c_i(d, U) > 0\}$ is either empty or a singleton.
\end{restatable}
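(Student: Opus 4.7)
The plan is to prove the two assertions in sequence, using monotonicity of $\card{U}{\cdot}$ for the stabilization part and a finiteness-plus-fairness argument for the color uniformity part. First, I would observe that rule~(1) is the only rule in the entire protocol that modifies the pairing macro $\pair{q}$, and it always flips it from $\false$ to $\true$ for the two interacting agents. Consequently $\card{U}{\c_i}$ is non-increasing (dropping by exactly $2$ each time rule~(1) fires) and, being bounded below by $0$, must stabilize: there exist $\tau_1 \in \N$ and $k \in \N$ with $\card{U}{\c_i} = k$ for all $i \geq \tau_1$.

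For the second assertion I would exploit finiteness. The population size $|\c_0|$ is invariant under $\reach{}$, and since colors are immutable every reachable configuration uses only colors from the finite set $\act{\c_0}$; hence the set $R$ of configurations reachable from $\c_0$ is finite. Let $L \subseteq R$ denote the set of configurations appearing infinitely often along the execution. Finiteness of $R$ forces $L$ to be non-empty and yields some $\tau \geq \tau_1$ such that $\c_j \in L$ for every $j \geq \tau$. A direct application of fairness shows that $L$ is closed under $\reach{*}$: if $\c \in L$ and $\c \reach{*} \c'$, then $\c'$ is reachable from each of the infinitely many indices $i$ at which $\c_i = \c$, so fairness forces $\c'$ to appear infinitely often, i.e.\ $\c' \in L$. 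In particular every $\c \in L$ satisfies $\card{U}{\c} = k$.

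The conclusion then follows by contradiction: if some $\c_j$ with $j \geq \tau$ had two unpaired agents of distinct colors, rule~(1) would be enabled at $\c_j$, producing a successor $\c'$ with $\card{U}{\c'} = k - 2$; closure of $L$ would put $\c'$ in $L$, contradicting $\card{U}{\c} = k$ for all $\c \in L$. Hence after time $\tau$ the unpaired agents of $\c_j$ must all share a single color.

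The main obstacle I anticipate is honestly justifying the monotonicity claim, since the complete transition list is not yet available at this point in the section. I would handle this by treating ``no rule resets $\pair{q}$'' as a syntactic invariant to be re-verified as each subsequent rule group (grouping, majority, correction) is introduced. Beyond that, the argument is generic: it relies only on fairness, finiteness of the reachable set, and the decrement effect of rule~(1).
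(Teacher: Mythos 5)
Your proof is correct and follows essentially the same route as the paper: monotonicity of $\card{U}{\cdot}$ under the rules (only rule~(1) touches the pairing bit, and only from unpaired to paired) gives stabilization, and fairness rules out rule~(1) remaining enabled forever, forcing all remaining unpaired agents to share a color. The only cosmetic difference is that you phrase the fairness step via the finite set of infinitely-recurring configurations and its closure under reachability, whereas the paper applies the fairness definition directly to the strictly-smaller successor configurations; both hinge on the same finiteness of the reachable set.
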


Let $\alpha$ denote the minimal threshold $\tau$ given by
\Cref{lem:pair}, which is informally the ``end of the pairing
stage''.

\begin{restatable}{lemma}{lemmajcolor}\label{lem:majcolor}
  Let $i \in \N$. If $\majphi(\c_0)$
  and $d$ is the majority color, then $\c_i(d, U) > 0$.
\end{restatable}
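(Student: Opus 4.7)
The plan is to exhibit a numerical invariant that forces $\c_i(d, U)$ to remain positive for every $i$. Fix the majority color $d \in \D$ and define, for each configuration $\c$,
\[
\Delta(\c) \defeq \c(d, U) - \sum_{d' \neq d} \c(d', U).
\]
Since every agent in $\c_0$ is initially unpaired (because the unique initial state $q_I$ satisfies $\lnot\pair{q_I}$) and $\majphi(\c_0)$ holds, we have $\Delta(\c_0) = \mat{M}(d, x) - \sum_{d' \neq d} \mat{M}(d', x) > 0$. It thus suffices to show, by induction on $i$, that $\Delta$ is non-decreasing along $\reach{}$; this yields $\c_i(d, U) > \sum_{d' \neq d} \c_i(d', U) \geq 0$, which is the desired conclusion.

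To establish the non-decrease, I will inspect each rule of the protocol and examine its effect on $\Delta$. The key observation is that only rule~(1) modifies the $\pair{}$ macro; the grouping, majority, and correction rules introduced later touch only the $\grp{}$ and $\maj{}$ macros and therefore leave every count $\c(d', U)$ unchanged. It then remains to analyse rule~(1). If this rule pairs an agent of color $d$ with one of color $d' \neq d$, then $\c(d, U)$ and $\sum_{d'' \neq d} \c(d'', U)$ each drop by exactly $1$ and $\Delta$ is preserved. If it pairs two agents whose colors are both distinct from $d$, then $\c(d, U)$ is unaffected while $\sum_{d'' \neq d} \c(d'', U)$ decreases by $2$, so $\Delta$ strictly increases. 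In either case the invariant is maintained.

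The only delicate aspect of the plan is justifying the sweeping claim that no rule other than~(1) ever alters the $\pair{}$ macro. This must be revisited once the full transition set is in place, but it is a deliberate design choice of the protocol (pairing is a commitment that the protocol never undoes) and can be verified mechanically by inspecting the upcoming tables.
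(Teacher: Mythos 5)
Your proposal is correct and follows essentially the same argument as the paper: both establish that $\c_0(d,U) - \sum_{d'\neq d}\c_0(d',U) > 0$ holds initially, observe that only rule~(1) modifies the $\mathrm{pair}$ macro, and check that each application of rule~(1) preserves (or increases) this difference because the color precondition $d_1\neq d_2$ forces at most one of the two newly paired agents to have color $d$. Packaging the argument as a non-decreasing invariant $\Delta$ is just a cosmetic reformulation of the paper's inductive inequality.
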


\subsubsection{Grouping stage}

The next set of transitions seeks to correctly set each agent's group,
representing its status in the computation of the majority. An agent
is either part of the candidate majority group ($\true$), or part of
the candidate minority group ($\false$). Note that this group (and its
related majority computing value) are irrelevant if there are no
unpaired agents in $\c_\alpha$; this special case is handled using the
even bit, which is ignored for now.

\begin{center}
  \begin{tabular}{cccc}
    \toprule
    \emph{rule}
    & \emph{state precondition}
    & \emph{color precondition}
    & \emph{state update} \\

    \midrule\phantomsection\label{maj2}    
    \multirow{2}{*}{(2)}
    & \makebox[130pt][l]{$\lnot\pair{p}$}
    & \multirow{2}{*}{$d_1 \neq d_2$}
    & \none \\
    & \makebox[130pt][l]{$\phantom{\lnot}\pair{q} \land \phantom{\lnot}\grp{q} \land \maj{q} = Y$}
    &
    & \makebox[100pt][r]{$\lnot\grp{q'} \land \maj{q'} = N$} \\

    \midrule\phantomsection\label{maj3}    
    \multirow{2}{*}{(3)}
    & \makebox[130pt][l]{$\lnot\pair{p}$}
    & \multirow{2}{*}{$d_1 = d_2$}
    & \none \\
    & \makebox[130pt][l]{$\phantom{\lnot}\pair{q} \land \lnot\grp{q} \land \maj{q} = N$}
    &
    & \makebox[100pt][r]{$\phantom{\lnot}\grp{q'} \land \maj{q'} = Y$} \\

    \midrule\phantomsection\label{maj4}    
    \multirow{2}{*}{(4)}
    & \makebox[130pt][l]{$\lnot\pair{p}$}
    & \multirow{2}{*}{$d_1 \neq d_2$}
    & \none \\
    & \makebox[130pt][l]{$\phantom{\lnot}\pair{q} \land \phantom{\lnot}\grp{q} \land \maj{q} \in \{y, n\}$}
    &
    & \makebox[100pt][r]{$\maj{q'} = \overline{N}$} \\

    \midrule\phantomsection\label{maj5}    
    \multirow{2}{*}{(5)}
    & \makebox[130pt][l]{$\lnot\pair{p}$}
    & \multirow{2}{*}{$d_1 = d_2$}
    & \none \\
    & \makebox[130pt][l]{$\phantom{\lnot}\pair{q} \land \lnot\grp{q} \land \maj{q} \in \{y, n\}$}
    &
    & \makebox[100pt][r]{$\maj{q'} = \overline{Y}$} \\
    \bottomrule
  \end{tabular}
\end{center}

The forthcoming rules below are part of a two-rule combination whose
aim is to rectify an error in grouping assignments. It allows agents
who engaged in the computation within the candidate minority group
(resp.\ majority group) who encountered a currently valid majority
candidate of their color (resp.\ a different color) to reset their
value to $Y$ (resp.\ $N$) and their group to $\true$ (resp.\ $\false$)
by finding another agent, also engaged, to do the same. This, along
with the rules described in the next subsection, ensures that the
invariant below holds.

Let $Q_a \defeq \{q \in Q : \maj{q} = a\}$, $Q_M \defeq \{q \in Q :
\grp{q}\}$ and $Q_m \defeq Q \setminus Q_M$.
  
\begin{restatable}{lemma}{leminvariant}\label{lem:invariant}
  For every $i \in \N$, it is the
  case that $\card{Q_Y}{\c_i} - \card{Q_N}{\c_i} = \card{Q_M}{\c_i} -
  \card{Q_m}{\c_i}$.
\end{restatable}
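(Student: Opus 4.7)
The plan is to prove the statement by a straightforward induction on the length of the execution, showing that the linear functional
\[
  \Phi(\c) \defeq \card{Q_Y}{\c} - \card{Q_N}{\c} - \card{Q_M}{\c} + \card{Q_m}{\c}
\]
vanishes on every reachable configuration. Linearity allows me to decompose $\Phi(\c) = \sum_{d \in \D,\, q \in Q} v(q)\, \c(d, q)$ where $v(q)$ is the per-state contribution
\(
  v(q) \defeq [\maj{q} = Y] - [\maj{q} = N] - [\grp{q}] + [\lnot \grp{q}],
\)
with $[P]$ denoting $1$ if $P$ is true and $0$ otherwise. Thus it suffices to show that $\Phi$ is zero initially and preserved by every transition.

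For the base case, every agent of $\c_0$ sits in the unique initial state $q_I$, which satisfies $\maj{q_I} = Y$ and $\grp{q_I} = \true$; hence $v(q_I) = 1 - 0 - 1 + 0 = 0$ and $\Phi(\c_0) = 0$.

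For the inductive step, a transition alters exactly two agents, say sending their states from $(p, q)$ to $(p', q')$, and preserves $\Phi$ exactly when $v(p) + v(q) = v(p') + v(q')$. I would verify this rule by rule. Rule (1) touches only $\pair{}$ and $\even{}$, so $v$ is unaffected on both sides. Rules (2) and (3) move one agent's $(\grp, \maj)$ between $(\true, Y)$ and $(\false, N)$: the shift $\mp 2$ in $[\maj = Y] - [\maj = N]$ is exactly compensated by the shift $\pm 2$ in $-[\grp] + [\lnot \grp]$, so $v$ on that agent stays at $0$. Rules (4) and (5) shuffle $\maj$ within $\{y, n, \overline{Y}, \overline{N}\}$ without touching $\grp$, so neither summand of $v$ changes. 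The remaining rules presented later in the section are to be checked in exactly the same way.

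The genuine obstacle is not the induction but the design discipline it imposes on the full rule set: every rule must respect the local identity $v(p) + v(q) = v(p') + v(q')$. In particular, the forthcoming ``two-rule combination'' that resets $\overline{Y} \to Y$ and $\overline{N} \to N$ must be engineered so that each change of $\maj$ between $\{Y, N\}$ and the intermediate letters $\{y, n, \overline{Y}, \overline{N}\}$ is simultaneously matched, either on the same agent or on its partner, by a compensating $\grp$-flip. This is really a constraint on how the protocol is constructed rather than a deep combinatorial argument; once it is verified for each rule in the appendix, the invariant follows immediately from the induction above.
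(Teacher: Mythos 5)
Your proof is correct and takes essentially the same route as the paper's: an induction over the execution with a rule-by-rule check that the quantity is conserved, which the paper organizes as a table of per-transition deltas $\Delta_Y, \Delta_N, \Delta_M, \Delta_m$ and you organize as a per-state potential $v(q)$ summed over agents. The checks you defer for the remaining rules do go through (rules (6)--(8) shift $v$ by $\pm 1$ or $\pm 2$ on one agent and compensate on the other; rules (9)--(14) leave $v$ unchanged on both agents), so the only difference from the paper is that it carries out this mechanical verification explicitly.
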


\begin{center}
  \begin{tabular}{cccc}
    \toprule
    \emph{rule}
    & \emph{state precondition}
    & \emph{color precondition}
    & \emph{state update} \\

    \midrule\phantomsection\label{maj6}
    \multirow{2}{*}{(6)}
    & \makebox[100pt][l]{$\phantom{\lnot}\grp{p} \land \maj{p} = \overline{N}$}
    & \multirow{2}{*}{\none}
    & \makebox[100pt][r]{$\lnot\grp{p'} \land \maj{p'} = N$} \\
    & \makebox[100pt][l]{$\lnot\grp{q} \land \maj{q} \in \{y, n\}$}
    &
    & \makebox[100pt][r]{$\maj{q'} = N$} \\

    \midrule\phantomsection\label{maj7}
    \multirow{2}{*}{(7)}
    & \makebox[100pt][l]{$\lnot\grp{p} \land \maj{p} = \overline{Y}$}
    & \multirow{2}{*}{\none}
    & \makebox[100pt][r]{$\phantom{\lnot}\grp{p'} \land \maj{p'} = Y$} \\
    & \makebox[100pt][l]{$\phantom{\lnot}\grp{q} \land \maj{q} \in \{y, n\}$}
    &
    & \makebox[100pt][r]{$\maj{q'} = Y$} \\

    \midrule\phantomsection\label{maj8}
    \multirow{2}{*}{(8)}
    & \makebox[100pt][l]{$\phantom{\lnot}\grp{p} \land \maj{p} = \overline{N}$}
    & \multirow{2}{*}{\none}
    & \makebox[100pt][r]{$\lnot\grp{p'} \land \maj{p'} = n$} \\
    & \makebox[100pt][l]{$\lnot\grp{q} \land \maj{q} = \overline{Y}$}
    &
    & \makebox[100pt][r]{$\phantom{\lnot}\grp{q'} \land \maj{q'} = n$} \\
    \bottomrule
  \end{tabular}
\end{center}

We give the following example to help illustrate the necessity of the intermediate states $\overline{Y}, \overline{N}$ in the context of the suggested protocol.

\begin{example}\label{ex:intermediate}
  Let us first consider a possible execution from the initial population $\multiset{\tokone, \tokone, \toktwo, \toktwo, \tokthree}$, for which there is no majority datum. Observe that since the number of agents is odd, in any execution, there will be a datum with an unpaired agent after the pairing stage. Assume, for the sake of our demonstration, that this datum is blue ($\tokthree$). Entering the grouping stage, this blue agent will eventually let the other agents know that they are not part of the majority candidate group and, at some point, rule~\majrule{11} will occur, leading to all agents permanently with $\mathrm{maj}(q) \in \{N, n\}$. This is summarized in these three snapshots (where the even bit is omitted for the sake of clarity):

\newcommand{\cmark}{\text{\scriptsize\faCheck}}
\newcommand{\xmark}{\text{\scriptsize\faTimes}}
\begin{center}
  \begin{tabular}{cccc}
    \toprule

    input & $\mathrm{pair}$ & 
    $\mathrm{grp}$ & $\mathrm{maj}$ \\

    \midrule

    $\tokone$ & \xmark & \cmark & $Y$ \\
    $\tokone$ & \xmark & \cmark & $Y$ \\
    $\toktwo$ & \xmark & \cmark & $Y$ \\
    $\toktwo$ & \xmark & \cmark & $Y$ \\
    $\tokthree$ & \xmark & \cmark & $Y$ \\

    \bottomrule
  \end{tabular}
  $\xrightarrow{*}$
    \begin{tabular}{cccc}
      \toprule

      input & $\mathrm{pair}$ & 
      $\mathrm{grp}$ & $\mathrm{maj}$ \\

      \midrule

      $\tokone$ & \cmark & \xmark & $N$ \\
      $\tokone$ & \cmark & \xmark & $N$ \\
      $\toktwo$ & \cmark & \xmark & $N$ \\
      $\toktwo$ & \cmark & \xmark & $N$ \\
      $\tokthree$ & \xmark & \cmark & $Y$ \\

      \bottomrule
    \end{tabular}
    $\xrightarrow{}$
    \begin{tabular}{cccc}
      \toprule

      input & $\mathrm{pair}$ & 
      $\mathrm{grp}$ & $\mathrm{maj}$ \\

      \midrule

      $\tokone$ & \cmark & \xmark & $N$ \\
      $\tokone$ & \cmark & \xmark & $n$ \\
      $\toktwo$ & \cmark & \xmark & $N$ \\
      $\toktwo$ & \cmark & \xmark & $N$ \\
      $\tokthree$ & \xmark & \cmark & $n$ \\

      \bottomrule
    \end{tabular}
\end{center}

  Now, consider a population where a majority datum does indeed exist: $\multiset{\tokone, \tokone, \tokone, \tokone, \toktwo, \toktwo, \tokthree}$. Note that this population is strictly greater than the previous population. Therefore, we can promptly obtain a configuration similar to the one described above, where two more agents of datum red ($\tokone$) have yet to participate in the computation. Since the computation must output $\true$, the consensus on $\{N, n\}$ initiated by the blue agent has to be reverted.

  In this case, after the final pairing is done via an interaction between the blue agent ($\tokthree$) and one of the newly introduced red agents ($\tokone$), the error handling first works through rule~\majrule{4} or~\majrule{5}: the unpaired red agent ($\tokone$) notifies the blue agent ($\tokthree$) that its group is incorrect by setting its computation value to $\overline{N}$ and similarly, it notifies all red agents ($\tokone$) who had previously participated in the (now incorrect) majority stage to switch their computation value to $\overline{Y}$. This is summarized in these three snapshots:
  \begin{center}

    $\cdots$
      \begin{tabular}{cccc}
        \toprule

        input & $\mathrm{pair}$ & 
        $\mathrm{grp}$ & $\mathrm{maj}$ \\

        \midrule

        $\tokone$ & \cmark & \xmark & $N$ \\
        $\tokone$ & \cmark & \xmark & $n$ \\
        $\toktwo$ & \cmark & \xmark & $N$ \\
        $\toktwo$ & \cmark & \xmark & $N$ \\
        $\tokthree$ & \cmark & \cmark & $n$ \\
        \midrule
        $\tokone$ & \cmark & \cmark & $Y$ \\
        $\tokone$ & \xmark & \cmark & $Y$ \\

        \bottomrule
      \end{tabular}
      $\xrightarrow{*}$
      \begin{tabular}{cccc}
        \toprule

        input & $\mathrm{pair}$ & 
        $\mathrm{grp}$ & $\mathrm{maj}$ \\

        \midrule

        $\tokone$ & \cmark & \cmark & $Y$ \\
        $\tokone$ & \cmark & \xmark & $\overline{Y}$ \\
        $\toktwo$ & \cmark & \xmark & $N$ \\
        $\toktwo$ & \cmark & \xmark & $N$ \\
        $\tokthree$ & \cmark & \cmark & $\overline{N}$ \\
        \midrule
        $\tokone$ & \cmark & \cmark & $Y$ \\
        $\tokone$ & \xmark & \cmark & $Y$ \\

        \bottomrule
      \end{tabular}
      $\xrightarrow{}$
      \begin{tabular}{cccc}
        \toprule

        input & $\mathrm{pair}$ & 
        $\mathrm{grp}$ & $\mathrm{maj}$ \\

        \midrule

        $\tokone$ & \cmark & \cmark & $Y$ \\
        $\tokone$ & \cmark & \cmark & $n$ \\
        $\toktwo$ & \cmark & \xmark & $N$ \\
        $\toktwo$ & \cmark & \xmark & $N$ \\
        $\tokthree$ & \cmark & \xmark & $n$ \\
        \midrule
        $\tokone$ & \cmark & \cmark & $Y$ \\
        $\tokone$ & \xmark & \cmark & $Y$ \\

        \bottomrule
      \end{tabular}
  \end{center}
  This inevitably leads each incorrectly grouped agent to rectify its group bit as well as its computation value, accordingly, through rules~\majrule{6},~\majrule{7} or~\majrule{8}. We then have a configuration for which the grouping stage is over and where either the majority stage is not yet initiated, or it has been correctly initiated with the right majority candidate. \qed
\end{example}

The following lemmas show that the grouping stage eventually ends if
there are unpaired agents in $\c_\alpha$. Moreover, they show that the
majority candidate color $d$ eventually propagates the majority group
to agents of color $d$, and the minority group to agents of color $d'
\neq d$.

\begin{restatable}{lemma}{lempartner}\label{lem:partner}
  Let $E$ be the set of states engaged
  in the majority computation, \ie\ $E \defeq \{q\in Q : \maj{q} \in
  \{y, n, \overline{Y}, \overline{N}\}\}$. Let $E_M \defeq E \cap Q_M$
  and $E_m \defeq E \cap Q_m$. For every $i \in \N$, the following
  holds: $\card{E_M}{\c_i} = \card{E_m}{\c_i}$.
\end{restatable}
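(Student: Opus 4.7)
The plan is to prove $\card{E_M}{\c_i} = \card{E_m}{\c_i}$ by induction on $i$, verifying that every rule preserves this equality. For the base case, $\c_0$ is initial so $\act{\c_0} \subseteq I = \{q_I\}$, and by the macro table $\maj{q_I} = Y$. Since $Y \notin \{y, n, \overline{Y}, \overline{N}\}$, no active state of $\c_0$ lies in $E$, so $\card{E_M}{\c_0} = \card{E_m}{\c_0} = 0$.

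For the inductive step, I would analyze the rule driving $\c_i \reach{} \c_{i+1}$ by checking, for each participating agent, whether it lies in $E_M$, in $E_m$, or in neither, both before and after. Rule~(1) updates only the pairing and evenness components, so the grouping and majority-computation values are unchanged and neither count changes. In rules~(2) and~(3), the updated agent moves between a state with $\maj{q} = Y$ and a state with $\maj{q'} = N$ (or vice versa), both outside $\{y, n, \overline{Y}, \overline{N}\}$; hence it never lies in $E$ and the counts are preserved. In rules~(4) and~(5), the updated agent stays inside $E_M$ or inside $E_m$ respectively, because its grouping component is left untouched and both $\{y, n\}$ and $\{\overline{Y}, \overline{N}\}$ lie inside $\{y, n, \overline{Y}, \overline{N}\}$.

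The substantive cases are rules~(6), (7) and~(8). In rule~(6), agent $p$ leaves $E_M$ as its $\maj$-value switches from $\overline{N}$ to $N$, and agent $q$ leaves $E_m$ as its $\maj$-value switches from $\{y, n\}$ to $N$; both counts drop by exactly one. Rule~(7) is symmetric. In rule~(8), agent $p$ moves from $E_M$ to $E_m$ while its $\maj$-value changes from $\overline{N}$ to $n$, and agent $q$ moves from $E_m$ to $E_M$ while its $\maj$-value changes from $\overline{Y}$ to $n$; the two counts are therefore unchanged. Combining all cases gives $\card{E_M}{\c_{i+1}} = \card{E_m}{\c_{i+1}}$.

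The main obstacle is that the lemma is stated before the majority-stage rules (the classical strong-to-weak, propagation, and tiebreaker transitions acting on $\{Y, N, y, n\}$) are introduced. The appendix proof must therefore also handle those rules; since they act only among agents already inside $E$ without modifying the grouping macro, each is expected to preserve the balance, and the full argument reduces to a mechanical rule-by-rule verification once all transitions are spelled out.
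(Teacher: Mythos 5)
Your overall strategy is the same as the paper's: induction on $i$ with a rule-by-rule case analysis, and your treatment of rules~\majrule{1} through~\majrule{8} agrees with the paper's table (in particular you correctly note that rule~\majrule{8} swaps one agent from $E_M$ to $E_m$ and one from $E_m$ to $E_M$, leaving both counts unchanged).

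However, the way you defer the majority-stage rules contains a genuine gap. You justify skipping them on the grounds that they ``act only among agents already inside $E$ without modifying the grouping macro.'' This is true of rules~\majrule{12}, \majrule{13} and~\majrule{14} (which only move agents between $y$ and $n$), but it is false for rule~\majrule{11}, whose preconditions are $\maj{p} = Y$ and $\maj{q} = N$: since $Y, N \notin \{y, n, \overline{Y}, \overline{N}\}$, both participating agents are \emph{outside} $E$ before the interaction and both enter $E$ (with value $n$) after it. Rule~\majrule{11} is in fact the only rule that increases $\card{E_M}{\cdot}$ and $\card{E_m}{\cdot}$, so it is the crux of the inductive step, not a routine case. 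Moreover, showing that it increases each count by exactly one --- rather than, say, adding both agents to $E_M$ --- requires an auxiliary invariant that your proposal never identifies: every reachable state $q$ with $\maj{q} = Y$ satisfies $\grp{q}$, and every reachable state with $\maj{q} = N$ satisfies $\lnot\grp{q}$. This holds because $q_I$ satisfies $\maj{q_I} = Y \land \grp{q_I}$ and no rule ever produces a state violating the correspondence, and the paper's proof states this observation explicitly before concluding. Without it, the case of rule~\majrule{11} does not go through.
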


\begin{restatable}{lemma}{lemgroup}\label{lem:group}
  Let $d \in \D$. If $\c_\alpha(d, U) > 0$, then there exists some $\tau \geq \alpha$
  such that, for all $i \geq \tau$, $d' \in \D$ and $q \in
  \act{\c_i(d')}$, the following holds: $\grp{q} = (d' = d)$.
\end{restatable}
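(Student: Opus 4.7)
My plan is a closure-plus-reachability argument driven by fairness. Call a configuration $\c$ \emph{good} if (i) for every $d' \in \D$ and every $q \in \act{\c(d')}$, $\grp{q} = (d'=d)$, and (ii) no such $q$ has $\maj{q} \in \{\overline{Y}, \overline{N}\}$. Since unpaired agents are all of color $d$ after $\alpha$ (by Lemma~\ref{lem:pair}) and retain $\grp{q} = \true$ throughout---rules~(2)--(5) do not update the unpaired partner, and rules~(6)--(8) have $\maj$-preconditions that an unpaired agent never meets (its $\maj$ stays in $\{Y,y,N,n\}$ since only paired agents become overlined)---condition~(i) is automatic on unpaired agents and, for paired agents, is precisely the lemma's conclusion. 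It thus suffices to show that the execution eventually enters and remains in a good configuration.

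For closure, I would check rule by rule. Rule~(1), requiring $d_1 \neq d_2$ between two unpaired agents, cannot fire after $\alpha$ since all unpaired agents share color $d$. Each of rules~(2)--(5) has a state precondition on the paired partner combined with a color (in)equality that, together with the unpaired partner having color $d$, forces the paired partner to be incorrectly grouped; hence these rules are disabled in a good configuration. Rules~(6)--(8) require an overlined agent, which (ii) forbids. The remaining majority-stage rules (analogous to the classical Boyer--Moore updates from the introduction) only shuffle $\maj{q}$ within $\{Y,y,N,n\}$ and leave $\grp{q}$ intact, so they preserve goodness.

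For reachability, from any reachable configuration I would construct a finite correction sequence to a good configuration, tracked by a potential $\Phi$ that assigns, for example, weight $5$ to a correctly grouped overlined agent, $4$ to an incorrectly grouped non-overlined agent, and $3$ to an incorrectly grouped overlined agent. A case analysis of rules~(2)--(8) shows that each such transition strictly decreases $\Phi$; Lemma~\ref{lem:partner} supplies the partner needed whenever an overlined agent exists (so rules~(6)--(8) can fire), and the stable color-$d$ unpaired agent enables rules~(2)--(5) against any non-overlined incorrect agent. Finally, the set of configurations reachable from $\c_\alpha$ is finite, so the set $G$ of reachable good configurations is also finite; by pigeonhole, some single $g \in G$ is reachable from infinitely many $\c_i$, so fairness forces $\c_i = g$ for some $i$, after which closure traps the execution in good configurations. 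The step I expect to be the main obstacle is the $\Phi$ decrease for rules~(6)--(8): these can convert a correctly grouped overlined agent into an incorrectly grouped non-overlined one, and the weight scheme must be tuned so that the lost overlined weight dominates this potential gain.
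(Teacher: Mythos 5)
Your argument is correct, but it is organized quite differently from the paper's. The paper proves four separate ``eventually always zero'' claims, each by a monotonicity-plus-fairness argument: first that $\c_i(d, Q_{\overline{N}})$ and $\sum_{d'\neq d}\c_i(d', Q_{\overline{Y}})$ vanish permanently, and only then that $\c_i(d, Q_m)$ and $\sum_{d'\neq d}\c_i(d', Q_M)$ do --- the ordering is essential there, because the monotonicity of the last two counts holds only after the relevant overlined states are gone (otherwise rules~\majrule{6}--\majrule{8} can still flip groups in the wrong direction). Your single Lyapunov-style argument absorbs that ordering into the weight scheme: the interaction you flag as the main obstacle (a correctly grouped overlined agent, weight $5$, becoming an incorrectly grouped non-overlined one, weight $4$, under rules~\majrule{6}--\majrule{8}) is precisely the phenomenon that forces the paper to sequence its claims, and the weights $5/4/3/0$ you propose do make every transition arising from rules~\majrule{2}--\majrule{8} strictly decrease $\Phi$, so the tuning you defer goes through. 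Both proofs rest on the same supporting facts --- \Cref{lem:pair} to guarantee a permanently unpaired agent of color $d$ and none of any other color, \Cref{lem:partner} to supply the $E_m$/$E_M$ partner enabling \majrule{6}--\majrule{8} whenever an overlined agent exists, and finiteness of the reachable set to invoke fairness via pigeonhole. Your closure check is sound and in fact yields a slightly stronger limit property (no overlined agents at all), which the paper only establishes later inside the proof of \Cref{lem:convtrue}. When writing this up, do make explicit the enabledness half of the reachability step: a non-overlined misgrouped agent is necessarily paired (unpaired agents never change group) and has $\maj{q}$ in the range required by one of \majrule{2}--\majrule{5} (using that $\maj{q}=Y$ forces $\grp{q}$ and $\maj{q}=N$ forces $\lnot\grp{q}$), so the color-$d$ unpaired agent can always act on it.
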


\subsubsection{Majority stage}

The last set of transitions emulates a standard population protocol
for the majority predicate. Populations of even size without a majority give rise to a case requiring careful handling. Indeed, for such a population the pairing stage may leave no unmatched agent. Therefore, we give the following rules to fix this specific issue.

\begin{center}
  \begin{tabular}{cccc}
    \toprule
    \emph{rule}
    & \emph{state precondition}
    & \emph{color precondition}
    & \emph{state update}\\
    \midrule\phantomsection\label{maj9}

    \multirow{2}{*}{(9)}
    & $\lnot\pair{p}\phantom{\land \lnot\even{q}}$
    & \multirow{2}{*}{\none}
    & \none \\
    & $\phantom{\lnot\pair{p} \land \lnot}\even{q}$
    &
    & $\lnot\even{q'}$ \\

    \midrule\phantomsection\label{maj10}
    \multirow{2}{*}{(10)}
    & $\phantom{\lnot}\pair{p} \land \phantom{\lnot}\even{p}$
    & \multirow{2}{*}{\none}
    & \none \\
    & $\phantom{\lnot}\pair{q} \land \lnot\even{q}$
    &
    & $\phantom{\lnot}\even{q'}$ \\
    \bottomrule
  \end{tabular}
\end{center}

\begin{restatable}{lemma}{lemeven}\label{lem:even}
  There exists $\tau \geq \alpha$ such that
  for every $i \geq \tau$ and $q \in \act{\c_i}$, it is the case that
  $\even{q}$ holds iff $\card{U}{\c_i} = 0$.
\end{restatable}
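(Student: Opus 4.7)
The plan is to split on the value of $\card{U}{\c_\alpha}$, which by \Cref{lem:pair} equals $\card{U}{\c_i}$ for every $i \geq \alpha$. Two preliminary observations drive the argument. First, rule~\majrule{1} can no longer fire from step $\alpha$ onwards: \Cref{lem:pair} ensures that all unpaired agents share a common color, violating the color precondition $d_1 \neq d_2$. Second, rule~\majrule{1} is the only rule that produces an unpaired agent in a state satisfying $\even{q'}$, so every unpaired agent $q$ satisfies $\lnot\even{q}$ throughout the execution. Hence, past step $\alpha$, the $\even$ macro evolves only via rules~\majrule{9} and~\majrule{10}, and $\pair$ is frozen. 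We also use that the set of configurations reachable from $\c_0$ is finite, because the multiset of colors of the population is invariant along the execution.

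\emph{Case $\card{U}{\c_\alpha} > 0$.} Let $\mathcal{S}$ be the set of configurations in which every active state $q$ satisfies $\lnot\even{q}$. In $\mathcal{S}$, both rules~\majrule{9} and~\majrule{10} are disabled because neither side can match an agent with $\even{q}$; together with the disabled rule~\majrule{1}, this shows that $\mathcal{S}$ is absorbing past step $\alpha$. Moreover, from any $\c_i$ with $i \geq \alpha$, firing rule~\majrule{9} once per agent currently in a state satisfying $\pair{q} \land \even{q}$ — permitted because an unpaired agent exists — reaches some $\c \in \mathcal{S}$.

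\emph{Case $\card{U}{\c_\alpha} = 0$.} Initially every agent is unpaired, and rule~\majrule{1} is the only rule that creates paired agents, so at least one firing of rule~\majrule{1} occurred before step $\alpha$, placing both participants in states satisfying $\pair{q'} \land \even{q'}$. After step $\alpha$, rule~\majrule{9} is disabled (no unpaired agent) and rule~\majrule{1} is disabled, so the count of agents with $\even{q}$ is non-decreasing and is at least $2$. Let $\mathcal{T}$ be the set of configurations in which every active $q$ satisfies $\even{q}$. Rules~\majrule{1} and~\majrule{9} are disabled in $\mathcal{T}$, and rule~\majrule{10} preserves $\mathcal{T}$, so $\mathcal{T}$ is absorbing. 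From any $\c_i$ with $i \geq \alpha$, iteratively firing rule~\majrule{10} using any $\even$ agent as source propagates $\even$ to every paired (hence every) agent, reaching some $\c \in \mathcal{T}$.

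In both cases, a pigeonhole argument over the finitely many reachable configurations supplies a single $\c^* \in \mathcal{S}$ (resp.\ $\c^* \in \mathcal{T}$) that is reachable from $\c_i$ for infinitely many $i \geq \alpha$. Fairness then gives $\c_i = \c^*$ for infinitely many $i$, and the absorbing property of the target set forces $\c_j$ to lie in $\mathcal{S}$ (resp.\ $\mathcal{T}$) for every $j$ beyond the first such occurrence; the desired $\tau$ is that occurrence. The main obstacle is the opposing interplay between rules~\majrule{9} and~\majrule{10} in the first case; the resolution is precisely that $\mathcal{S}$ is absorbing, so one only needs fairness to guarantee that it is entered once, rather than that any specific transition fires infinitely often.
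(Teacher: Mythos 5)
Your overall strategy matches the paper's: split on $\card{U}{\c_\alpha}$, observe that past $\alpha$ only rules~\majrule{9} and~\majrule{10} can touch the even bit, and use fairness to drive the population into the appropriate absorbing set (your explicit pigeonhole over the finitely many reachable configurations is a slightly more careful rendering of the paper's ``repeatedly invoking fairness''). The first case is fine.

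There is, however, a genuine gap in the case $\card{U}{\c_\alpha} = 0$. You justify ``the count of agents with $\even{q}$ \ldots is at least $2$'' by noting that \emph{some} firing of rule~\majrule{1} occurred before step $\alpha$. That is not enough: at every step $i < \alpha$ we have $\card{U}{\c_i} > 0$ (since $\card{U}$ is non-increasing and $\alpha$ is minimal), so rule~\majrule{9} may still be enabled before $\alpha$ and could clear the even bits set by that early firing. Nothing you state rules out $\card{Q_e}{\c_\alpha} = 0$, and if that held, rules~\majrule{1}, \majrule{9} and~\majrule{10} would all be permanently disabled and no agent could ever acquire the even bit --- so the lemma's conclusion would fail. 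The claim is therefore load-bearing and your justification does not establish it. The repair is the one the paper uses: by minimality of $\alpha$ and the fact that only rule~\majrule{1} changes the pairing status, the transition $\c_{\alpha-1} \reach{} \c_\alpha$ must itself be a rule-\majrule{1} firing, so its two participants carry the even bit \emph{in} $\c_\alpha$, giving $\card{Q_e}{\c_\alpha} \geq 2$; from $\alpha$ onward rule~\majrule{9} is disabled, so this count can no longer drop. With that correction the rest of your argument goes through. (A minor wording issue: rule~\majrule{1} produces \emph{paired} agents with the even bit set, not unpaired ones; your conclusion that unpaired agents never satisfy $\even{q}$ is nevertheless correct, since no rule sets the even bit of an agent that remains unpaired.)
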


For other populations, a unique candidate color for the majority exists following the pairing stage. For the predicate to be $\true$, this candidate must have more agents
than all of the other colors combined. This is validated (or invalidated) through the
following rules.

\begin{center}
  \begin{tabular}{ccc}
    \begin{tabular}{cccc}
      \toprule
      \emph{rule}
      & \emph{state pre.}
      & \hspace*{-5pt}\emph{col.\ pre.}\hspace*{-5pt}
      & \emph{state update} \\
      
      \midrule\phantomsection\label{maj11}
      \multirow{2}{*}{(11)}
      & \makebox[50pt][l]{$\maj{p} = Y$}
      & \multirow{2}{*}{\none}
      & \makebox[50pt][l]{$\maj{p'} = n$} \\
      & \makebox[50pt][l]{$\maj{q} = N$}
      &
      & \makebox[50pt][l]{$\maj{q'} = n$} \\

      \midrule\phantomsection\label{maj12}
      \multirow{2}{*}{(12)}
      & \makebox[50pt][l]{$\maj{p} = Y$}
      & \multirow{2}{*}{\none}
      & \none \\
      & \makebox[50pt][l]{$\maj{q} = n$}
      &
      & \makebox[50pt][l]{$\maj{q'} = y$} \\

      \bottomrule
    \end{tabular}
    & \hspace*{-12pt} &
    \begin{tabular}{cccc}
      \toprule
      \emph{rule}
      & \emph{state pre.}
      & \hspace*{-5pt}\emph{col.\ pre.}\hspace*{-5pt}
      & \emph{state update} \\

      \midrule\phantomsection\label{maj13}
      \multirow{2}{*}{(13)}
      & \makebox[50pt][l]{$\maj{p} = N$}
      & \multirow{2}{*}{\none}
      & \none \\
      & \makebox[50pt][l]{$\maj{q} = y$}
      &
      & \makebox[50pt][l]{$\maj{q'} = n$} \\
      
      \midrule\phantomsection\label{maj14}
      \multirow{2}{*}{(14)}
      & \makebox[50pt][l]{$\maj{p} = n$}
      & \multirow{2}{*}{\none}
      & \none \\
      & \makebox[50pt][l]{$\maj{q} = y$}
      &
      & \makebox[50pt][l]{$\maj{q'} = n$} \\
    \bottomrule
    \end{tabular}
  \end{tabular}
\end{center}

\begin{restatable}{lemma}{lemconvtrue}\label{lem:convtrue}
  If $\majphi(\c_0)$ holds, then there
  exists $\tau \geq \alpha$ such that for every $i \geq \tau$ and $q
  \in\act{\c_i}$, it is the case that $\maj{q} \in \{Y, y\}$ and
  $\lnot\even{q}$ hold.
\end{restatable}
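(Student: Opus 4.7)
The plan is to reduce the problem to the behaviour of the classical absolute-majority subprotocol once the preparatory stages have stabilised, and then to run a standard fairness argument. I would proceed in two phases: first, use the previous lemmas to show that after some time $\tau'$ the only rules that can still modify a $\maj$ macro are the four majority rules \majrule{11}--\majrule{14}, and that the invariant $\card{Q_Y}{\c_i} > \card{Q_N}{\c_i}$ holds throughout; second, conclude that a fair execution restricted to these four rules converges to $\maj \in \{Y, y\}$.

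For the first phase, I would invoke \Cref{lem:pair} and \Cref{lem:majcolor} to obtain that, beyond $\alpha$, all unpaired agents share the single color $d$ of the majority. Then \Cref{lem:group} furnishes a time $\tau_1 \geq \alpha$ after which the group bit is exactly the indicator of ``color $d$'', and \Cref{lem:even} gives a time $\tau_2 \geq \alpha$ after which $\even$ is false on every active agent (since $\card{U}{\c_i} > 0$ for all $i$). Setting $\tau' \defeq \max(\tau_1, \tau_2)$, I would then check rule by rule that past $\tau'$ only rules \majrule{11}--\majrule{14} can fire: rules \majrule{2}--\majrule{5} always involve an unpaired agent, necessarily of color $d$, paired with a paired agent whose group bit, and hence color, is fixed by $\tau_1$, making each rule's color precondition contradictory; rules \majrule{9}--\majrule{10} are disabled because no active agent carries $\even$; and rules \majrule{6}--\majrule{8} flip a $\grp$ bit, so firing any of them past $\tau_1$ would immediately contradict \Cref{lem:group}, forcing their preconditions never to be simultaneously met (in particular, no agent carries $\maj \in \{\overline{Y},\overline{N}\}$ past $\tau'$). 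Combined with \Cref{lem:invariant} and the stabilised grouping, which turns $\card{Q_M}{\c_i} - \card{Q_m}{\c_i}$ into the actual majority surplus and thus makes it strictly positive by $\majphi(\c_0)$, this yields $\card{Q_Y}{\c_i} > \card{Q_N}{\c_i}$ for every $i \geq \tau'$.

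For the second phase, I would use the classical two-step convergence argument. Rule \majrule{11} is the only enabled rule that decreases $\card{Q_N}$; from any reachable configuration with $\card{Q_N} > 0$ one can reach, by scheduling \majrule{11} repeatedly, a configuration with $\card{Q_N} = 0$, so by fairness some $\tau_3 \geq \tau'$ satisfies $\card{Q_N}{\c_{\tau_3}} = 0$, and no remaining rule can then create an $N$ agent. After $\tau_3$ there is always at least one $Y$ agent (by the strict invariant), and rule \majrule{12} can be scheduled to drain all $n$'s without reintroducing any, so fairness delivers some $\tau \geq \tau_3$ at which $\card{Q_n}{\c_\tau} = 0$; from that point on, $\maj{q} \in \{Y, y\}$ for every active agent $q$, and $\lnot\even{q}$ has held since $\tau_2$, closing the argument.

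The main obstacle is justifying that rules \majrule{6}--\majrule{8} are disabled past $\tau_1$: one has to read \Cref{lem:group} as a hard constraint on the trajectory, so that any firing of such a rule past $\tau_1$ would immediately falsify it, forcing the $\overline{Y}, \overline{N}$ values to have been flushed by $\tau_1$. Everything else is straightforward invariant arithmetic together with the folklore fair-convergence analysis of the four-state majority subprotocol.
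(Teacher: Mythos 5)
Your overall architecture matches the paper's: stabilize pairing, grouping and the even bit, use \Cref{lem:invariant} together with the stabilized groups to get $\card{Q_Y}{\c_i} > \card{Q_N}{\c_i}$, and then drain $Q_N$ via rule~\majrule{11} and $Q_n$ via rule~\majrule{12} by fairness. That part is sound. But there is a genuine gap in how you dispose of the intermediate values $\overline{Y}$ and $\overline{N}$, and you cannot conclude $\maj{q} \in \{Y, y\}$ without closing it.

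The problem is the inference ``rules~\majrule{6}--\majrule{8} cannot fire past $\tau_1$ without contradicting \Cref{lem:group}, hence their preconditions are never met, hence no agent carries $\maj{q} \in \{\overline{Y}, \overline{N}\}$ past $\tau'$.'' The last step is a non sequitur. The precondition of rule~\majrule{6} (resp.~\majrule{8}) requires, besides the $\overline{N}$-agent, a \emph{partner} in $Q_m$ with $\maj{} \in \{y, n\}$ (resp.\ $\maj{} = \overline{Y}$). If an agent sat in $\overline{N}$ with no such partner, rules~\majrule{6}--\majrule{8} would indeed never be enabled, no configuration violating \Cref{lem:group} would be reachable, no contradiction would arise --- and yet that agent would be stuck in $\overline{N}$ forever (rules~\majrule{6} and~\majrule{8} are its only exits), falsifying the conclusion of the lemma. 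So ``the rules never fire'' is perfectly consistent with the scenario you need to exclude. What rules this scenario out is \Cref{lem:partner}: an $\overline{N}$-agent necessarily lies in $Q_M$ (only rule~\majrule{4} creates $\overline{N}$, from an agent already in the majority group), so it contributes to $\card{E_M}{\c_i}$, and the invariant $\card{E_M}{\c_i} = \card{E_m}{\c_i}$ guarantees a counterpart in $E_m$, which enables rule~\majrule{6} or~\majrule{8}. Only then does your argument go through: the enabled transition leads to a configuration violating the group assignment of \Cref{lem:group}, and since the reachable configuration space is finite, fairness forces such a configuration to be reached, a contradiction; hence no agent is in $\overline{N}$ (symmetrically $\overline{Y}$) past the grouping threshold. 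The paper closes this exact hole by reusing Claims~1--2 from the proof of \Cref{lem:group}, whose own fairness arguments rest on \Cref{lem:partner}; your proposal never invokes \Cref{lem:partner} and therefore cannot establish the flushing of $\overline{Y}, \overline{N}$ that it explicitly relies on.
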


\begin{restatable}{lemma}{lemconvfalse}\label{lem:convfalse}
  If $\neg \majphi(\c_0)$ holds,
  then there exists $\tau \geq \alpha$ such that either:
  \begin{itemize}
  \item $\even{q}$ holds for every $i \geq \tau$ and $q \in
    \act{\c_i}$; or

  \item $\maj{q} \in \{ N, n\}$ holds for every $i \geq \tau$ and $q
    \in \act{\c_i}$.
  \end{itemize}
\end{restatable}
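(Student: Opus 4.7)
The plan is to split on whether $\card{U}{\c_\alpha} = 0$ or $\card{U}{\c_\alpha} > 0$. The easy case is $\card{U}{\c_\alpha} = 0$: since no rule creates unpaired agents out of paired ones, $\card{U}{\c_i} = 0$ for every $i \geq \alpha$, and \Cref{lem:even} then directly supplies a $\tau$ after which $\even{q}$ holds for every active $q$, yielding the first disjunct.

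The substantive case is $\card{U}{\c_\alpha} > 0$. By \Cref{lem:pair}, all unpaired agents share a common color $d$, and because $\neg\majphi(\c_0)$ the number $k$ of $d$-agents is at most the number $m$ of non-$d$-agents. Applying \Cref{lem:group}, I would fix a time $\tau_1 \geq \alpha$ after which $\grp{q} = (d' = d)$ for every color $d' \in \D$ and every $q \in \act{\c_i(d')}$ with $i \geq \tau_1$. A rule-by-rule inspection then shows that all of rules~\majrule{1} through~\majrule{8} are disabled after $\tau_1$: rules~\majrule{1}--\majrule{5} need either two unpaired agents of different colors or a color/grouping mismatch that is no longer present, while any firing of rules~\majrule{6}--\majrule{8} would move an agent into the wrong group and contradict \Cref{lem:group}, forcing $\card{Q_{\overline{Y}}}{\c_i} = \card{Q_{\overline{N}}}{\c_i} = 0$ after $\tau_1$ (combined with \Cref{lem:partner}, which forbids orphaned $\overline{Y}, \overline{N}$ without firing partners). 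So only rules~\majrule{9}--\majrule{14} remain active; these leave $\grp$ and $\pair$ untouched, so $\card{Q_M}{\c_i} = k$ and $\card{Q_m}{\c_i} = m$ stay constant, and \Cref{lem:invariant} yields the invariant $\card{Q_Y}{\c_i} - \card{Q_N}{\c_i} = k - m \leq 0$.

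I would next argue that $\card{Q_Y}{\c_i}$ reaches $0$. Only rule~\majrule{11} alters $\card{Q_Y}{\c_i}$, decreasing it by $1$, so it is monotonically non-increasing after $\tau_1$; and whenever it is positive, the invariant gives $\card{Q_N}{\c_i} \geq \card{Q_Y}{\c_i} > 0$, so rule~\majrule{11} is enabled. Since the active-color multiset is frozen after $\alpha$ and the total agent count is constant, the configurations reachable from $\c_{\tau_1}$ form a finite set; a standard fairness-plus-pigeonhole argument (a persistent positive value of $\card{Q_Y}{\cdot}$ would make some specific configuration with strictly smaller value reachable from infinitely many $\c_i$, hence visited infinitely often by fairness, contradicting the persistence) then yields $\tau_2 \geq \tau_1$ with $\card{Q_Y}{\c_i} = 0$ for all $i \geq \tau_2$. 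From $\tau_2$ onwards, rules~\majrule{11} and~\majrule{12} are disabled, so $\card{Q_N}{\c_i}$ and $\card{Q_n}{\c_i}$ are non-decreasing, and the same fairness argument applied to $\card{Q_y}{\c_i}$ will conclude the proof \emph{provided} that $\card{Q_N}{\c_i} + \card{Q_n}{\c_i} > 0$ keeps rule~\majrule{13} or~\majrule{14} enabled.

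The hard part will be the balanced case $m = k$, in which the invariant forces $\card{Q_Y}{\c_{\tau_2}} = \card{Q_N}{\c_{\tau_2}} = 0$ and does not by itself rule out a configuration where every active agent has $\maj = y$. The key observation I would invoke is that such an all-$y$ configuration is unreachable from the initial configuration: rule~\majrule{12} is the only rule that introduces the value $y$, and it leaves its first participant in state $Y$, so no single transition can take the system into a configuration where every active agent has $\maj = y$; since the initial configuration has no $y$-agent, an easy induction excludes all such configurations. Hence $\card{Q_n}{\c_{\tau_2}} \geq 1$, rule~\majrule{14} stays enabled while $\card{Q_y}{\c_i} > 0$, and the fairness argument yields $\tau \geq \tau_2$ after which $\maj{q} \in \{N, n\}$ for every active $q$, giving the second disjunct.
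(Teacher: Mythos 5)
Your proof is correct and follows the paper's overall skeleton: the same case split on $\card{U}{\c_\alpha}$, the same use of \Cref{lem:pair,lem:even,lem:group,lem:invariant} to reduce to $\card{Q_Y}{\c_i} \leq \card{Q_N}{\c_i}$ after the groups stabilize, and the same fairness arguments driving $\card{Q_Y}$ and then $\card{Q_y}$ to zero. Where you genuinely diverge is the critical balanced subcase $\card{Q_N}{\c_{\tau'}} = 0$, where one must exhibit an agent in $Q_n$ to keep rule~\majrule{14} enabled. The paper handles this by locating the last group-changing transition $t$ with $\c_{\tau-1} \reach{t} \c_\tau$, case-splitting on whether $t$ arises from rules~\majrule{2}, \majrule{3}, \majrule{6}, \majrule{7} or from rule~\majrule{8}, and then tracking the final firing of rule~\majrule{11} to conclude $\card{Q_n}{\c_{\tau'}} \geq 2$. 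You instead prove a global reachability invariant: rule~\majrule{12} is the only rule producing the value $y$ and it leaves its observed participant with $\maj{} = Y$, so no reachable configuration satisfies $\act{\c} \subseteq Q_y$; combined with the absence of active states with $\maj{} \in \{Y, N, \overline{Y}, \overline{N}\}$ past $\tau_2$, this immediately yields $\card{Q_n}{\c_i} \geq 1$. Your invariant is shorter, avoids the somewhat delicate bookkeeping around the transition at time $\tau$, and buys robustness (it does not depend on identifying which rule performed the last group change); the paper's argument yields the slightly stronger bound $\card{Q_n} \geq 2$, which is not actually needed. Both routes rely on the same preliminary fact, established in the proof of \Cref{lem:convtrue}, that $\overline{Y}$ and $\overline{N}$ states vanish once grouping stabilizes; your justification of that fact is compressed (the cleanest reading is that such states would necessarily sit in the wrong group, or would be eliminated by a fairness argument via rules~\majrule{6}--\majrule{8}), but the conclusion is sound.
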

\medskip

We define the output of a given state $q \in Q$ as $O(q) \defeq
(\maj{q} \in \{Y, y\} \land \lnot\even{q})$. The correctness of the
protocol follows immediately from \Cref{lem:convtrue,lem:convfalse}:

\begin{restatable}{corollary}{thmmaj}\label{thm:maj}
  There exists $\tau \in \N$ such that $O(\c_\tau) = O(\c_{\tau+1}) =
  \cdots = \majphi(\c_0)$.
\end{restatable}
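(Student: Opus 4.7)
The plan is short because the bulk of the work is already encapsulated in Lemmas~\ref{lem:convtrue} and~\ref{lem:convfalse}. The goal is to lift the lemmas' per-agent stabilization statements to a statement about $O(\c_i)$. Recall that $O(\c_i) = b$ precisely when every $q \in \act{\c_i}$ satisfies $O(q) = b$, and that we defined $O(q) = \true$ iff $\maj{q} \in \{Y, y\}$ and $\lnot\even{q}$. Since configurations contain at least two agents, $\act{\c_i}$ is nonempty, so $O(\c_i)$ is determined by the common value of $O$ on active states whenever such agreement holds.

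I would proceed by a case split on the Boolean value of $\majphi(\c_0)$. If $\majphi(\c_0) = \true$, I invoke Lemma~\ref{lem:convtrue} to obtain $\tau \geq \alpha$ such that, for every $i \geq \tau$ and $q \in \act{\c_i}$, both conjuncts $\maj{q} \in \{Y, y\}$ and $\lnot\even{q}$ hold. By definition of $O(q)$, every active state then outputs $\true$, so $O(\c_i) = \true = \majphi(\c_0)$ for all $i \geq \tau$. If $\majphi(\c_0) = \false$, I invoke Lemma~\ref{lem:convfalse}, which yields $\tau \geq \alpha$ together with a disjunction of two regimes: either $\even{q}$ holds for every active $q$ of $\c_i$, which falsifies the $\lnot\even{q}$ conjunct in the definition of $O(q)$; or $\maj{q} \in \{N, n\}$ for every active $q$ of $\c_i$, which falsifies the $\maj{q} \in \{Y, y\}$ conjunct. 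In either subcase, $O(q) = \false$ for all active states, hence $O(\c_i) = \false = \majphi(\c_0)$ for all $i \geq \tau$.

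There is no genuine obstacle here; the statement truly is a corollary. The only small point to note is that the two lemmas guarantee the conclusions simultaneously for \emph{all} states in $\act{\c_i}$ once $i \geq \tau$, which is exactly what is needed to avoid the $\bot$ clause in the definition of $O(\c_i)$. Taking $\tau$ to be whichever threshold the applicable lemma produces concludes the proof.
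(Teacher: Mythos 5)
Your proof is correct and matches the paper exactly: the paper states that the corollary ``follows immediately'' from Lemmas~\ref{lem:convtrue} and~\ref{lem:convfalse}, and your case split on $\majphi(\c_0)$ combined with the definition $O(q) = (\maj{q} \in \{Y, y\} \land \lnot\even{q})$ is precisely the intended (and omitted) argument. Nothing is missing.
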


\section{Immediate observation protocols}\label{sec:io}
We say that a population protocol is \emph{immediate observation (IO)} if
each of its transitions has the form $((p, q), \sim, (p, q'))$,
\ie\ only one agent can update its state by ``observing'' the other
agent. There is no restriction on $\sim$, but one can also imagine the
datum to be observed.

In this section, we characterize the expressive power of immediate
observation protocols. First, we establish properties of IO protocols
regarding truncations, thereby allowing us to prove that the majority
predicate is not computable. Then, we show that IO protocols do not
compute more than interval predicates. Finally, we show that every
interval predicate can be computed by an IO protocol. Before
proceeding, let us define interval predicates.

Let $\dot\exists d_1, d_2, \ldots, d_n$ denote a disjoint existential
quantification, \ie\ it indicates that $d_i \neq d_j$ for all $i, j
\in [1..n]$ such that $i \neq j$. A \emph{simple interval predicate},
interpreted over inputs from $\N^{\D \times \Sigma}$, where $\Sigma = \{x_1, \ldots, x_m\}$, is a predicate of the form
\begin{align}
  \psi(\mat{M}) =
  \dot\exists d_1, d_2, \ldots, d_n \in \D :
  \bigwedge_{i = 1}^n \bigwedge_{j = 1}^m \mat{M}(d_i, x_j) \in T(i, j),\label{eq:int:pred}
\end{align}
where $m, n \in \N_{> 0}$, each $T(i, j) \subseteq \N$ is a nonempty
interval, and for every $i \in [1..n]$, there exists $j \in[1..m]$ such
that $0 \notin T(i, j)$. An \emph{interval predicate} is a Boolean
combination of simple interval predicates.

\subsection{State and form truncations}\label{sec:maj:io}
\newcommand{\embeds}{\sqsubseteq}
\newcommand{\U}{\mathcal{U}}
\renewcommand{\S}{\mathcal{S}}

Given configurations $\c, \c'$, we write $\c \embeds \c'$ if there
exists an injection $\rho \colon \D \to \D$ such that $\c(d) \leq
\c'(\rho(d))$ for every $d \in \D$. We write $\c \equiv \c'$ if $\c
\embeds \c'$ and $\c' \embeds \c$. We say that a subset of configurations
$X$ is \emph{upward closed} if $\c \in X$ and $\c \embeds \c'$ implies
$\c' \in X$. We say that a set $B$ is a \emph{basis} of an upward
closed set $X$ if $X = \{\c' : \c \embeds \c' \text{ for some } \c \in
B\}$.

A configuration $\c$ is said \emph{unstable} if either $O(\c) = \bot$
or there exists $\c'$ such that $\c \reach{*} \c'$ with $O(\c) \neq
O(\c')$. Let $\U$ denote the set of unstable configurations, and let
$\S_b \defeq \{\c : \c \not\in \U, O(\c) = b\}$ denote the set of
stable configurations with output $b$. As in the case of standard
protocols (without data)~\cite{AADFP06}, it is simple to see that $\U$
is upward closed. Moreover, since $\embeds$ is a well-quasi-order, it
follows that $\U$ has a finite basis.

This allows us to extend the notion of truncations
from~\cite{AADFP06}. A \emph{state truncation} to $k \geq 1$ of some
form $\f$, denoted by $\tau_k(\f)$, is the form such that
$\tau_k(\f)(q) \defeq \min(\f(q), k)$ for all $q \in Q$. The concept
of state truncations is also extended to configurations: $\tau_k(\c)$
is the configuration such that $\tau_k(\c)(d) \defeq \tau_k(\c(d))$
for all $d \in \D$. From a sufficiently large threshold, the stability
and output of a configuration remain unchanged under state
truncations:
\begin{restatable}{lemma}{lemtruncinu}\label{lem:truncinu}
  Let $\psi$ be a predicate computed by a population protocol with unordered data. Let $S_b$ be the set of stable configurations with output $b$ of the protocol.
  There exists $k \geq 1$ such that,
  for all $b \in \{0, 1\}$, we have $\c \in \S_b$ iff $\tau_k(\c) \in
  \S_b$.
\end{restatable}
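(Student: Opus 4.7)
The plan is to pick $k$ large enough to dominate every count appearing in a finite basis of $\mathcal{U}$, and then use the upward closure of $\mathcal{U}$ under $\embeds$ to transfer membership between $\c$ and $\tau_k(\c)$ in both directions.

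First I would reduce the claim to an equivalence about $\mathcal{U}$. For any $k \geq 1$ and any configuration $\c$, we have $\act{\tau_k(\c)} = \act{\c}$, since $\c(d,q) > 0$ iff $\min(\c(d,q),k) > 0$. Hence $O(\tau_k(\c)) = O(\c)$, and by definition $\c \in \mathcal{S}_b$ iff $\c \notin \mathcal{U}$ and $O(\c) = b$; so it suffices to exhibit a single $k$ with $\c \in \mathcal{U} \Leftrightarrow \tau_k(\c) \in \mathcal{U}$. One direction is automatic: the identity injection witnesses $\tau_k(\c) \embeds \c$, so upward closure of $\mathcal{U}$ yields $\tau_k(\c) \in \mathcal{U} \Rightarrow \c \in \mathcal{U}$, irrespective of $k$.

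For the converse I would use the finite basis $B$ of $\mathcal{U}$ furnished by the wqo property of $\embeds$ (invoked just above in the excerpt). Each $\c_0 \in B$ has finite support and $Q$ is finite, so
\[
  k^{*} \defeq \max\{\c_0(d,q) : \c_0 \in B,\ d \in \D,\ q \in Q\}
\]
is a well-defined natural number. I set $k \defeq \max(1, k^{*})$. Given $\c \in \mathcal{U}$, pick $\c_0 \in B$ and an injection $\rho \colon \D \to \D$ with $\c_0(d) \leq \c(\rho(d))$ for every $d$. Then for every $d \in \D$ and $q \in Q$,
\[
  \c_0(d,q) \leq \min\bigl(\c(\rho(d),q),\, k\bigr) = \tau_k(\c)(\rho(d), q),
\]
since $\c_0(d,q) \leq k$ and $\c_0(d,q) \leq \c(\rho(d),q)$. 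Thus the same $\rho$ witnesses $\c_0 \embeds \tau_k(\c)$, giving $\tau_k(\c) \in \mathcal{U}$.

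The main (and only nontrivial) obstacle is the wqo machinery that produces the finite, count-bounded basis $B$; once that is in hand, the rest is a uniform choice of threshold. Putting everything together, for $k$ as above we obtain $\c \in \mathcal{S}_b$ iff $\c \notin \mathcal{U}$ and $O(\c) = b$ iff $\tau_k(\c) \notin \mathcal{U}$ and $O(\tau_k(\c)) = b$ iff $\tau_k(\c) \in \mathcal{S}_b$, as required.
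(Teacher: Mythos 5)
Your proof is correct and follows essentially the same route as the paper's: both choose $k$ as the maximum count appearing in a finite basis of $\mathcal{U}$, use upward closure of $\mathcal{U}$ together with $\tau_k(\c) \embeds \c$ for one direction, and show the basis element embeds into the truncation for the other. The only cosmetic difference is that you verify the injection $\c_0 \embeds \tau_k(\c)$ by a direct computation, whereas the paper routes this through an auxiliary lemma stating that $\c \embeds \c'$ implies $\tau_k(\c) \embeds \tau_k(\c')$ combined with $\tau_k(\c_0) = \c_0$.
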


\begin{restatable}{lemma}{lemstatelimit}\label{lem:statelimit}
  Let $\psi$ be a predicate computed by an immediate observation protocol with unordered data. Let $k$ be the threshold given by \Cref{lem:truncinu} and $\c_0$ be an initial configuration such that for some $d\in\D$ and some $q\in I$, $\c_0(d, q) \geq \ell = |Q|\cdot (k-1) +1$. Then, $\psi(\c_0 + \vec{q}_d) = \psi(\c_0)$.
\end{restatable}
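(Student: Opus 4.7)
The plan is to exploit a key feature of immediate observation transitions: the observed agent is never updated, so the trajectory of any fixed agent can be \emph{duplicated} by another agent starting in the same state. Combined with a pigeonhole argument and \Cref{lem:truncinu}, this will let us absorb the extra $(d, q)$-agent in $\c_0 + \vec{q}_d$ without affecting the output.

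First, I would fix a fair execution starting at $\c_0$. Since the protocol computes $\psi$, it converges to $b \defeq \psi(\c_0)$; moreover, from some point on every reachable configuration has output $b$ (otherwise fairness would force a visit to a configuration with a different output, contradicting convergence), so such a configuration is stable. Call this stable reachable configuration $\c$. Among the (at least) $\ell = |Q|(k-1)+1$ agents of color $d$ starting in state $q$, pigeonhole delivers a common state $q^\star \in Q$ reached by at least $\lceil \ell/|Q|\rceil = k$ of them in $\c$; in particular $\c(d, q^\star) \geq k$, and I single out one such agent $a^\star$ whose trajectory goes from $(d, q)$ in $\c_0$ to $(d, q^\star)$ in $\c$.

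Next, I construct an execution from $\c_0 + \vec{q}_d$ reaching $\c + \vec{q^\star}_d$ by replaying the fixed execution step by step and, each time $a^\star$ acts as the observer (moving from some $q_i$ to $q_{i+1}$ by observing an agent in state $p_i$), inserting an additional transition immediately after in which the extra agent $e$ observes the \emph{same} agent. Because IO does not update the observed agent, it is still in state $p_i$ when $e$ acts, so the inserted step is enabled and moves $e$ from $q_i$ to $q_{i+1}$ exactly as $a^\star$ just did. A straightforward induction on the replayed steps shows that $a^\star$ and $e$ remain in identical states throughout; thus both end in $q^\star$, every other agent ends exactly as in $\c$, and the final configuration is $\c + \vec{q^\star}_d$. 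Since $\c(d, q^\star) \geq k$, we have $\tau_k(\c + \vec{q^\star}_d) = \tau_k(\c)$, and \Cref{lem:truncinu} yields $\c + \vec{q^\star}_d \in \S_b$. Extending the constructed path to a fair execution from $\c_0 + \vec{q}_d$ then forces convergence to $b$, giving $\psi(\c_0 + \vec{q}_d) = b = \psi(\c_0)$.

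The main obstacle is verifying the trajectory-doubling: one must check that every inserted mimicking transition is enabled and that it does not disturb any subsequent original transition. This is exactly where the IO hypothesis is used, as the agent observed by $a^\star$ keeps its state, letting $e$ re-observe it immediately after. Everything else reduces to a careful pigeonhole count and a direct appeal to the truncation lemma.
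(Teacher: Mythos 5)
Your proposal is correct and follows essentially the same route as the paper's proof: the same pigeonhole count yielding a state $q^\star$ with $\c(d,q^\star)\geq k$, the same trajectory-doubling argument exploiting that IO transitions leave the observed agent intact, and the same final appeal to \Cref{lem:truncinu} via $\tau_k(\c + \vec{q^\star}_d)=\tau_k(\c)$ to conclude stability with output $b$.
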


Given a configuration $\c$ and a form $\f$, let $\#_{\f}(\c) \defeq
|\{d \in \D : \c(d) = \f\}|$. Due to the nature of immediate
observation protocols, it is always possible to take a form $\vec{f}$
of color $d$ present in a configuration $\c$, duplicate
$\vec{f}$ with a fresh color $d'$, and have the latter mimic the
behaviour of the former.

\begin{restatable}{lemma}{lemnewcolor}\label{lem:newcolor}
  Let $\c$ and $\c'$ be configurations
  such that $\c \reach{*} \c'$. For every $d \in \supp{\c}$ and $d'
  \in \D \setminus \supp{\c}$, it is the case that $\c + (\c(d))_{d'}
  \reach{*} \c' + (\c'(d))_{d'}$.
\end{restatable}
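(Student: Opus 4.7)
The plan is to induct on the number of steps in the derivation $\c \reach{*} \c'$. The zero-step base case is immediate. For the inductive step, I will decompose the derivation as $\c \reach{*} \c'' \reach{t} \c'$, where $t = ((p, q), \sim, (p, q')) \in \delta$ fires with observed datum $d_1$ and observer datum $d_2$ satisfying $d_1 \sim d_2$. By the induction hypothesis, $\c + (\c(d))_{d'} \reach{*} \c'' + (\c''(d))_{d'}$, so it will suffice to show that $\c'' + (\c''(d))_{d'} \reach{*} \c' + (\c'(d))_{d'}$.

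If $q = q'$, then $\c' = \c''$ and there is nothing to do, so I assume $q \neq q'$. Since transitions never introduce new data values, $\supp{\c''} \subseteq \supp{\c}$, and in particular $d' \notin \supp{\c''}$. I will split on $d_2$. When $d_2 \neq d$, transition $t$ leaves color $d$ untouched, so $\c'(d) = \c''(d)$; firing $t$ again in the augmented configuration with the same data $d_1, d_2$ (both in $\supp{\c''}$, hence distinct from $d'$) yields $\c' + (\c''(d))_{d'} = \c' + (\c'(d))_{d'}$ in a single step.

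When $d_2 = d$, I will first fire $t$ in the augmented configuration at data $d_1, d_2$; since $d' \notin \{d_1, d_2\}$, color $d'$ is untouched and we reach $\c' + (\c''(d))_{d'}$. It remains to replay the same update on color $d'$. If $\sim$ is ${=}$, then $d_1 = d$, and I fire $t$ with both data equal to $d'$; the enabling conditions transfer directly from the original firing, because the color-$d'$ component of the current configuration equals $\c''(d)$. If $\sim$ is ${\neq}$, then $d_1 \neq d$, and I fire $t$ with observer at $d'$ and observed at $d_1$: the required observer count at $(d', q)$ equals $\c''(d, q) \geq 1$, while the observed count $\c''(d_1, p) \geq 1$ is unaffected by the first firing, which modified only color $d$. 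In either subcase, the color-$d'$ component undergoes exactly the same change as color $d$, yielding $\c' + (\c'(d))_{d'}$.

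The hard part, and the one place where the IO restriction is really used, will be this second case: I need to duplicate the effect of $t$ on the fresh color $d'$ without disturbing the rest of the configuration. This works precisely because an IO transition modifies only the observer's state, so the observed agent remains available to serve as the partner in the twin firing; and because the freshness of $d'$ guarantees no clash with the data-comparison constraint $\sim$.
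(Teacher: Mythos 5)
Your proposal is correct and takes essentially the same route as the paper: both induct on the length of the run and, when the last step changes the form at colour $d$ (i.e.\ the observer carries datum $d$ and $q \neq q'$), replay the same IO transition a second time with the observer relocated to the fresh colour $d'$, splitting on whether $\sim$ is $=$ or $\neq$ to exhibit a valid observed partner. The only cosmetic difference is that in the $\neq$ subcase you reuse the original observed datum $d_1$ as the partner, whereas the paper picks an arbitrary colour of the support containing $p$; both choices are valid since neither can equal $d'$.
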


Combined with the fact that $\U$ has a finite basis, this allows to
show that from some threshold, duplicating forms with fresh colors
does not change the output of the population.

\begin{restatable}{lemma}{lemformthreshold}\label{lem:formthreshold}
  Let $\psi$ be a predicate
  computed by a population protocol with unordered data. Let $\f$ be a
  form with $\act{\f} \subseteq I$. There exists $h(\f) \in \N$ such
  that, for all initial configuration $\c_0$ and $d \in \D \setminus
  \supp{\c_0}$ with $\#_{\f}(\c_0) \geq h(\f)$, it is the case that
  $\psi(\c_0 + \vec{f}_d) = \psi(\c_0)$.
\end{restatable}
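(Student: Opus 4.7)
The plan is to combine the finite basis of the unstable set $\U$ with \Cref{lem:newcolor} and a pigeonhole argument: I will choose $h(\vec{f})$ large enough so that, in any stable configuration reached from $\c_0$, many of the original $\vec{f}$-colors share a common form, which lets the new color $d$ be ``absorbed'' without destabilizing the configuration.

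For the parameters, let $B$ be a finite basis of the upward-closed set $\U$ and set $M \defeq \max_{\vec{b} \in B} |\supp{\vec{b}}|$. Because each IO transition $((p, q), \sim, (p, q'))$ only relocates a single agent within the observer's color, the total number of agents of every fixed color is invariant along every run. Hence, for any $d' \in \supp{\c_0}$ with $\c_0(d') = \vec{f}$ and any $\c^*$ reachable from $\c_0$, the form $\c^*(d')$ lies in the finite set $F_{\vec{f}}$ of forms $\vec{g} \in \N^Q$ whose total agent count equals that of $\vec{f}$. I set $h(\vec{f}) \defeq |F_{\vec{f}}| \cdot M + 1$.

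Now fix $\c_0$ with $\#_{\vec{f}}(\c_0) \geq h(\vec{f})$ and $d \in \D \setminus \supp{\c_0}$, let $b \defeq \psi(\c_0)$, and pick some stable $\c^* \in \S_b$ reachable from $\c_0$, which exists by correctness. Pigeonhole applied to the $\geq h(\vec{f})$ original $\vec{f}$-colors of $\c_0$, partitioned by their form in $\c^*$, yields a form $\vec{g}$ and distinct colors $d_1, \ldots, d_{M+1} \in \supp{\c_0}$ with $\c_0(d_i) = \vec{f}$ and $\c^*(d_i) = \vec{g}$ for each $i$. Applying \Cref{lem:newcolor} with $d_1$ in the role of the existing color and $d$ in that of the fresh color yields
\[
\c_0 + \vec{f}_d = \c_0 + (\c_0(d_1))_d \reach{*} \c^* + (\c^*(d_1))_d = \c^* + \vec{g}_d.
\]

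It remains to show $\c^* + \vec{g}_d \in \S_b$. Its output equals $b$ since $\act{\vec{g}} = \act{\c^*(d_1)} \subseteq \act{\c^*}$. For stability, I argue by contradiction: if $\c^* + \vec{g}_d \in \U$, some $\vec{b} \in B$ embeds into it via an injection $\rho$ whose image has size at most $M$; but $\c^* + \vec{g}_d$ contains the $M+2$ colors $d, d_1, \ldots, d_{M+1}$, all of form $\vec{g}$, so at least two lie outside the image of $\rho$. If $d$ is outside, $\rho$ already witnesses $\vec{b} \embeds \c^*$; otherwise some $d_i$ outside the image has the same form $\vec{g}$ as $d$, so redirecting the preimage of $d$ under $\rho$ to $d_i$ gives an injection witnessing $\vec{b} \embeds \c^*$. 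Either way this contradicts $\c^* \notin \U$. Extending the finite run $\c_0 + \vec{f}_d \reach{*} \c^* + \vec{g}_d$ into a fair execution (possible since $\c^* + \vec{g}_d$ is stable of output $b$) yields a fair execution from $\c_0 + \vec{f}_d$ converging to $b$, so $\psi(\c_0 + \vec{f}_d) = b = \psi(\c_0)$ by correctness. The trickiest point will be the injection-redirection step: having only $M$ spare copies of $\vec{g}$ would guarantee one outside the image of $\rho$ but not that it is distinct from $d$, which is why the extra ``$+1$'' in $h(\vec{f})$, producing $M+1$ shared-form colors besides $d$, is essential.
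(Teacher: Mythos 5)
Your proof is correct and takes essentially the same route as the paper's: a finite basis for $\U$, a pigeonhole argument (via the invariance of per-colour agent counts) producing many colours that share a common form in a reachable stable configuration, \Cref{lem:newcolor} to transport the run, and an injection-repair argument showing the augmented configuration remains stable. The only differences are cosmetic — your threshold $|F_{\vec{f}}|\cdot M+1$ versus the paper's $(m-1)\cdot |Q|^{\sum_{q}\vec{f}(q)}+1$, and a two-case rather than three-case repair of $\rho$ (the extra ``$+1$'' is in fact unnecessary, since each $d_i$ lies in $\supp{\c_0}$ and is therefore automatically distinct from $d$).
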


The \emph{form truncation} of a configuration $\c$, denoted
$\sigma(\c)$, is an (arbitrary) configuration such that $\sigma(\c)
\sqsubseteq \c$ and $\#_{\f}(\sigma(\c)) = \min(\#_{\f}(\c), h(\f))$
for every form $\f$, where $h(\f)$ is given by \Cref{lem:formthreshold}. By \Cref{lem:formthreshold}, $\psi(\c_0)$ holds
iff $\psi(\sigma(\c_0))$ holds. Moreover,
\Cref{lem:formthreshold} allows us to show that IO protocols are less
expressive than the general model.

\begin{proposition}
  No IO population protocol computes the majority predicate
  $\varphi_\text{maj}$.
\end{proposition}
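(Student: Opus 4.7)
The plan is to derive a contradiction from \Cref{lem:formthreshold} by producing two initial configurations that differ only by a fresh-color duplicate of some fixed form, yet have opposite values under $\varphi_\text{maj}$. Since $\Sigma = \{x\}$, the input set $I$ contains a unique state $q_I$, so every form $\vec{f}$ with $\act{\vec{f}} \subseteq I$ has the shape $k \cdot q_I$ for some $k \in \N$.

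I would fix the form $\vec{f} \defeq 2 \cdot q_I$ (i.e.\ two agents, both in state $q_I$) and let $h \defeq \max(h(\vec{f}), 1)$, where $h(\vec{f})$ is the threshold provided by \Cref{lem:formthreshold}. I would then construct the initial configuration $\c_0$ carrying
\[
\c_0(d_0, q_I) = 2h + 1, \qquad \c_0(d_i, q_I) = 2 \text{ for each } i \in [1..h],
\]
and $\c_0(d, q_I) = 0$ elsewhere. Note that $\#_{\vec{f}}(\c_0) = h \geq h(\vec{f})$, so \Cref{lem:formthreshold} applies: for any fresh color $d_{h+1} \notin \supp{\c_0}$, we must have $\varphi_\text{maj}(\c_0) = \varphi_\text{maj}(\c_0 + \vec{f}_{d_{h+1}})$.

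A direct count then yields the contradiction. In $\c_0$, the color $d_0$ accounts for $2h + 1$ agents while the remaining $h$ colors together contribute only $2h$ agents, so $d_0$ has strict majority and $\varphi_\text{maj}(\c_0) = \true$. In $\c_0 + \vec{f}_{d_{h+1}}$, however, $d_0$ still holds only $2h + 1$ agents whereas the other colors now total $2h + 2$ agents, so no color enjoys strict majority and $\varphi_\text{maj}(\c_0 + \vec{f}_{d_{h+1}}) = \false$. This contradicts the equality forced by \Cref{lem:formthreshold}, completing the argument.

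The only real insight required is the choice of $\vec{f}$ and of $\c_0$: $\vec{f}$ must be large enough (at least two agents) for its addition to tip the balance, and $\c_0$ must be calibrated so that the candidate majority color wins by a razor-thin margin that the duplicated fresh color precisely overturns. Everything else is a routine invocation of \Cref{lem:formthreshold}, which does the real work by turning IO-specific closure properties (via \Cref{lem:newcolor} and the well-quasi-ordering of $\U$) into insensitivity of the output under fresh-color duplication.
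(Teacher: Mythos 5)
Your proof is correct and follows essentially the same strategy as the paper's: build a razor-thin majority configuration containing at least $h(\vec{f})$ fresh-color copies of a fixed initial form, then invoke \Cref{lem:formthreshold} to add one more copy and flip $\varphi_\text{maj}$, yielding a contradiction. The only (immaterial) difference is that you take $\vec{f}$ to consist of two agents in $q_I$ whereas the paper uses a single agent, and your $h \defeq \max(h(\vec{f}),1)$ is a slightly more careful guard against degenerate thresholds.
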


\begin{proof}
  For the sake of contradiction, suppose that some IO protocol
  computes $\varphi_\text{maj}$. Let $q_I$ be the unique initial state
  and let $\f \defeq \multiset{q_I}$. Let $h(\f)$ be given by \Cref{lem:formthreshold}. Let $\c_0$ be an initial
  configuration such that
  \begin{itemize}
  \item $\c_0(d) = \sum_{i=1}^{h(\f) + 1} \f$ holds for a unique
    datum $d \in \D$, and
    
  \item $\c_0(d') = \f$ holds for exactly $h(\f)$ other data $d' \in
    \{d_1, d_2, \ldots, d_{h(\f)}\}$.
  \end{itemize}
  
  We have $\varphi_{\text{maj}}(\c_0) = \true$, since $d$ has $h(\f) +
  1$ agents in a population of $2 \cdot h(\f) + 1$ agents. Let $\c_0'$
  be the initial configuration obtained from $\c_0$ by adding a datum
  $d^* \notin \supp{\c_0}$ such that $\c_0'(d^*) = \f$. By
  \Cref{lem:formthreshold}, $\varphi_{\text{maj}}(\c_0') =
  \varphi_{\text{maj}}(\c_0) = \true$. However, datum $d$ no longer
  has a majority in $\c_0'$, which is a contradiction.
\end{proof}

\subsection{Predicates computed by IO protocols are interval predicates}\label{sec:comp}
\begin{theorem}
  Let $(Q, \delta, I, O)$ be an immediate observation protocol with
  unordered data that computes a predicate $\psi$. The predicate
  $\psi$ can be expressed as an interval predicate.
\end{theorem}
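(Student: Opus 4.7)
The plan is to factor $\psi$ through a finite-valued tuple of ``truncated-form multiplicities,'' and then to express each atom of that tuple as a simple interval predicate, yielding $\psi$ as a Boolean combination of such predicates.

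The first step uses \Cref{lem:statelimit}: with $k$ from \Cref{lem:truncinu} and $\ell \defeq |Q|(k-1)+1$, we have $\psi(\c_0 + \vec{q}_d) = \psi(\c_0)$ whenever $\c_0(d,q) \geq \ell$. Reversing this (removing one $(d,q)$-agent at a time while the remaining count stays $\geq \ell$) yields $\psi(\c_0) = \psi(\tau_\ell(\c_0))$ for every initial $\c_0$, so each datum's form in $\tau_\ell(\c_0)$ lies in the finite set $\mathcal{G} \defeq \{0, 1, \ldots, \ell\}^I$. Next, for each $\g \in \mathcal{G} \setminus \{\0\}$, let $N_\g(\c_0) \defeq \#_\g(\tau_\ell(\c_0))$. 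Since $\mathcal{G}$ is finite, only finitely many thresholds $h(\g)$ from \Cref{lem:formthreshold} come into play; applying that lemma to $\tau_\ell(\c_0)$ with $\f = \g$, and reversing it (removing a datum whose form equals $\g$ while the multiplicity stays $\geq h(\g)$), caps each $N_\g$ at $h(\g)$ without altering $\psi$. Consequently, $\psi(\c_0)$ depends only on the finite-valued tuple $\bigl(\min(N_\g(\c_0), h(\g))\bigr)_{\g \in \mathcal{G} \setminus \{\0\}}$, and it can be written as a finite Boolean combination of atoms ``$N_\g(\c_0) \geq v$'' for $v \in [1..h(\g)]$.

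The second step recasts each atom as a simple interval predicate. The condition $\tau_\ell(\c_0(d)) = \g$ unfolds to $\bigwedge_{q \in I} \c_0(d,q) \in T_q(\g)$, where $T_q(\g) \defeq \{\g(q)\}$ if $\g(q) < \ell$ and $T_q(\g) \defeq [\ell, \infty)$ otherwise; both are intervals. Hence ``$N_\g(\c_0) \geq v$'' is equivalent to $\dot\exists d_1, \ldots, d_v \in \D : \bigwedge_{i=1}^{v} \bigwedge_{q \in I} \c_0(d_i, q) \in T_q(\g)$. Since $\g \neq \0$, some coordinate $q^*$ has $\g(q^*) > 0$, so $0 \notin T_{q^*}(\g)$, satisfying the non-triviality condition of simple interval predicates. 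Translating through the input bijection $\iota \colon \Sigma \to I$, each atom becomes a simple interval predicate, and the Boolean combination from the first step then expresses $\psi$ as an interval predicate.

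The main obstacle will be composing the two truncation lemmas in the right order. \Cref{lem:formthreshold} is parameterized per form $\f$, and applying it directly to $\c_0$ would a priori involve infinitely many forms. State-truncating first is essential: it restricts attention to the finite family $\mathcal{G}$, so that only finitely many thresholds $h(\g)$ are invoked and the resulting Boolean combination remains finite.
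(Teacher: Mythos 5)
Your proposal is correct and follows essentially the same route as the paper: state-truncate via \Cref{lem:statelimit}, then form-truncate via \Cref{lem:formthreshold}, observe that only finitely many capped form-multiplicities remain, and express each threshold atom ``$N_{\g} \geq v$'' by a disjoint existential quantification over data whose truncated form is pinned down by point intervals and $[\ell,\infty)$. The only step you leave implicit is that $\psi$ is invariant under renaming of data (the equivalence $\equiv$), which is needed to conclude that the capped tuple alone determines $\psi$; the paper makes this explicit, but it follows from the symmetry of the transition relation, so this is a presentational rather than a mathematical gap.
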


\begin{proof}
  We will express $\psi$ as a finite Boolean combination of simple interval
  predicates.
  
  Let $h$ be the mapping given by \Cref{lem:formthreshold}. Let $\mathbb{T} \defeq \{\c : \psi(\c) = \true \}$ and $\mathbb{T}_1
  \defeq \{\c \in \mathbb{T} : \c(d, q) \leq \ell \text{ for all } d \in
  \D, q \in Q\}$. From \Cref{lem:statelimit}, we learn that state
  truncations to $\ell$ on an initial configuration do not alter the consensus reached by the protocol. So,
  $\psi(\c)$ holds iff $\bigvee_{\c'\in \mathbb{T}_1} \tau_{\ell}(\c) =
  \c'$ holds. Let $\mathbb{T}_2 \defeq \{\c \in \mathbb{T}_1 :
  \#_{\f}(\c) \leq h(\f) \text{ for all } \f \in \F\}$. It follows
  from \Cref{lem:formthreshold} that $\psi(\c)$ holds iff
  $\bigvee_{\c'\in \mathbb{T}_2} \sigma(\tau_\ell(\c)) = \c'$ holds.

  The latter is an infinite disjunction. Let us make it
  finite. Observe that if $\c \reach{*} \c'$ and $\overline{\c} \equiv
  \c$ hold, then there exists $\overline{\c'} \equiv \c'$ such that
  $\overline{\c} \reach{*} \overline{\c'}$. Moreover, note that
  equivalent configurations have the same output as they share the
  same active states. Indeed, $\c \equiv \overline{\c}$ iff
  $\bigwedge_{\f\in\F} \#_{\f}(\c) = \#_{\f}(\overline{\c})$. Hence,
  for every initial configuration $\c \equiv \overline{\c}$, we have
  $\psi(\c) = \psi(\overline{\c})$. Let $\mathbb{T}_2 \slash {\equiv}$
  be the set of all equivalence classes of $\equiv$ on $\mathbb{T}_2$,
  and let $\mathbb{T}_3$ be a set that contains one representative
  configuration per equivalence class of $\mathbb{T}_2 \slash
  {\equiv}$. It is readily seen that $\psi(\c)$ holds iff
  $\bigvee_{\c'\in \mathbb{T}_3} \sigma(\tau_\ell(\c)) \equiv \c'$ holds.

  Let us argue that $\mathbb{T}_3$ is finite. Let $\F_\ell \defeq \{\f
  \neq \0 : \f(q) \leq \ell \text{ for all } q \in Q\}$. For every
  configuration $\c \in \mathbb{T}_1$, each form $\f$ with
  $\#_{\f}(\c) > 0$ belongs to $\F_\ell$. As $\mathbb{T}_2 \subseteq
  \mathbb{T}_1$, this also holds for configurations of
  $\mathbb{T}_2$. Given $\c \in \mathbb{T}_2$, we have $\#_{\f}(\c)
  \leq h(\f)$ for all $\f \in \F_\ell$, and $\#_{\f}(\c) = 0$ for all $\f
  \notin \F_\ell$. Thus, as $\F_\ell$ is finite, we conclude that
  $\mathbb{T}_3$ is finite.

  Let us now exploit our observations to express $\psi$ as an interval
  predicate. Let us fix some $\c' \in \mathbb{T}_3$. It suffices to
  explain how to express ``$\sigma(\tau_\ell(\c)) \equiv \c'$''. Indeed,
  as $\mathbb{T}_3$ is finite, we can conclude by taking the finite
  disjunction $\bigvee_{\c'\in \mathbb{T}_3} \sigma(\tau_\ell(\c)) \equiv
  \c'$.

  For every form $\f \in \F_\ell$, let $\mathrm{lt}(\f) \defeq \{q \in Q
  : \f(q) < \ell\}$, $\mathrm{eq}(\f) \defeq \{q \in Q : \f(q) = \ell\}$ and
  \[
  \varphi_{\f, d}(\c) \defeq
  \bigwedge_{\mathclap{q \in \mathrm{lt}(\f)}} (\c(d, q) = \f(q)) \land
  \bigwedge_{\mathclap{q \in \mathrm{eq}(\f)}} (\c(d, q) \geq \f(q)).
  \]
  Observe that $\varphi_{\f, d}(\c)$ holds iff $\tau_\ell(\c)(d) = \f$.
  
  For every $\f \in \F_\ell$ such that $\#_{\f}(\c') < h(\f)$, we define
  this formula, where $n \defeq \#_{\f}(\c')$:
  \begin{align*}
    \psi_{\f}(\c) \defeq \dot\exists d_1, d_2, \ldots, d_n \in\D : \bigwedge_{i=1}^n \varphi_{\f, d_i}(\c) \land
    \lnot\dot\exists d_1, d_2, \ldots, d_{n+1} \in \D : \bigwedge_{i=1}^{n+1} \varphi_{\f, d_i}(\c).
  \end{align*}
  For every $\f \in \F_\ell$ such that $\#_{\f}(\c') = h(\f)$, we define
  this formula, where $n \defeq \#_{\f}(\c')$:
  \begin{align*}
    \psi_{\f}(\c) \defeq \dot\exists d_1, d_2, \ldots, d_n \in\D : & \bigwedge_{i=1}^n \varphi_{\f, d_i}(\c).
  \end{align*}
  Observe that $\psi_{\f}$ is either a simple interval predicate or a
  Boolean combination of two simple interval predicates. Note that
  $\psi_{\f}(\c)$ holds iff $\#_{\f}(\sigma(\tau_\ell(\c))) =
  \#_{\f}(\c')$ holds. This means that $\bigwedge_{\f \in \F_\ell}
  \psi_{\f}(\c)$ holds iff $\sigma(\tau_\ell(\c)) \equiv \c'$ holds, and
  hence we are done.
\end{proof}

\subsection{An IO protocol for simple interval predicates}\label{sec:interval}
As Boolean combinations can be implemented (see end of
\Cref{ssec:udpp}), it suffices to describe a protocol for a simple
interval predicate of the form~\eqref{eq:int:pred}. We refer to each
$i \in [1..n]$ as a \emph{role}. In the forthcoming set of states $Q$,
we associate to each $q \in Q$ an element $\init{q} \in [1..m]$. Each agent's element is set through the input; \eg~an agent mapped from symbol $(\tokone, x_1)$ is initially in a state $q$ such that $\init{q} = 1$. Let
$Q_j \defeq \{q \in Q : \init{q} = j\}$. For any two configurations $\c_a$ and $\c_b$ of an execution, any datum $d\in\D$ and any element $j\in[1..m]$, the invariant $\c_a(d, Q_j) = \c_b(d, Q_j)$ holds. We say that $d \in \D$
\emph{matches} role $i$ in configuration $\c$ if $\c(d, Q_j) \in T(i,
j)$ holds for all $j \in [1..m]$. Let $r \defeq \max(r_1, \ldots, r_n)
+ 1$, where
\[
r_i \defeq \max(\{\min T(i, j) : j \in [1..m]\} \cup \{\max T(i, j) :
j \in [1..m], \sup T(i, j) < \infty\}).
\]
Agents will not need to count beyond value $r$ to decide whether a
role is matched.

As for the majority protocol, our simple interval protocol works in
stages, each one being necessary to ensure properties and invariants
for the subsequent stages. In the \emph{election stage}, a unique
controller for the population and a single leader per datum of the
support are selected; the former seeks to distribute a set of roles to the latter.

All agents contribute to the tallying of their immutable element $j$
through the \emph{counting stage}. This is done using the ``tower
method'' described in~\cite{AAER07}, whereby two agents of the same
datum, element \emph{and} value meet and allow one of the two agents
to increment its value. The maximal value computed in that manner is
subsequently communicated to the (unique) datum leader.

Once the leaders carry correct counts for each element of their
respective datum, they undertake roles that they match in the
\emph{distribution stage}. These roles can be swapped for other roles
(as long as requirements are met) through a process of interrogating
the controller. The controller is constantly notified of selected
roles and updates its list of tasks accordingly.

If a fully assigned task list is obtained by the controller, it
spreads a $\true$-output throughout the population in what we call the
\emph{output propagation stage}. If that is not possible, leaders are
in a consistent state of trial-and-error for their role assignments,
ultimately failing to completely fill the task list, leaving the
controller free to propagate its $\false$-output.

\newcommand{\cmark}{\text{\scriptsize\faCheck}}
\newcommand{\xmark}{\text{\scriptsize\faTimes}}

\begin{example}\label{ex:int}
  Consider $n = m = 2$ with $T(1, 1) \defeq [2..\infty)$,
    $T(1, 2) \defeq [0..4]$, $T(2, 1) \defeq \N$, $T(2, 2) \defeq
    [1..\infty)$. Let $\mat{M} \defeq \multiset{(\tokone, x_1),
        (\tokone, x_1), (\tokone, x_1), (\tokone, x_2), (\toktwo,
        x_1), (\tokthree, x_2)}$. Note that $r = 5$. Datum ``\tokone''
      could match roles~1 and~2, ``\toktwo'' cannot match any role, and
      ``\tokthree\!\!'' could match role~2.

  After executing the protocol for a while, we may end up with the
  configuration illustrated in the table below. The third, fourth,
  fifth and sixth agents contain the correct value for their datum and
  element: $\mat{M}(\tokone, x_1) = 3$ and $\mat{M}(\tokone, x_2) =
  \mat{M}(\toktwo, x_1) = \mat{M}(\tokthree, x_2) = 1$. The second
  agent has been elected controller. The last three agents have been elected their respective datum's
  leader and have collected the correct counts for each element. Either the $\tokone$-leader or the $\tokthree$-leader (possibly both) has notified the controller that they play role $2$.
    
  \begin{center}
      \begin{tabular}{ccccccc}
        \toprule
        
        input & $\mathrm{val}$ & 
        $\mathrm{lead}$ & $\mathrm{ctrl}$ & 
        $\mathrm{role}$ & $\mathrm{count}$ of $[\# x_1, \# x_2]$ & 
        $\mathrm{task}$ list for [role $1$, role $2$] \\
        
        \midrule
        
        $(\tokone, x_1)$ & $1$ & & & & & \\
        $(\tokone, x_1)$ & $2$ & & \cmark & & & $[\xmark, \cmark]$ \\
        $(\tokone, x_1)$ & $\bf 3$ & & & & & \\
        $(\tokone, x_2)$ & $\bf 1$ & \cmark   & & $2$ & $[3, 1]$ & \\
        $(\toktwo, x_1)$ & $\bf 1$ & \cmark   & &     & $[1, 0]$ & \\
        $(\tokthree, x_2)$ & $\bf 1$ & \cmark & & $2$ & $[0, 1]$ & \\
        
          \bottomrule
      \end{tabular}
  \end{center}

  The $\tokone$-leader may change its mind and decide to play
  role~$1$ after noticing the controller does not have its task~$1$ assigned. This switches its role to $-2$. Once the $\tokone$-leader
  notifies the controller, its role is set to $0$ and (in doubt) the
  controller considers that role $2$ is not assigned anymore. The
  $\tokone$-leader then changes its role to $1$. Eventually the
  $\tokone$-leader and $\tokthree$-leader notify the controller that
  roles $1$ and $2$ are taken. This is summarized in
  these three snapshots:
  \begin{center}
    \newcommand{\tabellipsis}{\hspace{-5pt}{\scriptsize $\cdots$}\hspace{-5pt}}
    \hspace{5pt}
    \begin{tabular}{ccccccc}
      \toprule
      
      input & \tabellipsis & 
      $\mathrm{role}$ & \tabellipsis & 
      $\mathrm{task}$ \\
      
      \midrule
      
      $(\tokone, x_1)$ & & & & \\
      $(\tokone, x_1)$ & & & & $[\xmark, \xmark]$ \\
      $(\tokone, x_1)$ & & & & \\
      $(\tokone, x_2)$ & & $-2$ & & \\
      $(\toktwo, x_1)$ & & & & \\
      $(\tokthree, x_2)$ & & $2$ & & \\
      
      \bottomrule
    \end{tabular}\hspace*{20pt}
    \begin{tabular}{cccccc}
      \toprule
      
      \tabellipsis & 
      $\mathrm{role}$ & \tabellipsis & 
      $\mathrm{task}$ \\
      
      \midrule
      
      & & & \\
      & & & $[\xmark, \xmark]$ \\
      & & & \\
      & $0$ & & \\
      & & & \\
      & $2$ & & \\
      
      \bottomrule
    \end{tabular}\hspace*{20pt}
    \begin{tabular}{cccccc}
      \toprule
      
      \tabellipsis & 
      $\mathrm{role}$ & \tabellipsis & 
      $\mathrm{task}$ \\
      
      \midrule
      
      & & & \\
      & & & $[\cmark, \cmark]$ \\
      & & & \\
      & $1$ & & \\
      & & & \\
      & $2$ & & \\
      
      \bottomrule
    \end{tabular}\raisebox{-1.7cm}{\qed}
  \end{center}  
\end{example}
\smallskip

Note that while we rely on stages to describe our protocol, the distributed nature of the model implies that some stages may interfere with others. Therefore, we present here a list of potential problems and the way our protocol fixes them.
\begin{itemize}
  \item While leader election is straightforward, role assignment for leaders can happen at any time before the actual leader is elected. This could lead to the controller being notified of a role assignment for which no current leader is assigned. Thus, when an agent loses its leadership status, it reverts its role to a negative value, meaning it will have to inform the controller of the change before returning to a passive value.
  
  \item A leader may take a role before having the correct counts. We provide a reset mechanism through which the leader falls into a ``negative role''. This forces it to then contact the controller and rectify the situation.
  
  \item A leader may have previously taken a role before realizing it does not actually meet the requirements. The leader is then forced to convey its mistake to the controller. But the controller it notifies may not ultimately be the population's controller. Therefore, after losing the controller status, an agent has to go to a negative controller state, meaning it must reset the controller's tasks before reverting to a passive value.
  
  \item There may be many leaders with the same role. To prevent deadlocks, we allow a leader to self-reassign to a new role if it notices the controller does not have the task filled.
\end{itemize}

\subsubsection{States}

The set of states is defined as
$Q \defeq \{\false, \true\}^{n+2} \times [1..m] \times
[1..r]^{m+1} \times [-n..n] \times \{-1, 0, 1\}$. For the sake of
readability, we specify and manipulate states with these macros:

\begin{center}
  \begin{tabular}{rll}
    \toprule 
    \emph{name} & \emph{values for $q\in Q$} & \emph{values for $q\in I$} \\
    \midrule
    $\init{q}$ & $[1..m]$ & $j \in [1..m]$ \\
    $\val{q}$  & $[1..r]$ & $1$ \\
    $\out{q}$  & $\{\false, \true\}$ & $\false$ \\
    \midrule
    $\lead{q}$ & $\{\false, \true\}$ & $\true$ \\
    $\role{q}$ & $[-n..n]$  & $0$ \\
    $\cnt{\ell}{q}$ & $[1..r]$ &
    $1$ if $\ell = j$, $0$ otherwise \\
    \midrule
    $\ctrl{q}$ & $\{-1, 0, 1\}$ & $1$ \\
    $\task{i}{q}$ & $\{\false, \true\}$ & $\false$ \\
    \bottomrule
  \end{tabular}
\end{center}

The input mapping is defined by $\iota(x_j) \defeq p_j$, where $p_j$
is the unique state of $I$ with $\init{p_j} = j$. Informally,
$\init{q} = j$ indicates that the agent holds the $j$-th element; $\val{q}$ is the
current tally of element $\init{q}$ for the datum of the agent;
$\lead{q}$ and $\ctrl{q}$ respectively indicate whether an agent is a
datum leader or a controller; $\role{q}$ indicates the role for a
leader; $\cnt{j}{q}$ allows a datum leader to maintain the highest
count currently witnessed for element $j$; $\task{i}{q}$ allows the
controller to maintain a list of the currently matched roles; and
$\out{q}$ is the current belief of an agent on the output of the
protocol.

Note that the rules presented in this section are used to succinctly
describe transitions. A single rule may induce several
transitions. Furthermore, for the sake of brevity, we mark rules
allowing \emph{mirror transitions} with an asterisk ($*$) next to the
rule number. Mirror transitions are transitions in which an agent may
observe its own state and react accordingly. Thus, a $*$-rule
generating transitions whose precondition formula is $A(p) \land B(q)$
also generates a transition whose precondition is $A(q) \land B(q)$,
effectively making state $p$ the state of any ``dummy agent''. Note
that $q$ is still the only state to be updated to $q'$.

\subsubsection{Leader and controller election}

The first two rules are meant to elect a unique leader per datum
present in the population, and a unique global controller for the
whole population. For the remainder of the section, let us fix a fair
execution $\c_0 \c_1 \cdots$ where $\c_0$ is initial.

\begin{center}
  \begin{tabular}{cccc}
    \toprule
    \emph{rule}
    & \emph{state precondition}
    & \emph{color precondition}
    & \emph{state update} \\
    
    \midrule\phantomsection\label{int1}
    \multirow{2}{*}{(1)}
    & $\lead{p}$
    & \multirow{2}{*}{$d_1 = d_2$}
    & $\role{q'} = -|\role{q}|$ \\
    & $\lead{q}$
    &
    & $\lnot\lead{q'}$ \\

    \midrule\phantomsection\label{int2}
    \multirow{2}{*}{(2)}
    & $\ctrl{p} = 1$
    & \multirow{2}{*}{\none}
    & \\
    & $\ctrl{q} = 1$
    &
    & $\ctrl{q'} = -1$ \\
    \bottomrule
  \end{tabular}
\end{center}

Note that rule~\intrule{1} guarantees that the agent losing leadership
has its role set to a non-positive value. Similarly, rule~\intrule{2}
pushes the non-controller into a temporary intermediate state for its
controller value, \ie\ $-1$. Let $Q_L \defeq \{q \in Q : \lead{q}\}$
and $Q_C \defeq \{q \in Q :\ctrl{q} = 1\}$. The following lemma
identifies the end of both elections.

\begin{restatable}{lemma}{lemldrctrl}\label{lem:ldr:ctrl}
  There exists $\tau \in \N$ such that
  $\card{Q_C}{\c_{\tau}} = \card{Q_C}{\c_{\tau+1}} = \cdots = 1$, and
  $\card{Q_L}{\c_{\tau}(d)} = \card{Q_L}{\c_{\tau+1}(d)} = \cdots = 1$
  for every $d \in \D$.
\end{restatable}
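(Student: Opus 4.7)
I would prove the two stabilization claims separately using the same monotonicity-plus-fairness template, then union the resulting (finitely many) thresholds. Throughout, recall that data are never modified by transitions, so $\supp{\c_i} \subseteq \supp{\c_0}$ for every $i$, making the set of configurations reachable from $\c_0$ finite; the leader claim is to be read as holding for each $d \in \supp{\c_0}$, the counts being trivially $0$ outside this finite set.

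\emph{Step 1 (monotonicity).} I would first check, by inspection of the complete rule set (rules~\intrule{1}--\intrule{2} above together with every rule in the forthcoming subsections), that no transition ever changes $\lead{q}$ from $\false$ to $\true$, nor sets $\ctrl{q}$ to $1$ from some other value. Combined with the fact that every initial state has $\lead{p} = \true$ and $\ctrl{p} = 1$, this makes both $i \mapsto \card{Q_C}{\c_i}$ and $i \mapsto \card{Q_L}{\c_i(d)}$ non-increasing. Moreover, rule~\intrule{1} decreases the leader count of the common datum of the two interacting agents by exactly $1$, and rule~\intrule{2} decreases the controller count by exactly $1$; these are the only rules that can alter either quantity.

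\emph{Step 2 (lower bound and fairness).} Since rule~\intrule{1} (resp.~\intrule{2}) requires two distinct agents carrying $\lead{\cdot}$ with a common datum (resp.\ two agents with $\ctrl{\cdot} = 1$), a count equal to $1$ is never decreased; thus each sequence stabilizes at some value $\geq 1$. Suppose toward contradiction that $\card{Q_C}{\c_i}$ stabilizes at some $k \geq 2$. The set of configurations reachable from $\c_0$ being finite, some $\widehat{\c}$ with $\card{Q_C}{\widehat{\c}} = k$ occurs as $\c_i$ for infinitely many $i$; from $\widehat{\c}$, rule~\intrule{2} is enabled and yields a configuration $\c'$ with $\card{Q_C}{\c'} = k - 1$. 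By monotonicity, no $\c_j$ can equal $\c'$, yet $\c'$ is reachable from infinitely many $\c_i$, contradicting fairness. Hence $\card{Q_C}{\c_i}$ eventually equals $1$, and a symmetric argument (restricted to two leaders of a common datum $d$) gives the analogue for $\card{Q_L}{\c_i(d)}$, for each $d \in \supp{\c_0}$.

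\emph{Conclusion and main obstacle.} Taking $\tau$ to be the maximum of the controller threshold and the (finitely many) data-leader thresholds, one from each $d \in \supp{\c_0}$, yields the claim. The main obstacle is the monotonicity step: verifying that no later rule in the protocol resurrects a lost leader or controller requires a careful case analysis over every transition involving the $\lead{\cdot}$ and $\ctrl{\cdot}$ macros in the forthcoming subsections; once this is granted, the fairness argument is essentially routine.
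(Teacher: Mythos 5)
Your proof is correct and follows essentially the same route as the paper's: both establish that the leader and controller counts only ever change via rules~\intrule{1} and~\intrule{2}, are therefore non-increasing and never drop below one, and must decrease under fairness whenever they exceed one, after which the thresholds are combined. Your per-datum treatment of the leader count and your explicit appeal to the finiteness of the reachable configuration set when invoking fairness are, if anything, slightly more careful than the paper's argument, which reasons about the global leader count.
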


Let $\alpha$ denote the minimal value $\tau$ given by
\Cref{lem:ldr:ctrl}, which we refer to as the end of the election
stage.

\subsubsection{Element count by datum}

The next rules allow to count how many agents with a common datum hold
the same element. This count is ultimately communicated to the datum
leader. Given $d \in \D$ and $\tau \in \N$, we say that a state $q \in
Q$ is \emph{$(d, j)$-valid} if $\init{q} = j$ and $\val{q} =
\min(\c_\tau(d, Q_j), r)$.

\begin{center}
  \begin{tabular}{cccl}
    \toprule
    \emph{rule}
    & \emph{state precondition}
    & \emph{color precondition}
    & \emph{state update} \\

    \midrule\phantomsection\label{int3}
    \multirow{2}{*}{(3)}
    & $\mathmakebox[35pt][r]{\init{p} =}~\mathmakebox[35pt][l]{\init{q}}$
    & \multirow{2}{*}{$d_1 = d_2$}
    & \\
    & $\mathmakebox[35pt][r]{\val{q} =}~\mathmakebox[35pt][l]{\val{p} < r}$
    &
    & $\val{q'} = \val{q} + 1$ \\

    \midrule\phantomsection\label{int4}
    \multirow{3}{*}{(4)*}
    & $\mathmakebox[60pt][r]{\cnt{\init{p}}{q}}\mathmakebox[30pt][l]{~< \val{p}}$
    & \multirow{3}{*}{$d_1 = d_2$}
    & $\cnt{\init{p}}{q'} = \val{p}$\\
    & $\mathmakebox[60pt][r]{\lead{q}}\mathmakebox[30pt][l]{}$
    &
    & $\texttt{if } (\role{q} > 0 \land {}$ \\

    &
    &
    & \hspace{1.7cm}$\val{p} \notin T(\role{q}, \init{p}))$\texttt{:} \\
    &
    &
    & $\quad \role{q'} = -\role{q}$ \\
    \bottomrule
  \end{tabular}
\end{center}

Observe another correction mechanism; rule~\intrule{4} guarantees that a leader with an assigned role $i > 0$ verifies that it can still assume role $i$ after updating its count. The following lemmas explain that the correct counts are eventually provided to each datum leader.

\begin{restatable}{lemma}{lemcount}\label{lem:count}
  There exists $\tau \in \N$ such that, for
  every $\tau' \geq \tau$, $d \in \supp{\c_{\tau'}}$ and $j \in
  [1..m]$, if $\c_0(d, Q_j) > 0$, then $\c_{\tau'}(d, q) > 0$ holds
  for some $(d, j)$-valid state $q$.
\end{restatable}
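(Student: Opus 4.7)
Since no rule modifies the datum of an agent nor the $\init$ component of any state, $\supp{\c_i} = \supp{\c_0}$ and $\c_i(d, Q_j) = \c_0(d, Q_j) \defeq k_{d,j}$ for every $i$, so $(d, j)$-validity reduces to the time-independent condition ``$\init{q} = j$ and $\val{q} = \min(k_{d,j}, r)$''. Moreover, rule~\intrule{3} is the only rule that modifies $\val$; it strictly increases $\val$, and its precondition $\val{p} < r$ caps $\val$ at $r$. In particular, $\val$ is monotonically non-decreasing for every individual agent.

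Fix $d \in \supp{\c_0}$ and $j \in [1..m]$ with $k \defeq k_{d,j} > 0$, and restrict attention to the $k$ agents with datum $d$ and $\init = j$. Their $\val$-values form a profile $(a_1, \ldots, a_r)$ evolving under rule~\intrule{3} as $a_v \to a_v - 1$, $a_{v+1} \to a_{v+1} + 1$, conditional on $a_v \geq 2$ and $v < r$; this is precisely the tower method of~\cite{AAER07}. Let $n_v$ be the total number of firings at level $v$. Each firing at level~$1$ requires $a_1 \geq 2$ and strictly decreases $a_1$, so $n_1 \leq k - 1$. For $v \geq 2$, arrivals at level $v$ correspond to firings at level $v - 1$ while each departure requires $a_v \geq 2$, giving $n_v \leq n_{v-1} - 1$. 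Inductively, $n_v \leq k - v$. This has two key consequences: \emph{(i)}~for $k \leq r$, the number of agents at value $k$ at any time is bounded by $n_{k-1} \leq 1$, so rule~\intrule{3} never fires at level $k$ and no agent is ever promoted past level $\min(k, r)$; and \emph{(ii)}~from any reachable configuration, we can keep firing rule~\intrule{3} on this $(d, j)$-restricted population until $a_v \leq 1$ for all $v < r$, at which point $a_r \geq k - (r - 1) \geq 1$ (if $k \geq r$), or the profile is $(1, 1, \ldots, 1, 0, \ldots, 0)$ with $k$ leading ones (if $k < r$). In either case, some agent then has $\val = \min(k, r)$ and, by~\emph{(i)} together with the cap at $r$, retains this value forever.

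Rule~\intrule{3} touches only $\val$ and only within a single $(d, j)$-class, so the climbing above can be scheduled independently across distinct $(d, j)$-pairs without interfering with any other rule. Hence from every reachable configuration one can reach a configuration in the set $G$ of \emph{good} configurations simultaneously satisfying the target property for every $(d, j)$ with $k_{d,j} > 0$. The reachable set $R$ from $\c_0$ is finite (bounded data support $\supp{\c_0}$, finite state space, fixed population size), so $G \cap R$ is finite. Since every $\c_i$ reaches some element of $G \cap R$, some specific $\c^\star \in G \cap R$ is reachable from infinitely many $\c_i$'s; fairness then yields some $\tau$ with $\c_\tau = \c^\star$, and persistence ensures the property holds at every $\tau' \geq \tau$.

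The main obstacle is the combinatorial bound $n_v \leq k - v$ for the tower method: without it, nothing prevents some agent's $\val$ from drifting above $\min(k_{d,j}, r)$ when $k_{d,j} < r$, which would break persistence of $(d, j)$-validity. Once this bound is established, the rest is a routine combination of monotonicity, the independence of rule~\intrule{3} across $(d, j)$-classes, and fairness applied to the finite set $G \cap R$.
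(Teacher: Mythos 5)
Your proof is correct, but it takes a genuinely different route from the paper's. The paper proves, by induction on the level $k$ with a fairness argument invoked at each level, the stronger stabilization statement that for every $k < \min(\c_0(d,Q_j)+1, r)$ there is eventually exactly one agent of datum $d$ permanently at value $k$, with the remaining $\c_0(d,Q_j)-k$ agents strictly above; the existence and persistence of a $(d,j)$-valid agent is then read off at the top level. You instead isolate an execution-independent combinatorial invariant — the bound $n_v \leq k - v$ on the total number of rule~\intrule{3} firings at level $v$, obtained from the fact that a promotion at level $v$ requires two agents there — which immediately yields the ``no overshoot'' property (no agent of the class ever has $\val{q} > \min(k,r)$, so $(d,j)$-validity is persistent once attained) and reduces the lemma to the reachability of a single good configuration, settled by one pigeonhole-plus-fairness argument over the finite reachable set. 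The paper's route yields more information about the limiting value profile (which it does not actually need beyond the top level) at the cost of invoking fairness once per level per $(d,j)$-class, and it leaves the overshoot issue somewhat implicit; your route makes that issue an explicit invariant, which is the right thing to pin down, and uses fairness only once. Both arguments ultimately rest on the same tower-method observation.
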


\begin{restatable}{lemma}{lemldrcnt}\label{lem:ldrcnt}
  There exists $\tau \geq \alpha$ such
  that, for every $\tau' \geq \tau$, $d \in \supp{\c_{\tau'}}$, $j
  \in [1..m]$ and $q \in \act{\c_{\tau'}(d)}\cap Q_L$, it is the
  case that $\cnt{j}{q} = \min(\c_{\tau'}(d, Q_j), r)$.
\end{restatable}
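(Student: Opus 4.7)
The plan is to combine Lemma~\ref{lem:ldr:ctrl} with Lemma~\ref{lem:count} and a standard fairness argument. First I would observe that no rule modifies $\init{q}$ for any state $q$, so both $\supp{\c_\tau}$ and $\c_\tau(d, Q_j)$ are invariant along the execution; set $v_j(d) \defeq \min(\c_0(d, Q_j), r)$. By Lemma~\ref{lem:ldr:ctrl}, starting at time $\alpha$, each datum $d \in \supp{\c_0}$ has exactly one agent in a state of $Q_L$, and this uniqueness is preserved forever since the only rule removing leadership, \intrule{1}, requires two co-datum leaders. By Lemma~\ref{lem:count}, one can choose $\tau_1 \geq \alpha$ beyond which every pair $(d, j)$ with $\c_0(d, Q_j) > 0$ always admits some occupied $(d,j)$-valid state in $\c_{\tau'}$.

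Next I would split on whether $v_j(d) = 0$. If $\c_0(d, Q_j) = 0$, then no agent of datum $d$ has $\init$-value $j$, so rule~\intrule{4} cannot modify the value $\cnt{j}{q_d}$ of the datum-$d$ leader $q_d$; moreover, since that leader's $\init{q_d}$ is not $j$, the initial value $\cnt{j}{q_d} = 0 = v_j(d)$ is preserved throughout. If $\c_0(d, Q_j) > 0$, I would use two monotonicity observations: the leader's $\cnt{j}{q_d}$ can only increase (rule~\intrule{4} is the only rule touching it, and only strictly upward), and by a tower-method induction on rule~\intrule{3} the value $\val{p}$ of any agent with datum $d$ and $\init{p} = j$ never exceeds $v_j(d)$. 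Together with the fact that rule~\intrule{4} writes to $\cnt{j}{q_d}$ a previously observed $\val{p}$, this pins $\cnt{j}{q_d}$ in $[0, v_j(d)]$ at all times.

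The fairness step is the heart of the argument: as long as $\cnt{j}{q_d} < v_j(d)$ at some time after $\tau_1$, Lemma~\ref{lem:count} provides a $(d,j)$-valid partner, so the rule~\intrule{4} interaction raising $\cnt{j}{q_d}$ to $v_j(d)$ is enabled. Since this remains true in every subsequent configuration, a configuration with $\cnt{j}{q_d} = v_j(d)$ is reachable from every reachable configuration, so fairness forces it to be visited; monotonicity together with the upper bound $v_j(d)$ makes the equality invariant afterwards. Taking the maximum stabilization time over the finitely many pairs $(d, j) \in \supp{\c_0} \times [1..m]$ yields the claimed $\tau$. The main subtlety to spell out carefully will be the tower-method bound $\val{p} \leq v_j(d)$, which is conceptually standard but must be proved by induction on the number of rule~\intrule{3} applications; everything else reduces to standard fairness bookkeeping.
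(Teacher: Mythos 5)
Your proposal is correct and follows essentially the same route as the paper's proof: both reduce the claim to the monotonicity of $\cnt{j}{\cdot}$ under rule~\intrule{4}, invoke \Cref{lem:count} to supply a $(d,j)$-valid partner whenever the leader's count is still too low, and close with a fairness argument, handling the case $\c_0(d, Q_j) = 0$ separately. Your explicit tower-method upper bound $\val{p} \leq \min(\c_0(d, Q_j), r)$ is a detail the paper leaves implicit (it only asserts that no state can carry a higher count), but it does not change the structure of the argument.
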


Let $\tau'$ and $\tau''$ denote the minimal values $\tau$ given by
\Cref{lem:count,lem:ldrcnt}. From now on, let $\beta \defeq
\max(\tau', \tau'')$.

\subsubsection{Role distribution and task tracking}

The following rules assign roles to leaders and allow leaders to reset
their roles when possible, therefore preventing deadlocks. In
rule~\intrule{5}, variable $i$ can take any value from $[1..n]$.
\begin{center}
  \begin{tabular}{cccc}
    \toprule
    \emph{rule}
    & \emph{state precondition}
    & \emph{color precondition}
    & \emph{state update} \\

    \midrule\phantomsection\label{int5}
    \multirow{3}{*}{(5)}
    & $\lead{q}$
    & \multirow{3}{*}{\none}
    & \\
    & $\role{q} = 0$
    &
    & $\role{q'} = i$ \\
    & $\bigwedge_{j \in [1..m]} \cnt{j}{q} \in T(i, j)$
    &
    & \\

    \midrule\phantomsection\label{int6}
    \multirow{5}{*}{(6)*}
    & $\ctrl{p}$
    & \multirow{5}{*}{\none}
    & \\
    & $\lead{q}$
    &
    & \\
    & $\role{q} = i > 0$
    & 
    & $\role{q'} = -i$ \\
    & $\bigvee_{i' \in [1..n] \setminus \{i\}} \big(\lnot\task{i'}{p} \land \hspace{44pt}$
    &
    & \\
    & $\hspace{43pt}\bigwedge_{j \in [1..m]} \cnt{j}{q} \in T(i', j)\big)$
    &
    & \\
    \bottomrule
  \end{tabular}
\end{center}

This induces the following result, informally meaning that if a leader
has taken a role, then it currently \emph{believes} it can fill this
role.

\begin{restatable}{lemma}{lemcorrectrole}\label{lem:correctrole}
  For every $\tau \in \N$, $j \in [1..m]$ and $q \in
  \act{\c_\tau} \cap Q_L$ such that $\role{q} > 0$, it is the
  case that $\cnt{j}{q} \in T(\role{q}, j)$.
\end{restatable}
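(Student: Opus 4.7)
The plan is to prove the invariant by induction on $\tau$. The base case $\tau = 0$ is immediate: every initial state $q \in I$ satisfies $\role{q} = 0$ by construction of the input mapping, so the hypothesis $\role{q} > 0$ is vacuous and there is nothing to check. For the inductive step, assuming the invariant holds in $\c_\tau$, I would fix a transition $\c_\tau \reach{t} \c_{\tau+1}$ and note that agents whose state is unchanged by $t$ preserve the invariant by the induction hypothesis. Hence it suffices to examine the (at most two) agents whose state is modified, and case-split on which of rules~\intrule{1}--\intrule{6} produced $t$.

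Most cases are essentially syntactic. Rules~\intrule{2} and~\intrule{3} only touch $\ctrl{\cdot}$ and $\val{\cdot}$ respectively, leaving $\lead{\cdot}$, $\role{\cdot}$ and every $\cnt{j}{\cdot}$ untouched, so the invariant is inherited from $\c_\tau$. Rule~\intrule{1} either removes leadership from $q$ (taking it out of $Q_L$) or sets $\role{q'} = -|\role{q}| \leq 0$; either way the hypothesis $\role{q'} > 0$ fails. Rule~\intrule{6} forces $\role{q'} = -i < 0$, again making the hypothesis vacuous. Rule~\intrule{5} is the unique rule that turns a non-positive role into a positive one: it fires only under the explicit precondition $\bigwedge_{j \in [1..m]} \cnt{j}{q} \in T(i, j)$, and leaves every $\cnt{j}{\cdot}$ untouched, so the invariant for $q'$ is precisely the enabling condition of the rule.

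The delicate case, and the one the whole mechanism of rule~\intrule{4} is designed to handle, is when a leader's count genuinely changes while its role could still be positive. I would argue as follows: if $\role{q} \leq 0$ beforehand, then either $\role{\cdot}$ is untouched or the conditional flips it to $-\role{q}$, which remains non-positive, so the hypothesis is vacuous. If $\role{q} > 0$ and $\val{p} \in T(\role{q}, \init{p})$, then the only modified count $\cnt{\init{p}}{q'} = \val{p}$ lies in the required interval, the counts for $j \neq \init{p}$ are unchanged (hence still satisfy the interval by induction), and $\role{\cdot}$ is preserved. Otherwise $\role{q} > 0$ and $\val{p} \notin T(\role{q}, \init{p})$, and the conditional explicitly assigns $\role{q'} = -\role{q} < 0$, making the hypothesis vacuous once more. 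The $*$-annotation on rule~\intrule{4} does not introduce a genuinely new case: it merely allows $q$ to observe its own state, and the same dichotomy on $\val{p}$ applies verbatim. The key subtlety to highlight is that the conditional in rule~\intrule{4} is exactly what prevents the count increase from silently breaking the interval constraint attached to a still-positive role, which is precisely why the invariant can be stated in this strong, unconditional form.
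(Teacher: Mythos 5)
Your proposal is correct and follows essentially the same route as the paper's proof, which likewise rests on the two observations that rule~\intrule{5} is the only rule assigning a strictly positive role (and its precondition is exactly the claimed invariant) and that rule~\intrule{4} is the only rule changing a leader's counts (with its conditional flipping the role to a negative value precisely when the new count would violate the interval). The only nitpick is that your case split nominally stops at rule~\intrule{6} and thus skips rule~\intrule{8}, which does modify $\role{\cdot}$ --- but only to $0$, so the hypothesis $\role{q'} > 0$ fails there and the invariant is trivially preserved.
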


These rules allow to update the controller's task list and reset roles
when needed:

\begin{center}
  \begin{tabular}{cccc}
    \toprule
    \emph{rule}
    & \emph{state precondition}
    & \emph{color precondition}
    & \emph{state update} \\
    
    \midrule\phantomsection\label{int7}
    \multirow{2}{*}{(7)*}
    & $\role{p} \neq 0$
    & \multirow{2}{*}{\none}
    & $\task{|\role{p}|}{q'} = (\role{p} > 0)$ \\
    & $\ctrl{q} = 1$
    &
    & \\

    \midrule\phantomsection\label{int8}
    \multirow{3}{*}{(8)*}
    & $\ctrl{p} = 1$
    & \multirow{3}{*}{\none}
    & \\
    & $\role{q} < 0$
    &
    & $\role{q'} = 0$ \\
    & $\lnot\task{|\role{q}|}{p}$
    &
    & \\
    \bottomrule
  \end{tabular}
\end{center}
To illustrate how rules~\intrule{5} through~\intrule{8} operate, we give the following example.
\begin{example}
  Recall \Cref{ex:int}, introduced earlier. Consider the configuration of its first snapshot. While we initially gave intuitions on how role reassignment might happen from this specific configuration, we give here a deeper analysis of the important configurations involved in this process.

  \begin{center}
      \begin{tabular}{ccccccc}
        \toprule
        
        input & $\mathrm{val}$ & 
        $\mathrm{lead}$ & $\mathrm{ctrl}$ & 
        $\mathrm{role}$ & $\mathrm{count}$ of $[\# x_1, \# x_2]$ & 
        $\mathrm{task}$ list for [role $1$, role $2$] \\
        
        \midrule
        
        $(\tokone, x_1)$ & $1$ & & & & & \\
        $(\tokone, x_1)$ & $2$ & & \cmark & & & $[\xmark, \cmark]$ \\
        $(\tokone, x_1)$ & $\bf 3$ & & & & & \\
        $(\tokone, x_2)$ & $\bf 1$ & \cmark   & & $2$ & $[3, 1]$ & \\
        $(\toktwo, x_1)$ & $\bf 1$ & \cmark   & &     & $[1, 0]$ & \\
        $(\tokthree, x_2)$ & $\bf 1$ & \cmark & & $2$ & $[0, 1]$ & \\
        
        \bottomrule
      \end{tabular}
  \end{center}

  In the above, the $\tokone$-leader currently believes (rightly so) that it can fill roles $1$ and $2$. Observe that the controller has task $2$ assigned. However, its task $1$ is still unassigned. Therefore, rule~\intrule{6} allows the $\tokone$-leader to initiate its reassignment by setting its role to $-2$ through an interaction with the controller. This leads to the following configuration:
  \begin{center}
    \begin{tabular}{ccccccc}
      \toprule
      
      input & $\mathrm{val}$ & 
      $\mathrm{lead}$ & $\mathrm{ctrl}$ & 
      $\mathrm{role}$ & $\mathrm{count}$ of $[\# x_1, \# x_2]$ & 
      $\mathrm{task}$ list for [role $1$, role $2$] \\
      
      \midrule
      
      $(\tokone, x_1)$ & $1$ & & & & & \\
      $(\tokone, x_1)$ & $2$ & & \cmark & & & $[\xmark, \cmark]$ \\
      $(\tokone, x_1)$ & $\bf 3$ & & & & & \\
      $(\tokone, x_2)$ & $\bf 1$ & \cmark   & & $-2$ & $[3, 1]$ & \\
      $(\toktwo, x_1)$ & $\bf 1$ & \cmark   & &     & $[1, 0]$ & \\
      $(\tokthree, x_2)$ & $\bf 1$ & \cmark & & $2$ & $[0, 1]$ & \\
      
      \bottomrule
    \end{tabular}
\end{center}
Since its role is set to $-2$, the $\tokone$-leader now seeks to inform the controller that it should unassign role $2$ from its task list. This is achieved on their next meeting through rule~\intrule{7}. We then have this next configuration:
\begin{center}
  \begin{tabular}{ccccccc}
    \toprule
    
    input & $\mathrm{val}$ & 
    $\mathrm{lead}$ & $\mathrm{ctrl}$ & 
    $\mathrm{role}$ & $\mathrm{count}$ of $[\# x_1, \# x_2]$ & 
    $\mathrm{task}$ list for [role $1$, role $2$] \\
    
    \midrule
    
    $(\tokone, x_1)$ & $1$ & & & & & \\
    $(\tokone, x_1)$ & $2$ & & \cmark & & & $[\xmark, \xmark]$ \\
    $(\tokone, x_1)$ & $\bf 3$ & & & & & \\
    $(\tokone, x_2)$ & $\bf 1$ & \cmark   & & $-2$ & $[3, 1]$ & \\
    $(\toktwo, x_1)$ & $\bf 1$ & \cmark   & &     & $[1, 0]$ & \\
    $(\tokthree, x_2)$ & $\bf 1$ & \cmark & & $2$ & $[0, 1]$ & \\
    
    \bottomrule
  \end{tabular}
\end{center}

Note that this does not mean that no leader currently has its role set to $2$; indeed, the $\tokthree$-leader still has its role set to $2$. Let us now assume that, immediately after reaching this configuration, the $\tokone$-leader and the controller meet again. Since the $\tokone$-leader observes that the controller no longer has its task $2$ assigned, it can assume that either it unassigned it, some other leader did, or it was never assigned. In any case, it can safely reset its role to $0$ through rule~\intrule{8}, giving us the following configuration:

\begin{center}
  \begin{tabular}{ccccccc}
    \toprule
    
    input & $\mathrm{val}$ & 
    $\mathrm{lead}$ & $\mathrm{ctrl}$ & 
    $\mathrm{role}$ & $\mathrm{count}$ of $[\# x_1, \# x_2]$ & 
    $\mathrm{task}$ list for [role $1$, role $2$] \\
    
    \midrule
    
    $(\tokone, x_1)$ & $1$ & & & & & \\
    $(\tokone, x_1)$ & $2$ & & \cmark & & & $[\xmark, \xmark]$ \\
    $(\tokone, x_1)$ & $\bf 3$ & & & & & \\
    $(\tokone, x_2)$ & $\bf 1$ & \cmark   & & $0$ & $[3, 1]$ & \\
    $(\toktwo, x_1)$ & $\bf 1$ & \cmark   & &     & $[1, 0]$ & \\
    $(\tokthree, x_2)$ & $\bf 1$ & \cmark & & $2$ & $[0, 1]$ & \\
    
    \bottomrule
  \end{tabular}
\end{center}

Observe that the $\tokthree$-leader could have met the controller before the $\tokone$-leader, thereby reassigning role $2$ in the controller's task list and undoing the $\tokone$-leader's work. This would only delay the $\tokone$-leader's role resetting; through fairness, it would not endlessly prevent it.

From this last configuration, since the $\tokone$-leader's role is set to $0$, it is now free to take any role it can fill through rule~\intrule{5}. Let us assume, for the sake of brevity, that it takes on role $1$:

\begin{center}
  \begin{tabular}{ccccccc}
    \toprule
    
    input & $\mathrm{val}$ & 
    $\mathrm{lead}$ & $\mathrm{ctrl}$ & 
    $\mathrm{role}$ & $\mathrm{count}$ of $[\# x_1, \# x_2]$ & 
    $\mathrm{task}$ list for [role $1$, role $2$] \\
    
    \midrule
    
    $(\tokone, x_1)$ & $1$ & & & & & \\
    $(\tokone, x_1)$ & $2$ & & \cmark & & & $[\xmark, \xmark]$ \\
    $(\tokone, x_1)$ & $\bf 3$ & & & & & \\
    $(\tokone, x_2)$ & $\bf 1$ & \cmark   & & $1$ & $[3, 1]$ & \\
    $(\toktwo, x_1)$ & $\bf 1$ & \cmark   & &     & $[1, 0]$ & \\
    $(\tokthree, x_2)$ & $\bf 1$ & \cmark & & $2$ & $[0, 1]$ & \\
    
    \bottomrule
  \end{tabular}
\end{center}
Suppose the $\tokthree$-leader meets the controller before the $\tokone$-leader. Then, rule~\intrule{7} assigns task $2$ in the controller's task list. The $\tokone$-leader can no longer reset its role through rule~\intrule{6} because the controller has task $2$ already assigned. Therefore, when the $\tokone$-leader eventually meets the controller again, it finally assigns task $1$ to its task list via rule~\intrule{7}.
\begin{center}
  \hfill
  \begin{tabular}{ccccccc}
    \toprule
    
    input & $\mathrm{val}$ & 
    $\mathrm{lead}$ & $\mathrm{ctrl}$ & 
    $\mathrm{role}$ & $\mathrm{count}$ of $[\# x_1, \# x_2]$ & 
    $\mathrm{task}$ list for [role $1$, role $2$] \\
    
    \midrule
    
    $(\tokone, x_1)$ & $1$ & & & & & \\
    $(\tokone, x_1)$ & $2$ & & \cmark & & & $[\cmark, \cmark]$ \\
    $(\tokone, x_1)$ & $\bf 3$ & & & & & \\
    $(\tokone, x_2)$ & $\bf 1$ & \cmark   & & $1$ & $[3, 1]$ & \\
    $(\toktwo, x_1)$ & $\bf 1$ & \cmark   & &     & $[1, 0]$ & \\
    $(\tokthree, x_2)$ & $\bf 1$ & \cmark & & $2$ & $[0, 1]$ & \\
    
    \bottomrule
  \end{tabular}\hfill\raisebox{-1.7cm}{\qed}
\end{center}
\end{example}

The following lemmas show that at some point in the execution, a
configuration is reached where agents who are neither leaders nor
controllers no longer interact with other agents.

\begin{restatable}{lemma}{lemnonldrstable}\label{lem:nonldrstable}
  There exists some $\tau\in\N$ such that for every $\tau' \geq \tau$ and $q \in
  \act{\c_{\tau'}} \setminus Q_L$, it is the case that $\role{q}
  = 0$.
\end{restatable}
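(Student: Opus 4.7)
The plan is to show that $\card{B}{\cdot}$, where $B \defeq \{q \in Q : \lnot \lead{q} \land \role{q} \neq 0\}$, eventually reaches $0$ along the fixed fair execution $\c_0 \c_1 \cdots$.

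First, \Cref{lem:ldr:ctrl} supplies a threshold $\alpha$ past which each datum has a unique leader and the population has a unique controller. Rules~\intrule{1} and~\intrule{2} each demand two leaders sharing a datum (respectively, two controllers with $\ctrl{\cdot} = 1$), so both are permanently disabled after $\alpha$. A rule-by-rule audit then shows that the $\role$ component of a non-leader can be modified only by rule~\intrule{8}, which forces $\role{q'} = 0$: the updates in rules~\intrule{4},~\intrule{5}, and~\intrule{6} each require $\lead{q}$ on the updating agent, while rules~\intrule{3} and~\intrule{7} only touch $\val$ or $\task$. Hence $B$ acts as a one-way exit for non-leaders, and $\card{B}{\c_\tau}$ is non-increasing for $\tau \geq \alpha$.

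Next, I would exhibit a short \emph{reduction schedule}. Suppose $\c$ contains an agent $x$ in some state $q \in B$ with $\role{q} = -i$, and let $c$ be the unique controller. Firing rule~\intrule{7} with $x$ as the observed agent and $c$ as the observer sets $\task{i}{c'} = (\role{x} > 0) = \false$ without altering $\role{x}$. Immediately firing rule~\intrule{8} with $c$ observed and $x$ observing is then enabled ($\ctrl{c} = 1$, $\role{x} = -i < 0$, $\lnot \task{i}{c}$) and resets $\role{x'} = 0$, moving $x$ out of $B$. The resulting configuration $\c'$ satisfies $\card{B}{\c'} = \card{B}{\c} - 1$.

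Finally, I would close via fairness combined with a pigeonhole argument. The total number of agents and the support $\supp{\c_0}$ are invariants of the protocol, so only finitely many configurations are reachable from $\c_0$; hence some configuration $\c^*$ appears infinitely often in $\c_0 \c_1 \cdots$. Suppose for contradiction that $\card{B}{\c_\tau}$ stabilizes at some $L > 0$. Then some repeating $\c^*$ satisfies $\card{B}{\c^*} = L$, and applying the reduction schedule from $\c^*$ produces a fixed configuration $\c'$ with $\card{B}{\c'} = L - 1$, reachable from each of the infinitely many indices $i$ with $\c_i = \c^*$. Fairness then forces $\c'$ to appear infinitely often in the execution, contradicting $\card{B}{\c_\tau} = L$ eventually. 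The main delicate step is the rule-by-rule audit establishing the monotonicity of $\card{B}{\cdot}$; once that is secured, the remainder reduces to invoking \Cref{lem:ldr:ctrl} and scheduling the two specific transitions that clear one agent out of $B$.
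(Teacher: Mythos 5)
Your proof is correct and follows essentially the same route as the paper's: the paper tracks the non-decreasing count of non-leaders with role $0$ while you track the complementary non-increasing count $\card{B}{\cdot}$ after the election threshold $\alpha$, but both rest on the same rule-by-rule audit, the same rule~\intrule{7}-then-rule~\intrule{8} reduction schedule, and the same fairness argument. The only step you leave implicit is that every state in $B$ carries a \emph{strictly negative} role (otherwise your schedule, which needs $\role{q} < 0$ for rule~\intrule{8}, would not clear it); the paper isolates this as \Cref{lem:nonldrroles}, but it already follows from your own audit, since rule~\intrule{5} is the sole source of positive roles and requires $\lead{q}$, while rule~\intrule{1} assigns $-|\role{q}| \leq 0$ upon leadership loss.
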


\begin{center}
  \begin{tabular}{cccc}
    \toprule
    \emph{rule}
    & \emph{state precondition}
    & \emph{color precondition}
    & \emph{state update} \\
    
    \midrule\phantomsection\label{int9}
    \multirow{2}{*}{(9)}
    & \makebox[30pt][l]{$\ctrl{p} = -1$}
    & \multirow{2}{*}{\none}
    & \\
    & \makebox[30pt][l]{$\ctrl{q} = 1$}
    &
    & $\bigwedge_{i \in [1..m]} \lnot\task{i}{q'}$ \\
    
    \midrule\phantomsection\label{int10}
    \multirow{3}{*}{(10)}
    & \makebox[30pt][l]{$\ctrl{p} = 1$}
    & \multirow{3}{*}{\none}
    & \\
    & \makebox[30pt][l]{$\ctrl{q} = -1$}
    &
    & $\ctrl{q'} = 0$ \\
    & $\bigwedge_{i \in[1..m]} \lnot\task{i}{p}$
    &
    & \\
    \bottomrule
  \end{tabular}
\end{center}

\begin{restatable}{lemma}{lemuniquectrl}\label{lem:uniquectrl}
  There exists $\tau \geq \alpha$ such that for every $\tau' \geq \tau$
  and $q \in \act{\c_{\tau'}}$, it is the case that $\ctrl{q} \in
  \{0, 1\}$.
\end{restatable}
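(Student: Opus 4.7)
The plan is to show that the count of $\ctrl = -1$ agents eventually becomes $0$, by chaining rules~\intrule{9} and~\intrule{10} and closing the case with fairness applied to a finite set of reachable configurations.

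Let $\alpha$ be the index given by \Cref{lem:ldr:ctrl}, so that for every $t \geq \alpha$ exactly one agent satisfies $\ctrl{\cdot} = 1$. First I would inspect which rules can update the $\ctrl$ macro: only~\intrule{2}, \intrule{9} and~\intrule{10}. Rule~\intrule{2} requires two agents with $\ctrl = 1$, which is impossible after $\alpha$; rule~\intrule{9} only alters the task list; and rule~\intrule{10} switches exactly one agent from $-1$ to $0$. Therefore, setting $Q_{-1} \defeq \{q \in Q : \ctrl{q} = -1\}$, the quantity $\card{Q_{-1}}{\c_t}$ is non-increasing for $t \geq \alpha$ and stabilizes at some value $k \geq 0$ starting from some time $\alpha' \geq \alpha$.

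Next, I would argue $k = 0$ by contradiction. Suppose $k \geq 1$. From any $\c_t$ with $t \geq \alpha'$, pick the unique agent $p$ with $\ctrl{p} = 1$ and some agent $q$ with $\ctrl{q} = -1$; firing rule~\intrule{9} on the pair $(p, q)$ zeroes out every task of $p$, and firing rule~\intrule{10} immediately afterwards on the same pair switches $\ctrl{q}$ to $0$, yielding a configuration with exactly $k - 1$ agents in $Q_{-1}$. Since the multiset of data is invariant along the execution and $Q$ is finite, the set of configurations reachable from $\c_{\alpha'}$ is finite; in particular, the set $Y$ of reachable configurations with $k - 1$ agents in $Q_{-1}$ is finite and non-empty. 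By the pigeonhole principle, some $\c^* \in Y$ satisfies $\c_t \reach{*} \c^*$ for infinitely many $t$. Fairness then forces $\c_i = \c^*$ for infinitely many $i$, contradicting the stabilization of $\card{Q_{-1}}{\cdot}$ at $k$. Thus $k = 0$, and taking $\tau \defeq \alpha'$ suffices: for every $\tau' \geq \tau$, $\card{Q_{-1}}{\c_{\tau'}} = 0$, which means $\ctrl{q} \in \{0, 1\}$ for every $q \in \act{\c_{\tau'}}$.

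The main obstacle is bridging the gap between reachability (a $-1$-controller can always be eliminated in two interactions) and an actual visit along the fair execution, because the fairness hypothesis is stated for individual configurations rather than for properties such as ``$k$ strictly decreases''. The key maneuver is to use invariance of the data multiset to conclude that the reachable set is finite, so that a pigeonhole argument extracts a single witness $\c^*$ on which fairness can be applied directly.
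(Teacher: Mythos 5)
Your proposal is correct and follows essentially the same route as the paper's proof: rule~\intrule{2} is disabled once a unique controller exists, so $\card{Q_{-1}}{\cdot}$ is non-increasing, and the two-step chain of rules~\intrule{9} and~\intrule{10} combined with fairness drives it to zero. Your pigeonhole argument over the finite reachable set is just a more explicit unpacking of the paper's ``by fairness'' step (and note the roles of $p$ and $q$ in rules~\intrule{9}--\intrule{10} are swapped relative to your wording, though the effect you describe is the right one).
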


Let $\tau'$ and $\tau''$ denote the minimal values $\tau$ given by
\Cref{lem:nonldrstable,lem:uniquectrl}. From now on, let $\gamma \defeq
\max(\beta, \tau', \tau'')$. Informally, this delimits the configuration
where there are no negative controllers, therefore preventing
recurring resets of the controller's tasks through
rule~\intrule{9}. Finally, the following lemma argues that past
$\c_\gamma$, a controller can only have a task set to $\true$ if some
leader is currently assuming the corresponding role (whether positive
or negative).

\begin{restatable}{lemma}{lemctrltask}\label{lem:ctrltask}
  For every $\tau \geq \gamma$, $i \in [1..n]$ and $q \in
  \act{\c_\tau}\cap Q_C$ such that $\task{i}{q}$ holds, there exists
  $q' \in \act{\c_\tau}$ such that $|\role{q'}| = i$.
\end{restatable}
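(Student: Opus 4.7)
The plan is to prove the statement by induction on $\tau \geq \gamma$, leveraging that past $\gamma$ the only rules that can modify the controller's task list or reduce the absolute value of an agent's role are rules~\intrule{7} and~\intrule{8}, and that these two rules are designed in tandem to preserve precisely this invariant.

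For the inductive step from $\tau$ to $\tau+1$, I would go through each rule. Rule~\intrule{7} can switch some $\task{i}{q'}$ from $\false$ to $\true$ on a $1$-controller $q$ only when an agent $p$ has $\role{p} = i > 0$; since $p$'s state is not updated by the rule, $p$ itself witnesses $|\role{p}| = i$ at $\tau+1$ (the mirror variant $p = q$ works identically). Rule~\intrule{8} can set $\role{q'} = 0$ (from a negative value) only when the $1$-controller $p$ satisfies $\lnot\task{|\role{q}|}{p}$; by \Cref{lem:uniquectrl} this $p$ must be the unique $1$-controller $p^*$, so $\task{|\role{q}|}{p^*}$ is false at $\tau$ and remains so at $\tau+1$, making the invariant at that index vacuous on $p^*$. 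The remaining role-updating rules~\intrule{1}, \intrule{4}, \intrule{6} preserve $|\role{\cdot}|$ exactly, rule~\intrule{5} only adds new roles, and rule~\intrule{3} touches neither field. Rules~\intrule{2}, \intrule{9}, \intrule{10} cannot fire past $\gamma$ by \Cref{lem:ldr:ctrl,lem:uniquectrl}.

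For the base case $\tau = \gamma$, I would trace back to the last firing of rule~\intrule{10}, at some time $t_0$. Since initially all $\geq 2$ agents satisfy $\ctrl = 1$, rule~\intrule{2} must fire at least once, producing a $-1$-controller, and every $-1$-controller must eventually be eliminated by rule~\intrule{10} in order to reach the situation of \Cref{lem:uniquectrl}. As $|Q_C|$ is monotonically non-increasing and equals $1$ past $\alpha$, the last rule~\intrule{10} firing must occur at some $t_0 > \alpha$ and involve the unique $1$-controller $p^*$. The precondition $\bigwedge_{i} \lnot\task{i}{p^*}$ then guarantees that all tasks of $p^*$ are $\false$ at $t_0$, so the invariant holds vacuously at $t_0$, and the same inductive argument above propagates it through every configuration in $[t_0, \gamma]$ since no $-1$-controller is ever active on this interval.

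The main obstacle is the base case: at earlier times the property can genuinely fail, because multiple concurrent $1$-controllers can hold inconsistent task lists while rule~\intrule{8} resets roles based on only one of them. The cleanup mechanism is precisely the precondition of rule~\intrule{10}, which forces the surviving $1$-controller's task list to be empty before the last $-1$-controller can vanish. Once this observation anchors the base case, the inductive step reduces to the routine rule-by-rule verification above, which directly mirrors the design intent of rules~\intrule{7} and~\intrule{8}.
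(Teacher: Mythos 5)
Your proof is correct and takes essentially the same route as the paper's: both hinge on the observations that, past $\gamma$, only rule~\intrule{7} can set a controller task $i$ to $\true$ (and its precondition leaves behind an untouched witness $q'$ with $|\role{q'}| = i$), while only rule~\intrule{8} can eliminate such a witness (and its precondition forces the unique controller's task $i$ to already be $\false$). Your base case---anchoring the induction at the last firing of rule~\intrule{10}, whose precondition empties the surviving controller's task list---is actually spelled out more carefully than in the paper, which starts from a rule~\intrule{7} firing at some $v \geq \gamma$ and leaves implicit the situation where the task was already set before $\gamma$; your only (harmless) imprecision is claiming $t_0 > \alpha$, when all the argument needs is that $t_0$ follows the last rule~\intrule{2} firing, so that a unique agent with $\ctrl{} = 1$ exists from $t_0$ onward.
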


\subsubsection{Output propagation}

The last rule allows the controller to communicate to the other agents
whether it currently has its task list completely assigned or not. Note
that the output of a state $q$ is precisely the value of $\out{q}$, \ie $O(q) \defeq \out{q}$.

\begin{center}
  \begin{tabular}{cccc}
    \toprule
    \emph{rule}
    & \emph{state precondition}
    & \emph{color precondition}
    & \emph{state update} \\
    
    \midrule\phantomsection\label{int11}
    \multirow{1}{*}{(11)*}
    & $\ctrl{p}$
    & \multirow{1}{*}{\none}
    & $\out{q'} = \bigwedge_{i=1}^n \task{i}{p}$ \\
    \bottomrule
  \end{tabular}
\end{center}

\begin{restatable}{lemma}{lemint}\label{lem:int}
  It is the case that $\psi(\c_0)$ holds iff there exists some
  $\tau \geq \gamma$ such that for every $\tau' \geq \tau$, there
  exists $q \in \act{\c_{\tau'}} \cap Q_C$ such that $\bigwedge_{i \in
  [1..n]} \task{i}{q}$ holds.
\end{restatable}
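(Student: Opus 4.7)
My plan is to prove the biconditional by treating each direction separately.

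For the reverse direction, I would assume that some $\tau \geq \gamma$ exists such that the unique controller past $\gamma$ (given by \Cref{lem:uniquectrl}) has every task set to true at every $\tau' \geq \tau$. By \Cref{lem:ctrltask}, at each such $\tau'$ and each $i \in [1..n]$, some leader $q_i \in \act{\c_{\tau'}}$ satisfies $|\role{q_i}| = i$. The next step is to argue that past some later $\tau^* \geq \tau$, this role is necessarily positive: otherwise some leader $L$ would hold role $-i$ at infinitely many times past $\tau$ and by fairness would eventually meet the controller in a step where \intrule{7} fires (since \intrule{8} is blocked, as it requires $\lnot\task{i}{p}$), setting task $i$ to false and contradicting the assumption; the same argument rules out \intrule{6} firing past $\tau^*$. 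With positive roles in hand, \Cref{lem:correctrole} gives $\cnt{j}{q_i} \in T(i, j)$ for every $j$, \Cref{lem:ldrcnt} equates this to $\min(\c_{\tau'}(d_i, Q_j), r)$ where $d_i$ is the datum of $q_i$, the choice of $r$ transfers the interval membership to $\c_{\tau'}(d_i, Q_j)$ itself, and the invariant on element counts yields $\c_0(d_i, Q_j) \in T(i, j)$. Since distinct leaders sit on distinct data by \Cref{lem:ldr:ctrl}, the data $d_1, \ldots, d_n$ are pairwise distinct, establishing $\psi(\c_0)$.

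For the forward direction, I would begin from witnesses $d_1, \ldots, d_n$ of $\psi(\c_0)$ and note, via \Cref{lem:ldrcnt} and the choice of $r$, that past $\gamma$ the unique leader $L_i$ of $d_i$ has $\cnt{j}{L_i} \in T(i, j)$ for every $j$. I would then call a configuration \emph{good} if $\role{L_i} = +i$ for every $i$, no leader has a negative role, and every controller task is true. Such configurations are stable: \intrule{5} does not affect leaders with nonzero roles, \intrule{6} is blocked by the absence of false tasks, \intrule{8} is blocked by the absence of negative roles, and \intrule{7} merely refreshes true tasks. The core of the argument is to show that a good configuration is reachable from any $\c_{\tau'}$ with $\tau' \geq \gamma$, via a reset-and-reassign procedure: any leader whose current role departs from the target assignment is first brought back to role $0$ through a sequence of interactions with the controller firing \intrule{6}, \intrule{7}, \intrule{8} in turn, and is then placed into its target role via \intrule{5} followed by \intrule{7}. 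Once reachability is established, fairness implies that a good configuration is eventually visited, and stability then ensures the conclusion persists, supplying the required $\tau$.

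The hard part will be the forward direction's reachability argument. A naive reset strategy can deadlock, since \intrule{6} requires the controller to currently hold an unassigned task that the leader can fulfill, and such a ``helper'' task may vanish after earlier resets. The resolution is to schedule the resets using the witness assignment from $\psi(\c_0)$ as a blueprint, ensuring that at each stage the leader being reset has a suitable unassigned target role available. Making this scheduling precise, while also accounting for extraneous leaders whose data are not among the witnesses, will be the crux of the proof.
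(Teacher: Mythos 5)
Your reverse direction is correct and takes a genuinely different route from the paper: the paper proves the contrapositive (if $\psi(\c_0)$ is false, a measure counting leaders with unmatchable roles drops to zero by fairness, after which some task must stay unassigned), whereas you extract witnesses directly from a permanently full task list. Your chain — fairness eliminates negative roles (else rule~\intrule{7} would falsify a task), \Cref{lem:ctrltask} plus \Cref{lem:nonldrroles} then yields a leader with role exactly $+i$ for each $i$, and \Cref{lem:correctrole}, \Cref{lem:ldrcnt}, the choice of $r$ and the invariance of $\c(d,Q_j)$ transfer this to $\c_0$ — is sound and arguably cleaner.

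The forward direction, however, has a genuine gap, located exactly where you flag "the crux": the reachability claim is not merely unproven, it is false as you have formulated it, because you fixed the target assignment $i \mapsto d_i$ once and for all. Concretely, take $n = 2$, witnesses $d_1, d_2$, where $d_1$ happens to match both roles, and a third datum $e$ matching role $1$. From a configuration past $\gamma$ where all roles are $0$ and all tasks are $\false$, the scheduler may let $e$'s leader take role $+1$, let $d_1$'s leader take role $+2$ (legal, since $d_1$ matches role $2$), and let both notify the controller. Now every task is $\true$ and no agent has a negative role; past $\gamma$ the only rule that can create a negative role is \intrule{6}, which requires an unassigned task, so it is permanently disabled, hence \intrule{7} can never reset a task to $\false$, hence no role ever changes again. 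The configuration with $\role{q} = +1$ for $d_1$'s leader is therefore unreachable, and no "good" configuration in your sense is ever reached — yet the lemma's conclusion holds, via the valid assignment $\{1 \mapsto e,\, 2 \mapsto d_1\}$ rather than the one you fixed. This is precisely why the paper does not commit to a particular assignment: its "fixed" configurations are defined only by "all tasks $\true$ and no negative roles", and the argument splits on whether \emph{some} valid assignment $A'$ is fully realized (then a fixed configuration is reachable by resetting stray negative roles and re-notifying) versus none is (then some role is held by no leader, and the progress measure $|s_A(\c)|$ for a chosen $A$ can be strictly increased). To repair your proof you must make the same move: define "good" relative to an arbitrary fully realized valid assignment, and replace the reset-and-reassign schedule by a progress argument of this kind; the scheduling difficulty you identify with \intrule{6} disappears once you are allowed to declare victory with whatever assignment the population happens to lock in.
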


\begin{restatable}{corollary}{thminterval}\label{thm:interval}
  There exists $\tau \in \N$ such that $O(\c_\tau) = O(\c_{\tau+1}) =
  \cdots = \psi(\c_0)$.
\end{restatable}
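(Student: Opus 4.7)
The plan is to analyze the two cases determined by $\psi(\c_0)$, combining Lemma~\ref{lem:int} with rule~\intrule{11} and fairness. Since $O(q) = \out{q}$ and $\out$ is modified only by rule~\intrule{11}---a mirror rule that sets $\out{q'} = \bigwedge_{i=1}^n \task{i}{p}$ whenever an agent $q$ interacts with a controller $p$---every agent's output tracks the conjunction of the controller's tasks at the latest such interaction. By Lemma~\ref{lem:ldr:ctrl}, a unique controller exists past $\alpha$, so this controller is uniquely identified past $\gamma$.

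For $\psi(\c_0) = \true$, I would invoke Lemma~\ref{lem:int} to fix some $\tau_1 \geq \gamma$ past which the unique controller $p$ stably satisfies $\bigwedge_i \task{i}{p}$. By fairness, every agent meets the controller infinitely often, and each such meeting at or after $\tau_1$ sets the agent's $\out$ to $\true$. Taking $\tau_2 \geq \tau_1$ past which every agent has completed such a meeting, we get $O(\c_{\tau'}) = \true = \psi(\c_0)$ for all $\tau' \geq \tau_2$, since any subsequent meeting with the controller preserves $\out = \true$.

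For $\psi(\c_0) = \false$, Lemma~\ref{lem:int} only yields that the controller's task list is incomplete infinitely often past $\gamma$; the goal is to upgrade this to \emph{stable} incompleteness. The key observation is that past $\beta$, Lemmas~\ref{lem:ldrcnt} and~\ref{lem:correctrole} guarantee that any positive role held by a leader genuinely matches that leader's datum. Hence if the controller's task list were complete at some $\tau \geq \gamma$ \emph{and} each task $i$ were witnessed by a current leader holding role $i > 0$, the corresponding $n$ distinct leaders (unique per datum) would exhibit a valid assignment of distinct data to the $n$ roles, contradicting $\lnot\psi(\c_0)$. So any completeness past $\gamma$ must rely on stale information at the controller about a leader whose role has since become non-positive; by fairness, such stale entries are eventually corrected via rule~\intrule{7}, restoring incompleteness. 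The main obstacle is to promote this recurrence into \emph{stable} incompleteness, ruling out perpetual oscillation of the task list caused by role reshuffling via rules~\intrule{5}, \intrule{6}, and~\intrule{8}; once stability is established, fairness and rule~\intrule{11} propagate $\out = \false$ to every agent, yielding $O(\c_{\tau'}) = \false = \psi(\c_0)$ for all sufficiently large $\tau'$.
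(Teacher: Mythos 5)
Your argument for the case $\psi(\c_0) = \true$ is essentially the paper's: \Cref{lem:int} yields a point past which the unique controller's task list is stably complete, only rule~\intrule{11} modifies $\out{q}$ for any state $q$, so from then on $\card{Q_{\true}}{\c_i}$ is non-decreasing and fairness drives it to $|\c_i|$ permanently. That half is fine.

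The case $\psi(\c_0) = \false$ has a genuine gap, and you name it yourself: you need \emph{stable} incompleteness of the controller's task list, the literal negation of the statement of \Cref{lem:int} only yields \emph{recurrent} incompleteness, and you leave ``ruling out perpetual oscillation'' as an unresolved obstacle. Without closing it the argument fails: if the task list were complete infinitely often, rule~\intrule{11} could keep resetting agents' outputs to $\true$ and no $\false$-consensus need stabilize. The paper closes this by relying on what the proof of the backward direction of \Cref{lem:int} actually establishes (which is stronger than the lemma's statement): let $W(\c)$ be the set of data whose leader holds a role $i$, of either sign, that the datum does not match. Past $\beta$ the leaders' counts are correct (\Cref{lem:ldrcnt}), so rule~\intrule{5} only ever assigns matching roles and, by \Cref{lem:correctrole}, any wrongly held role must be negative; such roles are cleared by rule~\intrule{7} followed by rule~\intrule{8}, so $W$ is non-increasing and, by fairness, reaches and stays at the empty set. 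From that point every nonzero leader role genuinely matches its datum; since $\lnot\psi(\c_0)$ means no valid assignment of $n$ distinct data to the $n$ roles exists, in every subsequent configuration some role $i$ is held by no agent at all. Combined with \Cref{lem:nonldrstable} (non-leaders end with role $0$) and \Cref{lem:ctrltask} (a task $i$ can be $\true$ only if some agent holds role $\pm i$), that task is permanently $\false$ for the controller --- the stable incompleteness you were missing. Note in particular that your sketch only treats tasks witnessed by \emph{positive} roles; tasks kept alive by leaders stuck in negative roles are exactly what the $W$-argument is needed to rule out.
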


\section{Conclusion}\label{sec:conclusion}
In this article, we introduced population protocols with unordered
data; we presented such a protocol that computes majority over an
infinite data domain; and we established the expressive power of
immediate observation protocols: they compute interval predicates.

This work initiates the study of population protocols operating over
arbitrarily large domains. Hence, this opens the door to numerous
exciting questions, \eg\ on space-efficient and time-efficient
protocols. In particular, the expressive power of our model remains
open.

There exist results on logics over data multisets (\eg,
see~\cite{Lug05,PK08}). In particular, the author of~\cite{Lug05}
provides a decidable logic reminiscent of Presburger arithmetic. It
appears plausible that population protocols with unordered data
compute (perhaps precisely) this logic. While we are fairly confident
that remainder and threshold predicates with respect to the data counts can be computed in our model,
the existential quantification, arising in the (non-ambiguous) solved
forms of~\cite{Lug05}, seems more challenging to implement than the
one of simple interval predicates.

Our model further relates to logic and automata on data words: inputs
of a protocol with data can be seen as data words where $Q$ is the
alphabet and $\D$ is the data domain. Importantly, these data words
are commutative, \ie, permutations do not change acceptance. For
example, the logic $\mathsf{FO}^2(+1, \sim, <)$ of~\cite{BDMSS11}
allows to specify non-commutative properties such as ``there is a
block of $a$'s followed by a block of $b$'s''. In this respect, this
logic is too ``strong''. It is also too ``weak'' as it cannot express
``for each datum, the number of $a$'s is even''. For this same reason,
$\mathsf{EMSO}^2(+1,
\sim)$, and equivalently weak data automata~\cite{KST12}, is too
``weak''. The logic $\mathsf{EMSO}^2_\#(+1, \sim)$, and equivalently
commutative data automata~\cite{Wu12}, can express the latter, but,
again, the successor relation allows to express non-commutative
properties on letters. Thus, while models related to data words have
been studied and could influence research on the complete
characterization of the expressive power of our model, we have yet to
directly connect them to our model.

\bibliographystyle{plain}
\bibliography{references}

\begin{thebibliography}{10}

\bibitem{AADFP06}
Dana Angluin, James Aspnes, Zo{\"{e}} Diamadi, Michael~J. Fischer, and
  Ren{\'{e}} Peralta.
\newblock Computation in networks of passively mobile finite-state sensors.
\newblock {\em Distributed Computing}, 18(4):235--253, 2006.

\bibitem{AAER07}
Dana Angluin, James Aspnes, David Eisenstat, and Eric Ruppert.
\newblock The computational power of population protocols.
\newblock {\em Distributed Computing}, 20(4):279--304, 2007.

\bibitem{BBBEHKK22}
Gregor Bankhamer, Petra Berenbrink, Felix Biermeier, Robert Els{\"{a}}sser,
  Hamed Hosseinpour, Dominik Kaaser, and Peter Kling.
\newblock Population protocols for exact plurality consensus: How a small
  chance of failure helps to eliminate insignificant opinions.
\newblock In {\em Proc.\ $\text{41}^\text{st}$ {ACM} Symposium on Principles of
  Distributed Computing ({PODC})}, pages 224--234, 2022.

\bibitem{BBHK22}
Petra Berenbrink, Felix Biermeier, Christopher Hahn, and Dominik Kaaser.
\newblock Loosely-stabilizing phase clocks and the adaptive majority problem.
\newblock In {\em Proc.\ $\text{1}^\text{st}$ Symposium on Algorithmic
  Foundations of Dynamic Networks ({SAND})}, pages 7:1--7:17, 2022.

\bibitem{BEFKKR18}
Petra Berenbrink, Robert Els{\"{a}}sser, Tom Friedetzky, Dominik Kaaser, Peter
  Kling, and Tomasz Radzik.
\newblock A population protocol for exact majority with ${O}(\log^{5/3} n)$
  stabilization time and ${\Theta}(\log n)$ states.
\newblock In {\em Proc.\ $\text{32}^\text{nd}$ International Symposium on
  Distributed Computing ({DISC})}, pages 10:1--10:18, 2018.

\bibitem{BEGHJ20}
Michael Blondin, Javier Esparza, Blaise Genest, Martin Helfrich, and Stefan
  Jaax.
\newblock Succinct population protocols for {P}resburger arithmetic.
\newblock In {\em Proc.\ $\text{37}^\text{th}$ International Symposium on
  Theoretical Aspects of Computer Science ({STACS})}, pages 40:1--40:15, 2020.

\bibitem{BEJM21}
Michael Blondin, Javier Esparza, Stefan Jaax, and Philipp~J. Meyer.
\newblock Towards efficient verification of population protocols.
\newblock {\em Formal Methods in System Design {(FMSD)}}, 57(3):305--342, 2021.

\bibitem{BDMSS11}
Miko\l{}aj Boja\'{n}czyk, Claire David, Anca Muscholl, Thomas Schwentick, and
  Luc Segoufin.
\newblock Two-variable logic on data words.
\newblock {\em {ACM} Transactions on Computational Logic ({TOCL})},
  12(4):27:1--27:26, 2011.

\bibitem{BBR19}
Benedikt Bollig, Patricia Bouyer, and Fabian Reiter.
\newblock Identifiers in registers -- describing network algorithms with logic.
\newblock In {\em Proc.\ $\text{22}^\text{nd}$ International Conference on
  Foundations of Software Science and Computation Structures ({FoSSaCS})},
  pages 115--132, 2019.

\bibitem{BRS21}
Benedikt Bollig, Fedor Ryabinin, and Arnaud Sangnier.
\newblock Reachability in distributed memory automata.
\newblock In {\em Proc.\ $\text{29}^\text{th}$ {EACSL} Annual Conference on
  Computer Science Logic ({CSL})}, pages 13:1--13:16, 2021.

\bibitem{BM91}
Robert~S. Boyer and J.~Strother Moore.
\newblock Mjrty: A fast majority vote algorithm.
\newblock In {\em Automated Reasoning: Essays in Honor of Woody Bledsoe}, 1991.

\bibitem{CGHE22}
Philipp Czerner, Roland Guttenberg, Martin Helfrich, and Javier Esparza.
\newblock Fast and succinct population protocols for {P}resburger arithmetic.
\newblock In {\em Proc.\ $\text{1}^\text{st}$ Symposium on Algorithmic
  Foundations of Dynamic Networks ({SAND})}, pages 11:1--11:17, 2022.

\bibitem{DST16}
Giorgio Delzanno, Arnaud Sangnier, and Riccardo Traverso.
\newblock Adding data registers to parameterized networks with broadcast.
\newblock {\em Fundamenta Informaticae}, 143(3-4):287--316, 2016.

\bibitem{DEGSUS21}
David Doty, Mahsa Eftekhari, Leszek Gasieniec, Eric~E. Severson, Przemyslaw
  Uznanski, and Grzegorz Stachowiak.
\newblock A time and space optimal stable population protocol solving exact
  majority.
\newblock In {\em Proc.\ $\text{62}^\text{nd}$ {IEEE} Annual Symposium on
  Foundations of Computer Science ({FOCS})}, pages 1044--1055, 2021.

\bibitem{EGLM17}
Javier Esparza, Pierre Ganty, J{\'{e}}r{\^{o}}me Leroux, and Rupak Majumdar.
\newblock Verification of population protocols.
\newblock {\em Acta Informatica}, 54(2):191--215, 2017.

\bibitem{GHMSS16}
Leszek Gasieniec, David~D. Hamilton, Russell Martin, Paul~G. Spirakis, and
  Grzegorz Stachowiak.
\newblock Deterministic population protocols for exact majority and plurality.
\newblock In {\em Proc.\ $\text{20}^\text{th}$ International Conference on
  Principles of Distributed Systems ({OPODIS})}, pages 14:1--14:14, 2016.

\bibitem{GSAH19}
Utkarsh Gupta, Preey Shah, S.~Akshay, and Piotr Hofman.
\newblock Continuous reachability for unordered data {P}etri nets is in
  {PTime}.
\newblock In {\em Proc.\ $\text{22}^\text{nd}$ International Conference on
  Foundations of Software Science and Computation Structures ({FoSSaCS})},
  pages 260--276, 2019.

\bibitem{HLLLST16}
Piotr Hofman, Slawomir Lasota, Ranko Lazic, J{\'{e}}r{\^{o}}me Leroux, Sylvain
  Schmitz, and Patrick Totzke.
\newblock Coverability trees for {P}etri nets with unordered data.
\newblock In {\em Proc.\ $\text{19}^\text{th}$ International Conference on
  Foundations of Software Science and Computation Structures ({FoSSaCS})},
  pages 445--461, 2016.

\bibitem{HLT17}
Piotr Hofman, J{\'{e}}r{\^{o}}me Leroux, and Patrick Totzke.
\newblock Linear combinations of unordered data vectors.
\newblock In {\em Proc.\ $\text{32}^\text{nd}$ Annual {ACM/IEEE} Symposium on
  Logic in Computer Science ({LICS})}, pages 1--11, 2017.

\bibitem{KF94}
Michael Kaminski and Nissim Francez.
\newblock Finite-memory automata.
\newblock {\em Theoretical Computer Science}, 134(2):329--363, 1994.

\bibitem{KST12}
Ahmet Kara, Thomas Schwentick, and Tony Tan.
\newblock Feasible automata for two-variable logic with successor on data
  words.
\newblock In {\em Proc.\ $\text{6}^\text{th}$ International Conference on
  Language and Automata Theory and Applications ({LATA})}, volume 7183, pages
  351--362, 2012.

\bibitem{Lug05}
Denis Lugiez.
\newblock Multitree automata that count.
\newblock {\em Theoretical Computer Science}, 333(1-2):225--263, 2005.

\bibitem{MS18}
Othon Michail and Paul~G. Spirakis.
\newblock Elements of the theory of dynamic networks.
\newblock {\em Communications of the {ACM}}, 61(2):72, 2018.

\bibitem{PK08}
Ruzica Piskac and Viktor Kuncak.
\newblock Decision procedures for multisets with cardinality constraints.
\newblock In {\em Proc.\ $\text{9}^\text{th}$ International Conference on
  Verification, Model Checking, and Abstract Interpretation ({VMCAI})}, pages
  218--232, 2008.

\bibitem{Wu12}
Zhilin Wu.
\newblock Commutative data automata.
\newblock In {\em Proc.\ $\text{26}^\text{th}$ International
  Workshop/$\text{21}^\text{st}$ Annual Conference of the {EACSL} on Computer
  Science Logic ({CSL})}, volume~16, pages 528--542, 2012.

\end{thebibliography}

\clearpage
\appendix
\label{appendix}

\section{Missing proofs of Section~\ref{sec:maj}}
Let us first recall all of the transitions in a single table:

\begin{center}
  \begin{tabular}{cccc}
    \bottomrule
    \multicolumn{4}{c}{\textbf{Pairing stage}} \\
    \toprule
    \emph{rule}
    & \emph{state precondition}
    & \emph{color precondition}
    & \emph{state update} \\
    \midrule

    \multirow{2}{*}{(1)}
    & $\lnot\pair{p}$
    & \multirow{2}{*}{$d_1 \neq d_2$}
    & $\pair{p'} \land \even{p'}$ \\

    & $\lnot\pair{q}$
    &
    & $\pair{q'} \land \even{q'}$ \\

    \bottomrule

    \multicolumn{4}{c}{\textbf{Grouping stage}} \\
    \toprule
    \emph{rule}
    & \emph{state precondition}
    & \emph{color precondition}
    & \emph{state update} \\
    \midrule

    \multirow{2}{*}{(2)}
    & \makebox[130pt][l]{$\lnot\pair{p}$}
    & \multirow{2}{*}{$d_1 \neq d_2$}
    & \none \\
    & \makebox[130pt][l]{$\phantom{\lnot}\pair{q} \land \phantom{\lnot}\grp{q} \land \maj{q} = Y$}
    &
    & \makebox[100pt][r]{$\lnot\grp{q'} \land \maj{q'} = N$} \\

    \midrule      
    \multirow{2}{*}{(3)}
    & \makebox[130pt][l]{$\lnot\pair{p}$}
    & \multirow{2}{*}{$d_1 = d_2$}
    & \none \\
    & \makebox[130pt][l]{$\phantom{\lnot}\pair{q} \land \lnot\grp{q} \land \maj{q} = N$}
    &
    & \makebox[100pt][r]{$\phantom{\lnot}\grp{q'} \land \maj{q'} = Y$} \\

    \midrule    
    \multirow{2}{*}{(4)}
    & \makebox[130pt][l]{$\lnot\pair{p}$}
    & \multirow{2}{*}{$d_1 \neq d_2$}
    & \none \\
    & \makebox[130pt][l]{$\phantom{\lnot}\pair{q} \land \phantom{\lnot}\grp{q} \land \maj{q} \in \{y, n\}$}
    &
    & \makebox[100pt][r]{$\maj{q'} = \overline{N}$} \\
    \midrule
    
    \multirow{2}{*}{(5)}
    & \makebox[130pt][l]{$\lnot\pair{p}$}
    & \multirow{2}{*}{$d_1 = d_2$}
    & \none \\
    & \makebox[130pt][l]{$\phantom{\lnot}\pair{q} \land \lnot\grp{q} \land \maj{q} \in \{y, n\}$}
    &
    & \makebox[100pt][r]{$\maj{q'} = \overline{Y}$} \\

    \midrule \\[-16.5pt]
    \midrule
    \multirow{2}{*}{(6)}
    & \makebox[130pt][l]{$\phantom{\lnot\pair{p} \land \lnot}\grp{p} \land \maj{p} = \overline{N}$}
    & \multirow{2}{*}{\none}
    & \makebox[100pt][r]{$\lnot\grp{p'} \land \maj{p'} = N$} \\
    & \makebox[130pt][l]{$\phantom{\lnot\pair{q} \land {}} \lnot\grp{q} \land \maj{q} \in \{y, n\}$}
    &
    & \makebox[100pt][r]{$\maj{q'} = N$} \\

    \midrule
    \multirow{2}{*}{(7)}
    & \makebox[130pt][l]{$\phantom{\lnot\pair{p} \land {}} \lnot\grp{p} \land \maj{p} = \overline{Y}$}
    & \multirow{2}{*}{\none}
    & \makebox[100pt][r]{$\phantom{\lnot}\grp{p'} \land \maj{p'} = Y$} \\
    & \makebox[130pt][l]{$\phantom{\lnot\pair{q} \land {}}\phantom{\lnot}\grp{q} \land \maj{q} \in \{y, n\}$}
    &
    & \makebox[100pt][r]{$\maj{q'} = Y$} \\

    \midrule
    \multirow{2}{*}{(8)}
    & \makebox[130pt][l]{$\phantom{\lnot\pair{p} \land {}}\phantom{\lnot}\grp{p} \land \maj{p} = \overline{N}$}
    & \multirow{2}{*}{\none}
    & \makebox[100pt][r]{$\lnot\grp{p'} \land \maj{p'} = n$} \\
    & \makebox[130pt][l]{$\phantom{\lnot\pair{q} \land {}}\lnot\grp{q}\land \maj{q} = \overline{Y}$}
    &
    & \makebox[100pt][r]{$\phantom{\lnot}\grp{q'} \land \maj{q'} = n$} \\
    \bottomrule

    \multicolumn{4}{c}{\textbf{Majority stage}} \\
    \toprule
    \emph{rule}
    & \emph{state precondition}
    & \emph{color precondition}
    & \emph{state update} \\
    \midrule

    \multirow{2}{*}{(9)}
    & $\lnot\pair{p}\phantom{\land \lnot\even{q}}$
    & \multirow{2}{*}{\none}
    & \none \\
    & $\phantom{\lnot\pair{p} \land \lnot}\even{q}$
    &
    & $\lnot\even{q'}$ \\

    \midrule
    \multirow{2}{*}{(10)}
    & $\phantom{\lnot}\pair{p} \land \phantom{\lnot}\even{p}$
    & \multirow{2}{*}{\none}
    & \none \\
    & $\phantom{\lnot}\pair{q} \land \lnot\even{q}$
    &
    & $\phantom{\lnot}\even{q'}$ \\

    \midrule \\[-16.5pt]
    \midrule
    \multirow{2}{*}{(11)}
    & \makebox[50pt][l]{$\maj{p} = Y$}
    & \multirow{2}{*}{\none}
    & \makebox[50pt][l]{$\maj{p'} = n$} \\
    & \makebox[50pt][l]{$\maj{q} = N$}
    &
    & \makebox[50pt][l]{$\maj{q'} = n$} \\

    \midrule
    \multirow{2}{*}{(12)}
    & \makebox[50pt][l]{$\maj{p} = Y$}
    & \multirow{2}{*}{\none}
    & \none \\
    & \makebox[50pt][l]{$\maj{q} = n$}
    &
    & \makebox[50pt][l]{$\maj{q'} = y$} \\

    \midrule
    \multirow{2}{*}{(13)}
    & \makebox[50pt][l]{$\maj{p} = N$}
    & \multirow{2}{*}{\none}
    & \none \\
    & \makebox[50pt][l]{$\maj{q} = y$}
    &
    & \makebox[50pt][l]{$\maj{q'} = n$} \\

    \midrule
    \multirow{2}{*}{(14)}
    & \makebox[50pt][l]{$\maj{p} = n$}
    & \multirow{2}{*}{\none}
    & \none \\
    & \makebox[50pt][l]{$\maj{q} = y$}
    &
    & \makebox[50pt][l]{$\maj{q'} = n$} \\
    \bottomrule
  \end{tabular}
\end{center}
Note that for the sake of brevity, for the remainder of the section, we use the notation $\c\reach{(n)}\c'$ to denote that $\c\reach{t}\c'$ using a transition $t$ arising from rule $(n)$.
\lempair*

\begin{proof}\label{prf:pair}
  First, note that no rule reverses a state from paired to
  unpaired. This implies that for any two $\c, \c'$ such that $\c
  \reach{*} \c'$, it is the case that $\card{U}{\c} \geq
  \card{U}{\c'}$. This yields a threshold $\tau \in \N$ such that
  $\card{U}{\c_\tau} = \card{U}{\c_{\tau+1}} = \cdots$, as the value
  cannot drop below zero.

  Assume that there exist infinitely many $i \geq \tau$ such
  that $\c_i(d, U) > 0$ and $\c_i(d', U) > 0$ holds for two distinct colors $d, d' \in \D$. Rule~\majrule{1} is enabled in each such $\c_i$. This implies
  the existence of a configuration $\c_i'$ such that $\c_i \reach{*}
  \c_i'$ and $\card{U}{\c_i} > \card{U}{\c_i'}$. By fairness, some
  configuration from $\{\c_i' : i \in \N\}$ is reached along the
  execution, which is a contradiction.
\end{proof}

\lemmajcolor*

\begin{proof}\label{prf:majcolor}
  Since $d$ is the majority color and all agents are initially
  unpaired, we have
  \begin{align}
    \c_0(d, U) > \sum_{d' \neq d} \c_0(d', U).\label{eq:maj:card}
  \end{align}
  Moreover, for every $i > 0$ such that $\c_{i-1}\reach{(1)} \c_i$, at
  most of one state $q \in \act{\c_{i-1}(d)}$ is paired while at least
  one color $d'\neq d$ has one of its states $q' \in
  \act{\c_{i-1}(d')}$ paired. Therefore, we either have $\c_i(d, U) =
  \c_{i-1}(d, U)$ or $\c_i(d, U) = \c_{i-1}(d, U) - 1$, and
  we have
  \[
  \sum_{d'\neq d} \c_i(d', U)
  <
  \sum_{d'\neq d} \c_{i-1}(d', U).
  \]
  Consequently, by~\eqref{eq:maj:card}, we have $\c_i(d, U) > \sum_{d'
    \neq d} \c_i(d', U) \geq 0$.

  Recall that only rule~\majrule{1} changes the pairing of a
  state. Thus, $\c_i(d, U) > 0$ holds for all $i
  \in \N$.
\end{proof}

\leminvariant*

\begin{proof}\label{prf:invariant}
  We prove the claim by induction on $i \geq 0$. We have
  $\card{Q_N}{\c_0} = \card{Q_m}{\c_0} = 0$ and $\card{Q_Y}{\c_0} =
  \card{Q_M}{\c_0} = |\c_0|$. Therefore, the claim holds initially.
  
  Given a transition $t = ((p, q), \sim, (p', q')) \in \delta$ and $x
  \in \{Y, N, M, m\}$, we define $\Delta_x(t) \defeq \multiset{p', q'}(Q_x)
  - \multiset{p, q}(Q_x)$. This table illustrates the value
  of $\Delta_x(t)$ for each $t$:
  \begin{center}\label{tbl:invariant}
    \begin{tabular}{ccccccc}
      \toprule
      $t\in\delta$ from rule \# & $\Delta_Y(t)$ & $\Delta_N(t)$ & $\Delta_M(t)$ & $\Delta_m(t)$ & $\Delta_Y(t) - \Delta_N(t)$ & $\Delta_M(t) - \Delta_m(t)$ \\
      \midrule
      \majrule{2} & $-1$ & $1$ & $-1$ & $1$ & $-2$ & $-2$ \\
      \midrule
      \majrule{3} & $1$ & $-1$ & $1$ & $-1$ & $2$ & $2$ \\
      \midrule
      \majrule{6} & $0$ & $2$ & $-1$ & $1$ & $-2$ & $-2$ \\
      \midrule
      \majrule{7} & $2$ & $0$ & $1$ & $-1$ & $2$ & $2$ \\
      \midrule
      \majrule{11} & $-1$ & $-1$ & $0$ & $0$ & $0$ & $0$ \\
      \midrule
      other rules & $0$ & $0$ & $0$ & $0$ & $0$ & $0$ \\
      \bottomrule
    \end{tabular}
  \end{center}
  Let $i > 0$. Observe that, by definition, $\card{Q_x}{\c_i} =
  \card{Q_x}{\c_{i-1}} + \Delta_x(t)$ holds for every $t \in \delta$
  and $x \in \{Y, N, M, n\}$. Thus,
  \begin{align*}
    \card{Q_Y}{\c_i} - \card{Q_N}{\c_i} 
    &= (\card{Q_Y}{\c_{i-1}} + \Delta_Y(t)) - (\card{Q_N}{\c_{i-1}} + \Delta_N(t)) \\
    &= \big(\card{Q_Y}{\c_{i-1}} - \card{Q_N}{\c_{i-1}}\big) + \big(\Delta_Y(t) - \Delta_N(t)\big) \\
    &= \big(\card{Q_M}{\c_{i-1}} - \card{Q_m}{\c_{i-1}}\big) + \big(\Delta_Y(t) - \Delta_N(t)\big)
    && \text{(by induction hyp.)} \\
    &= \big(\card{Q_M}{\c_{i-1}} - \card{Q_m}{\c_{i-1}}\big) + \big(\Delta_M(t) - \Delta_m(t)\big)
    && \text{(by above table)} \\
    &= \big(\card{Q_M}{\c_{i-1}} + \Delta_M(t)\big) - \big(\card{Q_m}{\c_{i-1}} + \Delta_m(t) \big) \\
    &= \card{Q_M}{\c_i} - \card{Q_m}{\c_i}. &&\qedhere
  \end{align*}
\end{proof}

\lempartner*

\begin{proof}\label{prf:partner}
  We proceed by induction on $i \in \N$. The claim trivially holds for
  $\c_0$, as the unique initial state $q_I$ satisfies $\maj{q_I}
  \notin E$, which implies $\card{E}{\c_0} = \card{E_M}{\c_0} =
  \card{E_m}{\c_0} = 0$.

  Let $i > 0$ and let $t$ be such that $\c_{i-1} \reach{t} \c_i$. For
  $\card{E_M}{\c_i}$ (resp.\ $\card{E_m}{\c_i}$) to be different from
  $\card{E_M}{\c_{i-1}}$ (resp.\ $\card{E_m}{\c_{i-1}}$), $t$ must
  be~\majrule{6},~\majrule{7} or~\majrule{11}. Let us take a look at
  the effect of each of these transitions:
  \begin{center}\label{tbl:partner}
    \begin{tabular}{ccc}
      \toprule
        $t\in\delta$ from rule \# & $\card{E_M}{\c_i} - \card{E_M}{\c_{i-1}}$ & $\card{E_m}{\c_i} -\card{E_m}{\c_{i-1}}$ \\
        \midrule
        \majrule{6} & $-1$ & $-1$ \\
        \midrule
        \majrule{7} & $-1$ & $-1$ \\
        \midrule
        \majrule{11} & $1$ & $1$ \\
      \bottomrule
    \end{tabular}
  \end{center}
  Observe that rule~\majrule{11} does not explicitly specify the groups of states $p$ and $q$. However, no rule generates a state $q$ such that $\maj{q} = Y$ (resp.\ $\maj{q} = N$) and $\lnot\grp{q}$ (resp.\ $\grp{q}$), and $q_I$ satisfies $\maj{q_I} = Y \land \grp{q_I}$; therefore, for states $p$ and $q$ of rule~\majrule{11}, $\grp{p}$ and $\lnot\grp{q}$ implicitly hold. Since the effect is the same on each side of the equation for all
  transitions, the claim follows by induction hypothesis.
\end{proof}

\lemgroup*

\begin{proof}\label{prf:group}
  We claim that there exist $\tau, \tau', \tau'', \tau''' \geq \alpha$
  such that
  \begin{multicols}{2}
    \begin{enumerate}
    \item $\c_i(d, Q_{\overline{N}}) = 0$ for all $i \geq \tau$;
      
    \item $\sum_{d' \neq d} \c_i(d', Q_{\overline{Y}}) = 0$ for all $i
      \geq \tau'$;
      
    \item $\c_i(d, Q_m) = 0$ for all $i \geq \tau''$;
      
    \item $\sum_{d' \neq d} \c_i(d', Q_M) = 0$ for all $i \geq \tau'''$.
    \end{enumerate}
  \end{multicols}
  \noindent Before proving the claims, observe that the lemma follows
  by taking $\max(\tau, \tau', \tau'', \tau''').$

  \medskip
  \noindent\emph{Claim~1.} By definition of $\alpha$ and
  \Cref{lem:pair}, there are no unpaired agents of a color $d' \neq d$ in $\c_\alpha$
  to play the corresponding role in rule~\majrule{4}. Moreover,
  rule~\majrule{4} is the only rule generating states from
  $Q_{\overline{N}}$. Thus, for every $i \geq \alpha$ and $\c_i
  \reach{*} \c$, we have $\c_i(d, Q_{\overline{N}}) \geq \c(d,
  Q_{\overline{N}})$.

  It remains to show that the value decreases to zero. If $\c_i(d,
  Q_{\overline{N}}) > 0$ holds for only finitely many indices $i \geq
  \alpha$, then we are done. Otherwise, we must have $\c_i(d, Q_{\overline{N}} \cap Q_M) > 0$ for infinitely many
  $i \geq \alpha$. Indeed, rule~\majrule{4} updates $\maj{q'} =
  \overline{N}$ under the precondition $\grp{q}$, and no rule produces
  a state from $Q_{\overline{N}} \cap Q_m$. By \Cref{lem:partner},
  this implies the existence of a state from $\act{\c_i} \cap E_m$
  enabling rule~\majrule{6} or rule~\majrule{8}. Thus, there exists a configuration $\c_i'$ such
  that $\c_i \reach{*} \c_i'$ and $\c_i(d, Q_{\overline{N}}) >
  \c_i'(d, Q_{\overline{N}})$. Hence, by fairness, we obtain
  $\c_\tau(d, Q_{\overline{N}}) = 0$ for some $\tau \geq
  \alpha$. Hence, $\c_\tau(d, Q_{\overline{N}}) = \c_{\tau+1}(d,
  Q_{\overline{N}}) = \cdots = 0$ (as the value cannot increase).

  \medskip
  \noindent\emph{Claim~2.} Similarly to Claim~1, as unpaired agents
  share color $d$, rule~\majrule{5} is only enabled for agents of
  color $d$ for $\c_\alpha$. So, as rule~\majrule{5} is the only
  rule generating states from $Q_{\overline{Y}}$, we have
  \[
  \sum_{d' \neq d} \c_i(d', Q_{\overline{Y}})
  \geq
  \sum_{d' \neq d} \c(d', Q_{\overline{Y}})
  \text{ for every }
  i \geq \alpha \text{ and } \c_i \reach{*} \c.
  \]
  It remains to show that the value decreases to zero. If $\sum_{d'
    \neq d} \c_i(d', Q_{\overline{Y}}) > 0$ for only finitely many indices $i
  \geq \alpha$, then we are done. Otherwise, we must have
  $\c_i(d, Q_{\overline{Y}} \cap Q_m) > 0$ for
  infinitely many $i \geq \alpha$. Indeed, rule~\majrule{5} updates
  $\maj{q'} = \overline{Y}$ under the precondition $\neg \grp{q}$, and
  no rule produces a state from $Q_{\overline{Y}} \cap Q_M$. By
  \Cref{lem:partner}, some
  state from $\act{\c_i} \cap E_M$ enables rule~\majrule{7} or
  rule~\majrule{8}. Thus, there exists a configuration $\c_i'$ such
  that $\c_i \reach{*} \c_i'$ and
  \[
  \sum_{d' \neq d} \c_i(d', Q_{\overline{Y}})
  >
  \sum_{d' \neq d} \c_i'(d', Q_{\overline{Y}}).
  \]
  Thus, by fairness, we obtain $\sum_{d' \neq d} \c_{\tau'}(d',
  Q_{\overline{Y}}) = 0$ for some $\tau' \geq \alpha$. So,
  $\sum_{d' \neq d} \c_{\tau'}(d', Q_{\overline{Y}}) = \sum_{d' \neq
    d} \c_{\tau'+1}(d', Q_{\overline{Y}}) = \cdots = 0$ (as the value
  cannot increase).

  \medskip
  \noindent\emph{Claim~3.} Let $\alpha' \defeq \max(\tau, \tau')$. By
  Claim~1, from $\c_{\alpha'}$, since no agent of color $d$ is in a
  state from $Q_{\overline{N}}$, rules~\majrule{6} and~\majrule{8} can
  no longer change the group of an agent of color $d$ to the minority
  group, \ie\ set $\grp{q}$ to $\false$. Moreover, since agents of
  color $d' \neq d$ are all paired by \Cref{lem:pair},
  rule~\majrule{2} can also no longer change an agent of color $d$ to
  the minority group. Therefore, for any $\c'$ such that $\c_{\alpha'}
  \reach{*} \c'$, we have $\c_{\alpha'}(d, Q_m) \geq \c'(d, Q_m)$.

  If $\c_i(d, Q_m) > 0$ for only finitely many indices $i \geq
  \alpha'$, then we are done. Thus, assume that $\c_i(d, Q_m) > 0$ for
  infinitely many $i \geq \alpha'$. If there exists $q \in
  \act{\c_i(d)} \cap Q_m \cap P$, then, depending on the value of
  $\maj{q}$, either rule~\majrule{3}; rule~\majrule{5} followed by rule~\majrule{7};
  or rule~\majrule{5} followed by rule~\majrule{8} is enabled. In each case, it
  yields $\c_i'$ such that $\c_i \reach{*} \c_i'$ and $\c_i(d, Q_m
  \cap P) > \c_i'(d, Q_m \cap P)$. Thus, by fairness, we obtain
  $\tau'' \geq \alpha$ such that $\c_i(d, Q_m \cap P) = 0$ for all $i
  \geq \tau''$.

  It remains to show that $\c_{\tau''}(d, Q_m \cap U) =
  0$. Rules~\majrule{2} and~\majrule{3} have the precondition
  $\pair{q}$ for the only modified state $q'$. Hence, they cannot
  change the group of an unpaired agent. While
  rules~\majrule{6},~\majrule{7} and~\majrule{8} do not have such a
  precondition, they require state $p$ to satisfy $\maj{p} \in
  \{\overline{Y}, \overline{N}\}$. The only rules changing a state to
  satisfy that condition are~\majrule{4} and~\majrule{5}. Since both
  rules only change a state $q$ to $q'$ satisfying $\maj{q'}\in\{
  \overline{Y}, \overline{N}\}$ if $\pair{q}$ holds, unpaired agents
  are never in a state meeting the requirements to have their group
  changed. From these observations, we conclude that a state $q \in
  \act{\c_{\tau''}}$ cannot satisfy both $\lnot\grp{q}$ and
  $\lnot\pair{q}$. Thus, we have $\c_{\tau''}(d, Q_m \cap U) = 0$ as
  desired.

  \medskip
  \noindent\emph{Claim~4}. Let $\alpha' \defeq \max(\tau, \tau')$. By
  Claim~2, from $\c_{\alpha'}$, no agent of color $d'\neq d$ is in a
  state from $Q_{\overline{Y}}$. Therefore, rules~\majrule{7}
  and~\majrule{8} can no longer change the group of an agent of color
  $d' \neq d$ to the majority group. Since there is no unpaired agent
  of color $d' \neq d$ by \Cref{lem:pair}, rule~\majrule{3} can also
  no longer change their group to the majority group. So, for any
  $\c'$ such that $\c_{\alpha'} \reach{*} \c'$ and for any $d' \neq
  d$, we have $\c_{\alpha'}(d', Q_M) \geq \c'(d', Q_M)$.

  If $\sum_{d' \neq d} \c_i(d', Q_M) > 0$ for only finitely many
  indices $i \geq \alpha'$, then we are done. Thus, assume that there
  exist infinitely many $i \geq \alpha'$ such that for $d' \neq d$, there exists $q
  \in \act{\c_{i}(d')} \cap Q_M$. Depending on the value of
  $\maj{q}$, either rule~\majrule{2}; rule~\majrule{4} followed by rule~\majrule{6}; or rule~\majrule{4} followed by rule~\majrule{8} is enabled. In each
  case, there exists a configuration $\c_i'$ such that $\c_i \reach{*}
  \c_i'$ and $\c_i(d', Q_M) > \c_i'(d', Q_M)$. Therefore, we are done
  by fairness.
\end{proof}

\lemeven*

\begin{proof}\label{prf:even}
  Note that from $\c_\alpha$ onwards, only rules~\majrule{9}
  and~\majrule{10} may change a state $q$ to a state $q'$ such that
  $\even{q'} \neq \even{q}$. Let $Q_e \defeq \{q \in Q :
  \even{q}\}$. We make a case distinction.

  \medskip
  \noindent\emph{Case~$\card{U}{\c_\alpha} > 0$}. Suppose that there exist infinitely many $i \geq
  \alpha$ such that, for some $q\in \act{\c_i}$, $\even{q}$ holds, as
  we are otherwise done. As rule~\majrule{9} is enabled, there exists
  $\c_i'$ such that $\c_i \reach{*} \c_i'$ and $\card{Q_e}{\c_i} >
  \card{Q_e}{\c_i'}$. By repeatedly invoking fairness, we obtain $\tau
  \geq \alpha$ such that $\c_\alpha \reach{*} \c_\tau$ and
  $\card{Q_e}{\c_\tau} = 0$. Moreover, for any $\c'$ such that
  $\c_\tau \reach{*} \c'$, rules \majrule{9} and \majrule{10} are
  permanently disabled since $\card{Q_e}{\c'} = 0$. So, for any $q \in \act{\c'}$, we have
  $\lnot\even{q}$.

  \medskip
  \noindent\emph{Case~$\card{U}{\c_\alpha} = 0$}. Since
  every agent is initially unpaired, we necessarily have $\alpha >
  0$. For every $i \geq \alpha$, note that rule~\majrule{9} is
  disabled in $\c_i$. Therefore, only rule~\majrule{10} can change a
  state $q$ to a state $q'$ such that $\even{q'} \neq \even{q}$. By
  minimality of $\alpha$, we have
  \[
  \c_{\alpha-1} \reach{(1)} \c_\alpha,
  \text{ and hence }
  \card{Q_e}{\c_\alpha} \geq 2.
  \]
  Since rule~\majrule{9} is permanently disabled,
  we have $\card{Q_e}{\c_i} > 0$ for every $i \geq \alpha$.

  If $\act{\c_i} \not\subseteq Q_e$ for only finitely many indices $i
  \geq \alpha$, then we are done. So, assume that there exist
  infinitely many $i$ such that for some $q \in \act{\c_i}$,
  $\lnot\even{q}$ holds. Rule~\majrule{10} is enabled, which means
  that there exists $\c_i'$ such that $\c_i \reach{*} \c_i'$ and
  $\card{Q_e}{\c_i} < \card{Q_e}{\c_i'}$. Therefore, by repeatedly
  invoking fairness, we obtain some $\tau \geq \alpha$ such that
  $\c_\alpha \reach{*} \c_\tau$ and $\card{Q_e}{\c_\tau} = |\c_\tau|$. Thus, $\act{\c_i} \subseteq Q_e$ holds
  for all $i \geq \tau$.
\end{proof}

\lemconvtrue*

\begin{proof}\label{prf:convtrue}
  Let $d \in \D$ be the majority color. By \Cref{lem:majcolor}, we
  have $\c_\alpha(d, U) > 0$. As there is a
  majority for $d$, we have $\c_i(d, Q) > \sum_{d' \neq d} \c_i(d',
  Q)$ for all $i \in \N$. Let $\tau$ be given by \Cref{lem:group} for
  color $d$. The following holds for all $i \geq \tau$:
  \[
  \act{\c_i(d)} \subseteq Q_M
  \text{ and }
  \act{\c_i(d')} \subseteq Q_m \text{ for all } d' \neq d.
  \]
  We can infer from the previous statements that $\card{Q_M}{\c_i} >
  \card{Q_m}{\c_i}$ for all $i \geq \tau$. Combined with
  \Cref{lem:invariant}, we conclude that $\card{Q_Y}{\c_i} >
  \card{Q_N}{\c_i}$ for all $i \geq \tau$.

  Let us show that the number of active states from $Q_N$ decreases
  permanently to zero. Assume that there are infinitely many indices
  $i \geq \tau$ such that $\card{Q_N}{\c_i} > 0$, as we
  are otherwise done. As there is at least one state $q\in\act{\c_i}\cap Q_N$ and $\card{Q_Y}{\c_i} > \card{Q_N}{\c_i}$, rule~\majrule{11} is enabled. Thus, there exists
  $\c_i'$ such that $\c_i \reach{*} \c_i'$ and $\card{Q_N}{\c_i} >
  \card{Q_N}{\c_i'}$. So, by repeatedly invoking fairness, we obtain
  some $\tau' \geq \tau$ such that $\card{Q_N}{\c_{\tau'}} = 0$. From
  $\c_{\tau'}$ onwards, rules~\majrule{11} and~\majrule{13} are
  disabled.

  Let us now show that the number of active states from $Q_n$
  decreases permanently to zero. Assume that there exist infinitely
  many $i \geq \tau'$ such that $\card{Q_n}{\c_i} > 0$,
  as we are otherwise done. Since there is at least one state $q\in\act{\c_i}\cap Q_Y$, rule~\majrule{12} is enabled. Therefore, there
  exists $\c_i'$ such that $\c_i \reach{*} \c_i'$ and
  $\card{Q_n}{\c_i} > \card{Q_n}{\c_i'}$. By repeatedly invoking
  fairness, we obtain some $\tau''$ such that $\card{Q_n}{\c_{\tau''}}
  = 0$. From there, rules~\majrule{12} and~\majrule{14} are
  disabled.
  
  Let us now prove that there are no active states $q$ in $\c_\tau$ (and any $\c'$ reachable from $\c_\tau$) such that $\maj{q} \in \{\overline{Y}, \overline{N}\}$. Recall the claims from~\Cref{lem:group}; agents of $d$ are no longer in states from $Q_{\overline{N}}$ and agents of $d'\neq d$ are no longer in states of $Q_{\overline{Y}}$. Furthermore, all agents are in the correct group, \ie agents of $d$ are in the majority group while agents of $d'\neq d$ are in the minority group. Assume that for $i\geq\tau$, an agent of $d'\neq d$ (resp.\ $d$) is in a state $q\in\act{\c_i}$ such that $\maj{q} = \overline{N}$ (resp.\ $\maj{q} = \overline{Y}$). This implies the existence of a configuration $\c'$ reachable from $\c_i$ such that some state $q'\in\act{\c'(d')}$ (resp.\ $q'\in\act{\c'(d)}$) satisfies $\grp{q'}$ (resp.\ $\lnot\grp{q'}$). By fairness, such a configuration is reached, which is a contradiction to the definition of $\tau$ from~\Cref{lem:group}. Thus, for any state $q \in \act{\c_{\tau''}}$, we have
  $\maj{q} \in \{Y, y\}$.

  Finally, by \Cref{lem:even}, there exists $\tau''' \geq \tau''$ such
  that for every $i \geq \tau'''$ and $q \in \act{\c_i}$, it is the
  case that $\lnot\even{q}$ holds.
\end{proof}

\lemconvfalse*

\begin{proof}\label{prf:convfalse}
  If $\card{U}{\c_\alpha} = 0$, then, by
  \Cref{lem:pair,lem:even}, there exists $u \geq \alpha$ such that
  for any $i \geq u$, each state $q \in \act{\c_i}$ satisfies
  $\even{q}$.

  Now, assume this is not the case. By \Cref{lem:pair}, there is a
  unique color $d \in \D$ such that $\card{U}{\c_\alpha(d)} > 0$. As there is no majority color, we have $\c_i(d, Q) \leq
  \sum_{d' \neq d} \c_i(d', Q)$ for every $i \in \N$. Let $\tau \geq \alpha$ be given by \Cref{lem:group} and taken as minimal. By \Cref{lem:group}, for every $i \geq \tau$:
  \[
  \act{\c_i(d)} \subseteq Q_M
  \text{ and }
  \act{\c_i(d')} \subseteq Q_m
  \text{ for all } d' \neq d.
  \]
  So, we have $\card{Q_M}{\c_i} \leq \card{Q_m}{\c_i}$ for all $i \geq
  \tau$. By \Cref{lem:invariant}, this implies that $\card{Q_Y}{\c_i}
  \leq \card{Q_N}{\c_i}$ for all $i \geq \tau$.

  Let us show that the number of active states from $Q_Y$ decreases
  permanently to zero. Assume that there exist infinitely many indices
  $i \geq \tau$ such that $\card{Q_Y}{\c_i} > 0$. As there is at least one state $q\in\act{\c_i}\cap Q_Y$ and $\card{Q_Y}{\c_i} \leq \card{Q_N}{\c_i}$, rule~\majrule{11} is enabled. So, there exists $\c_i'$ such that $\c_i
  \reach{*} \c_i'$ and $\card{Q_Y}{\c_i} >
  \card{Q_Y}{\c_i'}$. By repeatedly invoking fairness, we
  obtain some $\tau' \geq \tau$ with $\card{Q_Y}{\c_{\tau'}} =
  0$. Let $\tau'$ be minimal. Rules~\majrule{11} and~\majrule{12} are now
  disabled.

  Since rule~\majrule{12} is disabled, we have $\card{Q_y}{\c_i} \geq
  \card{Q_y}{\c_{i+1}}$ for all $i \geq \tau'$. Let us show that the
  number of active states from $Q_y$ decreases permanently to
  zero. Two subcases arise.

  First, consider the case where $\card{Q_N}{\c_{\tau'}} > 0$. Assume that there exist infinitely many $i \geq \tau'$
  such that $\card{Q_y}{\c_i} > 0$, as we are otherwise
  done. Since there exists at least one state $q \in \act{\c_i} \cap Q_N$, rule~\majrule{13} is enabled. Therefore, there exists $\c_i'$ such
  that $\c_i \reach{*} \c_i'$ and $\card{Q_y}{\c_i} >
  \card{Q_y}{\c_i'}$. By repeatedly invoking fairness, we
  obtain some $\tau'' \geq \tau'$ such that $\card{Q_y}{\c_{\tau''}} =
  0$. This disables rules~\majrule{13} and~\majrule{14}. 
  
  Let us now consider the remaining case where $\card{Q_N}{\c_{\tau'}} = 0$. From~\Cref{lem:invariant}, it follows that for any $v\geq \tau$, $\card{Q_M}{\c_{v}} = \card{Q_m}{\c_{v}}$. Since each agent is initially in a state $q$ such that $\grp{q}$ holds, since $\card{U}{\c_\alpha} > 0$ and since $\varphi_\text{maj}(\c_0) = \false$, at least one agent has its group changed in the execution. There exists a transition $t = ((p, q), \sim, (p', q'))$ and data $d_1, d_2 \in \D$ such that $(\grp{p} \neq \grp{p'}) \lor (\grp{q} \neq \grp{q'})$ holds, $\c_{\tau-1} \reach{t} \c_\tau$ and $\c_\tau = \c_{\tau-1} - (\vec{p}_{d_1} + \vec{q}_{d_2}) + (\vec{p'}_{d_1} + \vec{q'}_{d_2})$. Observe that $t$ must be a transition arising from a rule~\majrule{2}, \majrule{3}, \majrule{6}, \majrule{7} or \majrule{8}. We identify two subcases.

  \medskip\noindent\emph{Subcase: transition $t$ arises from rule~\majrule{2},~\majrule{3},~\majrule{6} or~\majrule{7}.}  Recall that for any $v\geq\tau$, groups are stabilized in $\c_v$, and hence $\card{Q_Y}{\c_v} = \card{Q_N}{\c_v}$ since $\card{Q_M}{\c_v} = \card{Q_m}{\c_v}$. We know from the definition of the rules that either $\card{Q_N}{\c_\tau} > 0$ (which implies $\card{Q_Y}{\c_\tau} > 0$) or $\card{Q_Y}{\c_\tau} > 0$ (which implies $\card{Q_N}{\c_\tau} > 0$). Therefore, \[\c_{\tau'-1}\reach{(11)}\c_{\tau'} \text{ and  } \card{Q_Y}{\c_{\tau'-1}} = \card{Q_N}{\c_{\tau'-1}} = 1.\] By definition of rule~\majrule{12}, it follows that $\card{Q_n}{\c_{\tau'}} \geq 2$.
  
  \medskip\noindent\emph{Subcase: transition $t$ arises from rule~\majrule{8}.}
  By definition of~\majrule{8}, we have $\card{Q_n}{\c_\tau} \geq 2$. If $\card{Q_Y}{\c_\tau} = \card{Q_N}{\c_\tau} = 0$, then $\tau = \tau'$. Otherwise, we have \[\card{Q_Y}{\c_\tau} = \card{Q_N}{\c_\tau} > 0 \text{ and } \c_{\tau'-1}\reach{(11)}\c_{\tau'}.\] Again, it follows that $\card{Q_n}{\c_{\tau'}} \geq 2$.
  
  In both subcases, we have $\card{Q_n}{\c_{\tau'}} > 0$. Observe that rule~\majrule{13} is disabled. Thus, only
  rule~\majrule{14} can now change a state $q$ to a state $q'$ such
  that $\maj{q'} \neq \maj{q}$. Since $\card{Q_n}{\c_{\tau'}} > 0$,
  rule~\majrule{14} is enabled for any $j \geq \tau'$ such that
  $\card{Q_y}{\c_j} > 0$. So, by repeatedly
  invoking fairness, we obtain some $\tau'' \geq \tau'$ such that
  $\card{Q_y}{\c_{\tau''}} = 0$.
  
  Using the results shown in the proof of~\Cref{lem:convtrue}, we know there are no active states $q$ in $\c_\tau$ (and in any $\c'$ reachable from $\c_\tau$) with $\maj{q} \in \{\overline{Y}, \overline{N}\}$. Thus, we are
  done, as $\maj{q} \in \{N, n\}$ holds for every $i \geq \tau''$ and
  $q \in \act{\c_i}$.
\end{proof}

\section{Missing proofs of Section~\ref{sec:maj:io}}
\begin{applemma}[monotonicity]\label{lem:monotonicity}
  Let $\c$, $\overline{\c}$ and $\c'$ be configurations such that $\c
  \reach{*} \overline{\c}$ and $\c \embeds \c'$. There exists
  $\overline{\c'}$ such that $\c' \reach{*} \overline{\c'}$ and
  $\overline{\c} \embeds \overline{\c'}$.
\end{applemma}

\begin{proof}
  We consider the case where $\c \reach{t} \overline{\c}$. The general
  case ``${\reach{*}}$'' follows by induction. Let $\rho \colon \D \to
  \D$ be an injection such that $\c(x) \leq \c'(\rho(x))$ for all $x
  \in \D$. Let $t = ((p, q), \sim, (p', q'))$ be the transition used
  in $\c$ with $d, e \in \D$. As $\c \geq \p_d + \q_e$, we have $\c'
  \geq \p_{\rho(d)} + \q_{\rho(e)}$. So, we can use $t$ in $\c'$ to
  obtain $\overline{\c'} \defeq \c' - (\p_{\rho(d)} + \q_{\rho(e)}) +
  (\vec{p'}_{\rho(d)} + \vec{q'}_{\rho(e)})$. We have
  $\overline{\c}(d) \leq \overline{\c'}(\rho(x))$ for all $x \in \D$,
  and hence $\c' \embeds \overline{\c'}$.
\end{proof}

\begin{applemma}\label{lem:upclosed}
  The set $\U$ is upward closed and has a finite basis.
\end{applemma}

\begin{proof}\label{prf:upclosed}
  It suffices to show that $\U$ is upward closed. Indeed, since
  $\embeds$ is a well-quasi-order (\eg, see~\cite{HLLLST16}), it
  follows that $\U$ has a finite basis.

  Let $\c$ and $\c'$ be configurations such that $\c \in \U$ and $\c
  \embeds \c'$. If $O(\c') = \bot$, then we are done. Therefore,
  assume that $O(\c') = b \in \{0, 1\}$. Since $\act{\c} \subseteq
  \act{\c'}$, we have $O(\c) = b$. As $\c \in \U$, there exists a
  configuration $\overline{\c}$ such that $\c \reach{*} \overline{\c}$
  and $O(\overline{\c}) \neq b$. By \cref{lem:monotonicity}, there
  exists $\overline{\c'}$ such that $\c' \reach{*} \overline{\c'}$ and
  $\overline{\c} \embeds \overline{\c'}$. As $\act{\overline{\c}}
  \subseteq \act{\overline{\c'}}$, we either have $O(\overline{\c'}) =
  O(\overline{\c})$ or $O(\overline{\c'}) = \bot$. So, we obtain $\c'
  \in \U$.
\end{proof}

\begin{applemma}\label{lem:trunc:embeds}
  Let $k \geq 1$ and let $\c, \c'$ be configurations such that $\c
  \embeds \c'$. It is the case that $\tau_k(\c) \embeds \c$ and
  $\tau_k(\c) \embeds \tau_k(\c')$.
\end{applemma}

\begin{proof}
  For every $d \in \D$ and $q \in Q$, we have $\tau_k(\c)(d, q) =
  \min(\c(d, q), k) \leq \c(d, q)$. Thus, $\tau_k(\c) \embeds \c$
  trivially holds (using the identity function).
  
  Let $\rho \colon \D \to \D$ be an injection such that $\c(d) \leq
  \c'(\rho(d))$ for every $d \in \D$. For every $d \in \D$ and $q \in
  Q$, we have
  \[
  \tau_k(\c)(d, q) =
  \min(\c(d, q), k) \leq
  \min(\c'(\rho(d), q), k) =
  \tau_k(\c')(\rho(d), q).
  \]
  Thus, $\tau_k(\c) \embeds \tau_k(\c')$ (using injection $\rho$).
\end{proof}

\lemtruncinu*

\begin{proof}\label{prf:truncinu}
  By \Cref{lem:upclosed}, the set $\U$ has a finite basis $\{\c_1,
  \c_2, \ldots, \c_n\}$. Let $k \defeq \max\{\c_i(d, q) : d \in \D, q
  \in Q\}$. Note that $\tau_k(\c_i) = \c_i$ for all $i \in
      [1..n]$. Let us first show that $\c \in \U$ iff $\tau_k(\c) \in
      \U$.

  $\Rightarrow$) As $\c \in \U$, we have $\c_i \embeds \c$ for some $i
      \in [1..n]$. By \Cref{lem:trunc:embeds}, we have $\tau_k(\c_i)
      \embeds \tau_k(\c)$. Since $\c_i = \tau_k(\c_i)$, we obtain
      that $\c_i \embeds \tau_k(\c)$ and hence $\tau_k(\c) \in \U$.

  $\Leftarrow$) Since $\tau_k(\c) \in \U$, we have $\c_i \embeds
  \tau_k(\c)$ for some $i \in [1..n]$. By \Cref{lem:trunc:embeds}, we
  have $\tau_k(\c) \embeds \c$. Consequently, $\c_i \embeds \c$, which
  implies $\c \in \U$.

  It remains to prove the main claim. As $\S_0 \cup \S_1$ is the
  complement of $\U$, we have $\c \in \S_0 \cup \S_1$ iff $\tau_k(\c) \in
  \S_0 \cup \S_1$. Moreover, $O(\c) = O(\tau_k(\c))$ since $\act{\c} =
  \act{\tau_k(\c)}$. Therefore, $\c \in \S_b$ iff $\tau_k(\c) \in \S_b$.
\end{proof}

\lemstatelimit*

\begin{proof}\label{prf:statelimit}
  Let $\c_0\c_1\ldots\c_m$ be a fair execution reaching a stable configuration $\c_m$ with output $b\in\{0, 1\}$. Let $s_i(j)$ be the state of the $j$-th agent of datum $d$ in configuration $\c_i$. Let $a$ be such that $s_0(a) = q$ and $\c_m(d, s_m(a)) \geq k$, that is an agent initially in state $q$ that ends up in state $s_m(a) = q'$ with at least $k$ agents in $\c_m$. Note that since $\c_0(d, q) \geq \ell = |Q|\cdot (k-1) +1$, by the pigeonhole principle, such a state $q'$ must exist. We first show by induction that $\c_0 + \vec{q}_d \reach{*} \c_i + \vec{s_i(a)}_d$.

  \medskip
  \noindent\emph{Case~$i=0$}. Since $s_0(a) = q$ and $a$ is of datum $d$, it follows that $\c_0 + \vec{q}_d = \c_0 + \vec{s_0(a)}_d$, which implies that $\c_0 + \vec{q}_d \reach{*} \c_0 + \vec{s_0(a)}_d$.

  \medskip
  \noindent\emph{Case~$i>0$}. Assume $\c_0 + \vec{q}_d \reach{*} \c_{i-1} + \vec{s_{i-1}(a)}_d$. If $s_{i-1}(a) = s_i(a)$, then
  \begin{align*}
    \c_0 + \vec{q}_d &\reach{*} \c_{i-1} + \vec{s_{i-1}(a)}_d && \text{(by induction hypothesis)}\\
    &\reach{} \c_i + \vec{s_{i-1}(a)}_d && (\c_{i-1}\reach{}\c_i)\\
    &= \c_i + \vec{s_i(a)}_d && (s_{i-1}(a) = s_i(a))
  \end{align*}
  If $s_{i-1}(a) \neq s_i(a)$, there exist some $p\in Q$ and some transition $t=((p, s_{i-1}(a)), \sim, (p, s_i(a)))$ such that $\c_{i-1} \reach{t} \c_i$.
  Since the protocol is immediate observation, this transition leaves the observed agent intact in $\c_i$, that is $\c_i\geq \vec{p}_{d'}$ for some $d'\in\D$. This agent can again be observed by the agent of datum $d$ in state $s_{i-1}(a)$. Therefore,
  \begin{align*}
    \c_0 + \vec{q_d} &\reach{*} \c_{i-1} + \vec{s_{i-1}(a)}_d && \text{(by induction hypothesis)}\\
    &\reach{t} \c_i + \vec{s_{i-1}(a)}_d && (\c_{i-1}\reach{}\c_i)\\
    &\reach{t} \c_i + \vec{s_i(a)}_d && \text{(by observing agent of $d'$ in state $p$)}
  \end{align*}
  Now, assume that $\c_m + \vec{q'}_d \in \U$. We have
  \begin{align*}
    \c_m + \vec{q'}_d \in \U &\implies \tau_k(\c_m + \vec{q'}_d) \in \U && \text{(by \Cref{lem:truncinu})}\\
    &\implies \tau_k(\c_m) \in \U && (\c_m(d, q') \geq k)\\
    &\implies \c_m \in \U && \text{(by \Cref{lem:truncinu})}
  \end{align*}
  This is a contradiction to our assumption that $\c_m\in \S$. Therefore, $\c_m + \vec{q'}_d$ must be stable. Observe that since $\c_m(q') > 0$, it follows that $O(\c_m + \vec{q'}_d) = b = O(\c_m)$. This implies that an execution starting from $\c_0 + \vec{q}_d$ reaches a stable configuration with output $b$. Therefore, $\psi(\c_0) = \psi(\c_0 + \vec{q}_d)$.
\end{proof}

\lemnewcolor*

\begin{proof}\label{prf:newcolor}
  Let $d, d'$ be as described and let $\c = \c_0 \reach{t_1} \c_1
  \reach{t_2} \cdots \reach{t_i} \c_i = \c'$. We prove the claim by
  induction on $i$. As the claim trivially holds for $i = 0$, let us
  consider the case where $n > i$. We make a case distinction.

  \medskip
  \noindent\emph{Case $\c_i(d) = \c_{i-1}(d)$}. We have
  \begin{alignat*}{3}
    \c_0 + (\c_0(d))_{d'}\
    &\reach{*}\ && \c_{i-1} + (\c_{i-1}(d))_{d'}\
    && \quad\text{(by induction hypothesis)} \\
    &=\ && \c_{i-1} + (\c_i(d))_{d'} \\
    & \reach{}\ && \c_i + (\c_i(d))_{d'}
    && \quad\text{(by $\c_{i-1} \reach{} \c_i$)}.
  \end{alignat*}

  \medskip
  \noindent\emph{Case $\c_i(d) \neq \c_{i-1}(d)$.} We show that
  $(\c_{i-1}(d))_{d'} \reach{t} (\c_i(d))_{d'}$. By assumption, there
  is a transition $t = ((p, q), \sim, (p, q')) \in \delta$ such that
  $\c_{i-1} \reach{t} \c_i$, $q \in \act{\c_{i-1}(d)}$ and $q \neq
  q'$. The latter implies $(\c_{i-1}(d))_{d'} \geq \mat{q_{d'}}$. Let
  us make a case distinction on ${\sim}$.
  \begin{itemize}
  \item Assume that ${\sim}$ is ``$=$''. If $p=q$, we have $\c_{i-1}(d, q) \geq 2$. Thus, we have $(\c_{i-1}(d))_{d'}\reach{t} (\c_i(d))_{d'}$. Otherwise, we have $\c_{i-1}(d, p) \geq 1$ and $\c_{i-1}(d, q) \geq 1$. This also implies that $(\c_{i-1}(d))_{d'} \reach{t}
    (\c_i(d))_{d'}$ holds.

  \item Assume that $\sim$ is ``$\neq$''. Some color $d'' \in
    \supp{\c_0}$ satisfies $p \in \act{\c_{i-1}(d'')}$ and hence $p
    \in \act{\c_i(d'')}$. Observe that $d'' \in \supp{\c_0}$ implies
    that $d'' \neq d'$. Therefore, we have
    $(\c_{i-1}(d))_{d'} \reach{t} (\c_i(d))_{d'}$. \qedhere
  \end{itemize}
\end{proof}

\lemformthreshold*

\begin{proof}\label{prf:formthreshold}
  Let $\f$ be a form with $\act{\f} \subseteq I$, let $\{\c_1, \ldots,
  \c_n\}$ be a basis of $\U$ given by \Cref{lem:upclosed}, and let $m
  \defeq \max(|\supp{\c_1}|, |\supp{\c_2}|, \ldots,
  |\supp{\c_n}|)$. We define $h(\f)$ as follows:
  \[
  h(\f) \defeq (m-1) \cdot |Q|^{\sum_{q \in Q} \vec{f}(q)} + 1.
  \]

  Let $\c_0$ be an initial configuration such that $\#_{\f}(\c_0) \geq
  h(\f)$. Let $b \defeq \psi(\c_0)$. Let $\c \in \S_b$ be such that
  $\c_0 \reach{*} \c$. By the pigeonhole principle, there exists $d
  \in \supp{\c_0}$ such that
  \begin{itemize}
  \item $\c_0(d) = \f$, and

  \item $\c(d) = \f'$ with $\#_{\f'}(\c) \geq m$.
  \end{itemize}
  Let us now inject the form $\f$ with a new datum $d' \in \D
  \setminus \supp{\c_0}$. By Lemma~\ref{lem:newcolor}, we have
  \[
  \c_0 + \f_{d'} =
  \c_0 + (\c_0(d))_{d'} \reach{*}
  \c + (\c(d))_{d'} =
  \c + \f'_{d'}.
  \]
  Let $\c' \defeq \c + \f'_{d'}$. As $\c(d) = \f'$ and $\c \in \S_b$,
  we have $O(\c) = O(\f'_{d'}) = b$ and so $O(\c') = b$.

  Let us now show that $\c' \in \S_b$. For the sake of contradiction,
  suppose that $\c' \in \U$. There exists $i \in [1..n]$ such that
  $\c_i \embeds \c'$. By the latter, there exists an injection $\rho
  \colon \D \to \D$ such that $\c_i(x) \leq \c'(\rho(x))$ for every $x
  \in \D$. Since 
  \[
    \#_{\f'}(\c') > \#_{\f'}(\c) \geq m \geq |\supp{\c_i}|,
  \]
  $d^* \in \D$ such that $\c'(d^*) = \f'$, and either
  $d^*$ does not belong to the image of $\rho$ or
  $\c_i(\rho^{-1}(d^*)) = \vec{0}$. Let us show that $\c_i \embeds
  \c$, which yields the contradiction $\c \in \U$. We make a case
  distinction.

  \medskip
  \noindent\emph{Case $\rho^{-1}(d')$ undefined}. We trivially have
  $\c_i(x) \leq \c'(\rho(x)) = \c(\rho(x))$ for all $x \in \D$.

  \medskip
  \noindent\emph{Case $d^* = d'$}. Since $\c_i(\rho^{-1}(d')) =
  \c_i(\rho^{-1}(d^*)) = \vec{0} = \c(d') = \c(d^*)$, and as $\c$ and
  $\c'$ only differ on $d'$, we have $\c_i(x) \leq \c(\rho(x))$ for
  every $x \in \D$. Thus, $\c_i \embeds \c$.
  
  \medskip
  \noindent\emph{Case $d^* \neq d'$}. Let $\rho' \colon \D \to \D$ be
  the injection given by $\rho$, but with $d^*$ playing the role of
  $d'$ (and possibly the converse to ensure that $\rho'$ is indeed an
  injection), \ie\
  \[
  \rho'(x) \defeq
  \begin{cases}
    d'        & \text{if } \rho^{-1}(d^*) \text{ is defined and }
                x = \rho^{-1}(d^*), \\
    d^*       & \text{if } x = \rho^{-1}(d'), \\
    \rho(x)   & \text{otherwise}.
  \end{cases}
  \]
  As $d^* \neq d'$, we have $\c(d^*) = \c'(d^*) = \vec{f}'$. Thus,
  \[
  \c_i(\rho^{-1}(d')) \leq
  \c'(d') =
  \vec{f}' =
  \c(d^*) =
  \c(\rho'( \rho^{-1}(d') )).
  \]
  Furthermore, if $\rho^{-1}(d^*)$ is defined, then we have
  $\c_i(\rho^{-1}(d^*)) = \vec{0} = \c(d') =
  \c(\rho'(\rho^{-1}(d^*)))$. Thus, since $\c$ and $\c'$ only differ
  on datum $d'$, we have $\c_i(x) \leq \c(\rho'(x))$ for all $x \in
  \D$, which implies $\c_i \embeds \c$.
\end{proof}

\clearpage
\section{Missing proofs of Section~\ref{sec:interval}}
Let us first recall all of the transitions in a single table:

\newcommand{\colorpred}{\emph{color pre.}}

\begin{center}
  \resizebox{\columnwidth}{!}{%
  \begin{tabular}{cccc}
    \bottomrule
    \multicolumn{4}{c}{\textbf{Leader and controller election}} \\
    \toprule
    \emph{rule}
    & \emph{state precondition}
    & \colorpred
    & \emph{state update} \\

    \midrule
    \multirow{2}{*}{(1)}
    & $\lead{p}$
    & \multirow{2}{*}{$d_1 = d_2$}
    & $\role{q'} = -|\role{q}|$ \\
    & $\lead{q}$
    &
    & $\lnot\lead{q'}$ \\

    \midrule
    \multirow{2}{*}{(2)}
    & $\ctrl{p} = 1$
    & \multirow{2}{*}{\none}
    & \\
    & $\ctrl{q} = 1$
    &
    & $\ctrl{q'} = -1$ \\

    \bottomrule

    \multicolumn{4}{c}{\textbf{Element count by datum}} \\
    \toprule
    \emph{rule}
    & \emph{state precondition}
    & \colorpred
    & \emph{state update} \\
    \midrule
    \multirow{2}{*}{(3)}
    & $\mathmakebox[35pt][r]{\init{p} =}~\mathmakebox[35pt][l]{\init{q}}$
    & \multirow{2}{*}{$d_1 = d_2$}
    & \\
    & $\mathmakebox[35pt][r]{\val{q} =}~\mathmakebox[35pt][l]{\val{p} < r}$
    &
    & $\val{q'} = \val{q} + 1$ \\

    \midrule
    \multirow{3}{*}{(4)*}
    & $\mathmakebox[60pt][r]{\cnt{\init{p}}{q}}\mathmakebox[30pt][l]{~< \val{p}}$
    & \multirow{3}{*}{$d_1 = d_2$}
    & $\cnt{\init{p}}{q'} = \val{p}$\\
    & $\mathmakebox[60pt][r]{\lead{q}}\mathmakebox[30pt][l]{}$
    &
    & $\texttt{if } (\role{q} > 0 \land {}$ \\

    &
    &
    & \hspace{1.7cm}$\val{p} \notin T(\role{q}, \init{p}))$\texttt{:} \\
    &
    &
    & $\quad \role{q'} = -\role{q}$ \\

    \bottomrule

    \multicolumn{4}{c}{\textbf{Role distribution and task tracking}} \\
    \toprule
    \emph{rule}
    & \emph{state precondition}
    & \colorpred
    & \emph{state update} \\

    \midrule
    \multirow{3}{*}{(5)}
    & $\lead{q}$
    & \multirow{3}{*}{\none}
    & \\
    & $\role{q} = 0$
    &
    & $\role{q'} = i$ \\
    & $\bigwedge_{j \in [1..m]} \cnt{j}{q} \in T(i, j)$
    &
    & \\

    \midrule
    \multirow{5}{*}{(6)*}
    & $\ctrl{p}$
    & \multirow{5}{*}{\none}
    & \\
    & $\lead{q}$
    &
    & \\
    & $\role{q} = i > 0$
    & 
    & $\role{q'} = -i$ \\
    & $\bigvee_{i' \in [1..n] \setminus \{i\}} \big(\lnot\task{i'}{p} \land \hspace{44pt}$
    &
    & \\
    & $\hspace{43pt}\bigwedge_{j \in [1..m]} \cnt{j}{q} \in T(i', j)\big)$
    &
    &  \\

    \midrule \\[-16.5pt]
    \midrule
    \multirow{2}{*}{(7)*}
    & $\role{p} \neq 0$
    & \multirow{2}{*}{\none}
    & $\task{|\role{p}|}{q'} = (\role{p} > 0)$ \\
    & $\ctrl{q} = 1$
    &
    & \\

    \midrule
    \multirow{3}{*}{(8)*}
    & $\ctrl{p} = 1$
    & \multirow{3}{*}{\none}
    & \\
    & $\role{q} < 0$
    &
    & $\role{q'} = 0$ \\
    & $\lnot\task{|\role{q}|}{p}$
    &
    & \\
    \midrule \\[-16.5pt]
    \midrule
    \multirow{2}{*}{(9)}
    & \makebox[30pt][l]{$\ctrl{p} = -1$}
    & \multirow{2}{*}{\none}
    & \\
    & \makebox[30pt][l]{$\ctrl{q} = 1$}
    &
    & $\bigwedge_{i \in [1..m]} \lnot\task{i}{q'}$ \\
    
    \midrule
    \multirow{3}{*}{(10)}
    & \makebox[30pt][l]{$\ctrl{p} = 1$}
    & \multirow{3}{*}{\none}
    & \\
    & \makebox[30pt][l]{$\ctrl{q} = -1$}
    &
    & $\ctrl{q'} = 0$ \\
    & $\bigwedge_{i \in[1..m]} \lnot\task{i}{p}$
    &
    & \\
    
    \bottomrule

    \multicolumn{4}{c}{\textbf{Output propagation}} \\
    \toprule
    \emph{rule}
    & \emph{state precondition}
    & \colorpred
    & \emph{state update} \\

    \midrule
    \multirow{1}{*}{(11)*}
    & $\ctrl{p}$
    & \multirow{1}{*}{\none}
    & $\out{q'} = \bigwedge_{i=1}^n \task{i}{p}$ \\
    \bottomrule
  \end{tabular}}
\end{center}
Note that for the sake of brevity, for the remainder of the section, we use the notation $\c\reach{(n)}\c'$ to denote that $\c\reach{t}\c'$ using a transition $t$ arising from rule $(n)$.
\lemldrctrl*

\begin{proof}\label{prf:ldr:ctrl}
  We prove the two parts of the claim independently. The validity of
  the claim follows by taking the maximum among the two thresholds
  $\tau$.

  \medskip
  \noindent\emph{Threshold for $Q_L$}. Note that $\c \reach{(1)} \c'$
  implies $\card{Q_L}{\c'} = \card{Q_L}{\c} - 1 > 0$, and $\card{Q_L}{\c} = \card{Q_L}{\c'}$
  otherwise. Therefore, there exist $\tau \in \N$ and $k \geq 1$ such
  that $\card{Q_L}{\c_\tau} = \card{Q_L}{\c_{\tau+1}} = \cdots =
  k$. If $k = 1$, then we are done. Thus, for the sake of
  contradiction, assume that $k \geq 2$. Rule~\intrule{1} is enabled
  in each of $\c_\tau, \c_{\tau+1}, \ldots$. By fairness, this implies
  that for some $\tau' \geq \tau$, we have $\card{Q_L}{\c_{\tau'}} >
  \card{Q_L}{\c_{\tau'+1}}$, which is a contradiction.

  \medskip
  \noindent\emph{Threshold for $Q_C$}. The proof is the same, but
  using rule~\intrule{2}.
\end{proof}

\lemcount*

\begin{proof}\label{prf:count}
  Note that rule~\intrule{3} is the only rule modifying the value of a state; it requires two states with the same value and it increases the value of a single state by $1$. 
  Therefore, for any $\c, \c'$ and $d\in\D$ such that $\c
  \reach{} \c'$ with $\c' = \c - \q_d + \vec{q'}_d$ for some states $q,
  q'\in Q$, either $\val{q'} = \val{q}$, or $\val{q'} = \val{q} + 1$
  holds. Further
  observe that no rule changes a state $q$ to a state $q'$ such that
  $\init{q'} \neq \init{q}$, \ie $\init{q}$ is immutable for any $q$.

  Let $d \in \D$ and $j \in [1..m]$ be such that $\c_0(d, Q_j) >
  0$. We define
  \[
  Q_{j, k} \defeq \{q\in Q : \init{q} = j \land \val{q} = k\}
  \text{ and }
  Q_{j, >k} \defeq \bigcup_{k'= k+1}^r Q_{j, k'}.
  \]
  We claim that, for every $1 \leq k < \min(\c_0(d, Q_j) + 1, r)$,
  there exists $\tau \in \N$ such that, for every $\tau' \geq \tau$,
  the following holds:
  \[
  \card{Q_{j, k}}{\c_{\tau'}(d)} = 1
  \text{ and }
  \card{Q_{j, >k}}{\c_{\tau'}(d)} = \c_0(d, Q_j) - k.
  \]
  Before proving the claim, let us see how it allows to conclude. If
  $\c_0(d, Q_j) \leq r$, then this confirms the existence of an agent of $d$
  in a $(d, j)$-valid state. Note that if $k = r - 1$, then $Q_{j,
    >k}$ is in fact $Q_{j, r}$. Therefore, if $\c_0(d, Q_j) > r$, then
  agents will accumulate in some states of $Q_{j, r}$. In that case,
  there will be more than one agent of $d$ in a $(d, j)$-valid state. Let us now prove the claim by induction on $k$.

  \medskip
  \noindent\emph{Base case $(k = 1)$}. Since all states $q\in I$ satisfy $\val{q} = 1$, we gather from the previous observations that for $\c_i, \c'$ such that $\c_i \reach{*} \c'$, $\c_i(d, Q_{j,1}) \geq \c'(d,Q_{j,1})$ holds.  Note that the fact that two states with the same value are needed for rule~\intrule{3} to decrease $\c'(d, Q_{j,1})$ implies that $\c'(d, Q_{j, 1}) \geq 1$. Note also that for any $q\in\act{\c_i}$, either $\val{q} = 1$ or $\val{q} > 1$. By the above observations,
  there exist $\ell \in \N$ and $\kappa \geq 1$ such that $\card{Q_{j, 1}}{\c_\ell(d)} = \card{Q_{j,
      1}}{\c_{\ell+1}(d)} = \cdots = \kappa$. If $\kappa = 1$, then $\card{Q_{j, >1}}{\c_{\ell}(d)} = \c_0(d, Q_j) - 1$ and we are
  done. Otherwise, rule~\intrule{3} is enabled in each of $\c_\ell,
  \c_{\ell+1}, \ldots$. By fairness, this implies the existence of
  $\ell' \geq \ell$ such that
  \[
  \card{Q_{j, 1}}{\c_{\ell'}(d)} > \card{Q_{j, 1}}{\c_{\ell'+1}(d)}
  \text{ and }
  \card{Q_{j, >1}}{\c_{\ell'}(d)} < \card{Q_{j, >1}}{\c_{\ell'+1}(d)},
  \]
  which is a contradiction.

  \medskip
  \noindent\emph{Induction step}. Assume that the claim is true for
  $k$ such that $1 < k < r - 1$ and $k \leq \c_0(d, Q_j)$. Let us show
  that the claim holds for $k + 1$. If $k = \c_0(d, Q_j)$, then $k + 1
  \not< \min(\c_0(d, Q_j) + 1, r)$, and hence we are trivially done as
  the claim does not have to hold for $k + 1$. Thus, let us assume
  that $k < \c_0(d, Q_j)$. By induction hypothesis, there exists $v
  \in \N$ such that, for every $v' \geq v$, the following holds:
  \[
  \card{Q_{j, k}}{\c_{v'}(d)} = 1
  \text{ and }
  \card{Q_{j, >k}}{\c_{v'}(d)} = \c_0(d, Q_j) - k.
  \]
  Since values are incremented by $1$ (\ie there are no gaps in
  values), at least one agent is in a state of $Q_{j, k + 1}$. Suppose
  that for some $w \geq v$, there exist states $p, q \in Q_{j, k+1}$
  such that $\c_w \geq \p_d + \q_d$. Then, rule~\intrule{3}
  is enabled, implying the existence of a configuration $\c'$ such
  that $\c_w \reach{*} \c'$,
  \[
  \card{Q_{j, k+1}}{\c_w(d)} > \card{Q_{j, k+1}}{\c'(d)}
  \text{ and }
  \card{Q_{j, >k+1}}{\c_w(d)} < \card{Q_{j, >k+1}}{\c'(d)}.
  \]
  By fairness, this implies the existence of some $w'$ with
  $\card{Q_{j, k+1}}{\c_{w'}(d)} = 1$. We are done since
  \begin{align*}
    \card{Q_{j, >k+1}}{\c_{w'}(d)}
    &= \card{Q_{j, >k}}{\c_{w'}(d)} - \card{Q_{j, k+1}}{\c_{w'}(d)}
    && \text{(by definition of $Q_{j, \cdot}$)} \\
    &= \card{Q_{j, >k}}{\c_{w'}(d)} - 1 \\
    &= (\c_0(d, Q_j) - k) - 1 \\
    &= \c_0(d, Q_j) - (k + 1). && \qedhere
  \end{align*}
\end{proof}

\lemldrcnt*

\begin{proof}\label{prf:ldrcnt}
  Note that only rule~\intrule{4} changes a state $q$ to a state $q'$
  such that for some $j \in [1..m]$, $\cnt{j}{q'} \neq
  \cnt{j}{q}$. Moreover, using rule~\intrule{4} leads to $\cnt{j}{q'}
  > \cnt{j}{q}$.
  
  Let $d \in \D$ and $j \in [1..m]$. If $\c_0(d, Q_j) = 0$, then the
  leader's count for that element is never updated, \ie it never meets
  a state $q$ of its datum such that $\init{q} = j$ and therefore its
  count is always $0$ for element $j$. Otherwise, let $\tau$ be the
  threshold given by \Cref{lem:count}, let $v \geq \max(\alpha, \tau)$
  and let $q \in \act{\c_v(d)}$ be a $(d, j)$-valid state given by
  \Cref{lem:count}. If there exists $p \in \act{\c_v(d)} \cap Q_L$
  such that $\cnt{j}{p} < \val{q}$, then, rule~\intrule{4} is
  enabled. This implies the existence of a configuration $\c'$ such
  that $\c_v \reach{*} \c'$ and, for any $q' \in \act{\c'(d)} \cap
  Q_L$, $\cnt{j}{q'} = \min(\c_0(d, Q_j), r)$ holds. Therefore, by
  fairness, there exists $v'$ such that
  \[
  \forall q' \in \act{\c_{v'}(d)} \cap Q_L, j \in [1..m] :
  \cnt{j}{q'} = \min(\c_0(d, Q_j), r).
  \]
  Note that the precondition of rule~\intrule{4} is no longer met once
  the leader has the correct count since no other state can have a
  higher count for $j$ in $d$. Since this is valid for any datum and
  any $j$, there exists a configuration such that for any datum, the
  datum's leader has the correct counts for all elements.
\end{proof}

\lemcorrectrole*

\begin{proof}\label{prf:correctrole}
  Initially, each state $q \in \act{\c_0}$ satisfies $\role{q} = 0$.
  Observe that only rule~\intrule{5} can assign a strictly positive
  role to a state. Formally, for a datum $d \in \D$, a role $i \in
  [1..n]$ and configurations $\c_\tau, \c'$ such that $\c_\tau
  \reach{t} \c'$ and $\c' = \c_\tau - \q_d + \vec{q'}_d$ for some $q,
  q'\in Q$ and transition $t$ arising from rule~\intrule{5}, the
  leader's state $q$ must satisfy $\cnt{j}{q}\in T(i, j)$ for all $j
  \in [1..m]$. Thus, a leader must have the correct counts for a role
  before self-assigning the aforementioned role.

  Futher observe that rule~\intrule{4} is the only rule changing a
  count for a leader. Formally, for a datum $d \in \D$, a role $i \in
  [1..n]$ and configurations $\c_\tau, \c'$ such that $\c_\tau
  \reach{t} \c'$ and $\c' = \c_\tau - \q_d + \vec{q'}_d$ for some $q,
  q'\in Q$ and transition $t$ arising from rule~\intrule{4}, this
  leader must enter a new state $q'$ such that $\role{q'} < 0$ if it
  no longer satisfies the count condition for the corresponding
  interval. Therefore, after updating its count, the leader either has
  a strictly negative assigned role or it still satisfies the role it
  previously assumed.
  
  Therefore, for any $\tau \in \N$, all leaders with strictly positive
  roles have the correct counts for their role in $\c_\tau$.
\end{proof}

\begin{applemma}\label{lem:nonldrroles}
  For every $\tau \in \N$ and
  $q \in \act{\c_\tau} \setminus Q_L$, it is the case that
  $\role{q} \leq 0$.
\end{applemma}

\begin{proof}
  Observe that only rule~\intrule{5} can assign a strictly positive
  role to a state. Formally, for a datum $d \in \D$ and configurations
  $\c_\tau, \c'$ such that $\c_\tau \reach{(5)} \c'$ and $\c' = \c_\tau
  - \q_d + \vec{q'}_d$ for some $q, q'\in Q$, state $q$ must satisfy
  $\lead{q}$. Therefore, non-leaders are never directly assigned a
  strictly positive role.
  
  Further note that only rule~\intrule{1} removes leadership from a
  state. Formally, for a datum $d \in \D$ and configurations $\c_\tau,
  \c'$ such that $\c_\tau \reach{(1)} \c'$ and $\c' = \c_\tau - \q_d +
  \vec{q'}_d$ for $q, q'\in Q$,
  state $q'$ must satisfy $\role{q'} < 0$. Hence, non-leaders are
  assigned a strictly negative role when their leadership is stripped
  away. Note that leadership removal happens only once.

  Therefore, non-leaders can never have a strictly positive assigned
  role.
\end{proof}

\lemnonldrstable*

\begin{proof}\label{prf:nonldrstable}
  From \Cref{lem:nonldrroles}, we know non-leaders cannot have a
  strictly positive assigned role. Let $Q_0\defeq\{q\in Q\colon
  \lnot\lead{q} \land \role{q} = 0\}$. Note that only rule~\intrule{8}
  changes the role of a non-leader, and it changes it to $0$. This,
  along with the fact that leadership removal is definitive, implies
  that $\card{Q_0}{\c_j} \geq \card{Q_0}{\c_i}$ holds for every $j >
  i$.

  Assume that for some $v \in \N$, there
  is some state $q\in\act{\c_v}$ such that $\lnot\lead{q}$ holds and
  $\role{q} < 0$. Let $i\defeq |\role{q}|$. Then, rule~\intrule{7} is enabled in
  $\c_v$. This means that there exists a configuration $\c'$ and a
  state $q \in \act{\c'} \cap Q_C$ such that $\c_v \reach{(7)} \c'$ and
  $\task{i}{q} = \false$. Rule~\intrule{8} is enabled in
  $\c'$.
  This implies the existence of a configuration $\c''$ such that
  $\c'\reach{*}\c''$ and $\card{Q_0}{\c_v} < \card{Q_0}{\c''}$.
    
  By fairness, some configuration $\c_{v'}$ also satisfies this condition. Thus, for some $\tau\geq v'$, we have
  $
    \card{Q_0}{\c_\tau} = \card{Q \setminus Q_L}{\c_\tau}
  $. Since it has been established that no rule changes a non-leader's role to a non-$0$ value, this implies that for any $\tau'\geq \tau$, non-leaders are in states $q\in Q_0$.
\end{proof}

\lemuniquectrl*

\begin{proof}\label{prf:uniquectrl}
  Observe that rule~\intrule{2} is the only rule assigning a strictly
  negative controller value. We know that from $\c_\alpha$, there
  exists a unique controller. Let $Q_{-1} \defeq \{q \in Q : \ctrl{q}
  = -1\}$. Note that for any $j > i \geq \alpha$, since
  rule~\intrule{2} is disabled, we have $ \card{Q_{-1}}{\c_j} \leq
  \card{Q_{-1}}{\c_i} $.

  Assume that for some $v \geq \alpha$,
  there exists a state $q \in \act{\c_v}$ satisfying $\ctrl{q} =
  -1$. Then, we know that rule~\intrule{9} is enabled in $\c_v$,
  implying the existence of a configuration $\c'$ such that $\c_v
  \reach{(9)} \c'$ and
  \begin{align*}
  \bigwedge_{i \in[1..m]} \lnot\task{i}{q}
  &&\text{ for some } q \in \act{\c'} \cap Q_C
  \end{align*}
  This means that for $\c'$, rule~\intrule{10} is enabled. Therefore,
  there exists a configuration $\c''$ such that $\c'\reach{(10)}\c''$ and
  $\card{Q_{-1}}{\c_v} > \card{Q_{-1}}{\c''}$. Hence, by fairness,
  there exists $\tau \geq v$ such that $\card{Q_{-1}}{\c_\tau} = 0$.
  Since no rule can change a state $q$ with $\ctrl{q} = 0$ to a state
  $q'$ with $\ctrl{q} \neq 0$, this holds for any $\tau' \geq \tau$.
\end{proof}

\lemctrltask*

\begin{proof}\label{prf:ctrltask}
  Note that only rule~\intrule{7} can assign $\true$ to some task $i$
  for the controller. Let $R_i \defeq \{q\in Q : |\role{q}| = i\}$.

  Let $i \in [1..n]$ and let $q \in Q_C$ satisfy $\lnot\task{i}{q}$
  and $q'\in Q_C$ satisfy $\task{i}{q'}$. Let some $v \geq \gamma$ be such that $\c_v \reach{(7)}
  \c_{v+1}$,  $\c_{v+1} = \c_{v} - \q_d +
  \vec{q'}_d$. Rule~\intrule{7} implies the existence of some
  state $p \in \act{\c_{v+1}} \cap R_i$. Note that for $\c_\gamma$, a unique controller has been permanently elected.

  Let $w \geq v+1$. For $\c_w \reach{} \c_{w+1}$ to satisfy
  $\card{R_i}{\c_w} > \card{R_i}{\c_{w+1}}$, we must have $\c_w
  \reach{(8)} \c_{w+1}$. For that to be the case, there has to be some
  state $q'' \in \act{\c_w} \cap Q_C$ such that $\lnot\task{i}{q''}$
  holds. Thus, for any $\c_w$ for which there exists a state $q^* \in
  \act{\c_w}\cap Q_C$ satisfying $\task{i}{q^*}$, there exists a state
  $p \in \act{\c_w} \cap R_i$.
\end{proof}

\lemint*

\begin{proof}[Proof of $\Rightarrow$)]\label{prf:int}
  By definition, $\c_\beta$ has a unique controller elected and a
  unique leader for each datum, carrying the correct counts for its
  datum's initial states. Since $\psi(\c_0)$ is $\true$, a subset of
  these leaders can fulfill the $n$ roles simultaneously. While the
  roles might not be correctly distributed at $\c_\beta$, we show that
  for some configuration $\c$ reachable from $\c_\beta$, the
  controller's tasks will all be $\true$. Let us consider the
  following lemmas and their implications:
  \begin{itemize}
    \item from \Cref{lem:correctrole}, a leader with an assigned role always \emph{thinks} it can currently fill this role,
    \item from \Cref{lem:nonldrstable}, all non-leaders eventually have their roles set to $0$, leaving only the elected leaders with (possibly) assigned roles,
    \item from \Cref{lem:uniquectrl} and \Cref{lem:ctrltask}, eventually, the unique controller can only have a task $i$ set to $\true$ if a leader has its role set to $\pm i$.
  \end{itemize}
  Therefore, for configuration $\c_\gamma$, only leaders have a non-zero role. These leaders also correctly \emph{think} they can fulfill the roles to which they are assigned. Finally, there is a unique controller and no agent is in a negative controller state.
  
  We now define a \emph{valid assignment} $A$ as an injective function $A \colon [1..n] \to \D$ such that if $A(i) = d$, then $d$ matches role $i$. For a configuration $\c$  such that $\c_0\reach{*} \c$ and a valid assignment $A$, let $s_A(\c)$ be the set of correctly assigned colors in $\c$ with respect to $A$, \ie
  \[
    s_A(\c)\defeq \{d\in\D\colon (\exists i\in[1..n] : A(i) = d) \land (\exists q\in Q_L : \c(d, q) > 0 \land \role{q} = i)\}.
  \]
  Moreover, we say that a configuration $\c$ is \emph{fixed} if:
  \begin{itemize}
  \item $\c_\gamma \reach{*} \c$,

  \item there exists $q \in \act{\c}\cap Q_C$ such that
    $\bigwedge_{i=1}^{n} \task{i}{q}$, and
      
  \item $\role{q} \geq 0$ for every $q \in \act{\c}$.
  \end{itemize}
  Note that a strictly negative role can only be assigned by rule~\intrule{1} (leader election), rule~\intrule{4} (count change) and rule~\intrule{6} (self-reassignment). Observe that rules~\intrule{1},~\intrule{4} and~\intrule{6} are disabled for a fixed configuration. We also know that the controller can only change one of its tasks to $\false$ if it observes a negative role through rule~\intrule{7} or a negative controller through rule~\intrule{9}. Since a fixed configuration does not have a strictly negative role, nor can it generate one, rule~\intrule{7} cannot change any of the controller's tasks to $\false$. Furthermore, rule~\intrule{9} is disabled for a fixed configuration. Hence, a fixed configuration can only reach fixed configurations.

  Now, let $A$ be a valid assignment. For any $v \geq \gamma$, we observe two cases.

  \medskip\noindent\emph{Case: $\exists A' : |s_{A'}(\c_v)| = n$.} There exists a fixed configuration $\c'$, reachable from $\c_v$. 

  \medskip\noindent\emph{Case: $\forall A' : |s_{A'}(\c_v)| < n$.} This implies that some role $i$ is not taken by any leader. If $A(i)$'s leader has no assigned role, then by rule~\intrule{5} there exists a configuration $\c'$, reachable from $\c_v$, such that $|s_A(\c_v)| < |s_A(\c')|$ and
    $
      \role{q} = i \text{ for some } q \in \act{\c'(A(i))} \cap Q_L
    $.
      
    Otherwise, we observe two cases:
    \begin{itemize}
      \item The controller has its task $i$ set to $\false$, allowing self-reassignment of $A(i)$'s leader through rule~\intrule{6}.
      \item The controller has its task $i$ set to $\true$. Since role $i$ is not taken by any leader, by \Cref{lem:ctrltask}, for the controller to have its task $i$ set to $\true$, some $q\in\act{\c_v}\cap Q_L$ satisfies $\role{q} = -i$. Thus, by rule~\intrule{7}, there exists a configuration $\c'$ reachable from $\c_v$ such that
      $
        \lnot\task{i}{q}
      $ for some $q \in \act{\c'} \cap Q_C$.
      The latter allows self-reassignment of $A(i)$'s leader through rule~\intrule{6}.
    \end{itemize}
    Consequently, there also exists a configuration $\c''$, reachable
    from $\c'$, such that $|s_A(\c_v)| < |s_A(\c'')|$ and $\role{q} =
    i$ for some $q \in \act{\c''(A(i))} \cap Q_L$.

  By fairness, since there exists a finite number of configurations and a finite number of valid assignments for $\supp{\c_0}$, either for some $v$, $|s_A(\c_v)| = n$ holds and, by fairness, $\c_v$ leads to a fixed configuration $\c_\tau$; or for some $v$ and some valid assignment $A'\neq A$, $|s_{A'}(\c_v)| = n$ holds, and again, by fairness, this leads to a fixed configuration $\c_\tau$. Since we know that fixed configurations can only reach fixed configurations and that they also guarantee the controller has its task list completely assigned, this is true for any $\tau'\geq \tau$.
\end{proof}

\begin{proof}[Proof of $\Leftarrow$)]\label{prf:intfalse}
  We prove the contrapositive. Assume $\psi(\c_0) = \false$. Let $W(\c)$ be the set of data $d \in \supp{\c_0}$ such that for some $q\in\act{\c(d)}\cap Q_L$, 
  \begin{itemize}
    \item $|\role{q}| = i$ holds for some $i\in[1..n]$, and
    \item $\c(d, Q_j) \notin T(i, j)$ holds for some $j\in[1..m]$.  
  \end{itemize}
    Informally, $W(\c)$ is the set of data in $\c$ with a leader having a wrongly assigned role (positive or negative), \ie, a role the data cannot match. Since $\card{Q_L}{\c_\beta} = |\supp{\c_0}|$ and the number of leaders never increases, for any $\c'$ such that $\c_\beta\reach{*} \c'$, we have $W(\c') \leq |\supp{\c_0}|$. 
  Furthermore, by \Cref{lem:ldrcnt}, the counts are stable and correct for leaders past $\c_\beta$. The latter implies that for any $t \geq \beta$ and datum $d$, if a role $i$ is assigned by rule~\intrule{5} to $d$'s leader, then $d$ matches role $i$. Therefore, for any $i,j \geq \beta$ with $i \leq j$, we have $W(\c_j) \leq W(\c_i)$.

  Assume that there exists some $\pi \geq \beta$ such that $W(\c_\pi) > 0$. Then, for some $d \in \supp{\c_0}$, there exists a state $q\in\act{\c_\pi(d)}\cap Q_L$ that satisfies $|\role{q}| = i$, where $i$ is a role $d$ does not match. Since $\pi \geq \beta$, by \Cref{lem:ldrcnt}, state $q$ must satisfy $\bigwedge_{j \in [1..m]} \cnt{j}{q} = \min(\c_0(d, Q_j), r)$, \ie, it must have the right counts for each elements for $d$.

  Thus, by \Cref{lem:correctrole}, $\role{q}$ must be strictly negative. This means that rule~\intrule{7} is enabled in $\c_\pi$. So, there exist $\c'$ and $q \in \act{\c'} \cap Q_C$ such that $\c_\pi \reach{*} \c'$ and $\lnot\task{i}{q}$ holds. Note that rule~\intrule{8} is now enabled in $\c'$. Therefore, there exists a configuration $\c''$ such that $\c'\reach{(8)}\c''$ and $W(\c'') < W(\c_\pi)$. Thus, by fairness, there exists $u \geq \pi$ such that $W(\c_u) = 0$.

  From \Cref{lem:ctrltask}, for any $v\geq\beta$, for all $q\in\act{\c_v}\cap Q_C$, if $\task{i}{q}$ holds, then there exists some state $q'\in\act{\c_v}\cap Q_L$ such that $\role{q'} = \pm i$. Recall from \Cref{lem:nonldrstable} that any non-leader eventually has a role of $0$. 
  Furthermore, observe that
  \begin{itemize}
    \item for $\c_u$ described above, for any $d\in\D$, if $d$'s leader has an assigned role $i$, then $d$ matches role $i$; and
    
    \item for every $w \geq \max (u, \gamma)$, since $\psi(\c_0)$ is $\false$, there is no valid assignment $A$ such that $|s_A(\c_w)| = n$.
  \end{itemize}
   Thus, no $\c_w$ satisfies the following for all $i \in [1..n]$:
  \[
    \exists d\in\supp{\c_0}, q\in\act{\c_w(d)}\cap Q_L : |\role{q}| = i.
    \]    
  Therefore, there exists $\tau \in \N$ such that, for any $\c'$ with $\c_\tau\reach{*} \c'$, there exists $i\in[1..n]$ such that, for every $q\in\act{\c'}\cap Q_C$, it is the case that $\lnot\task{i}{q}$ holds.
\end{proof}

\thminterval*

\begin{proof}\label{prf:interval}
  \medskip
  \noindent\emph{Case: $\psi(\c_0) = \true$.} Let $Q_{\true}\defeq \{q\in Q\colon \out{q}\}$. Note that only one controller exists for any configuration reachable from $\c_\alpha$. Recall the notion of fixed configurations from the proof of~\Cref{lem:int}. For a fixed configuration $\c$, if there exists a state $q \in \act{\c}$ such that $\lnot\out{q}$ holds, since the controller has its task list completely assigned, then there exists a configuration $\c'$ reachable from $\c$ such that $\card{Q_{\true}}{\c} < \card{Q_{\true}}{\c'}$. Note that the controller never propagates the $\false$ output in a fixed configuration; it follows that $\card{Q_{\true}}{\c_v} \leq \card{Q_{\true}}{\c^*}$ for any $\c^*$ reachable from $\c_v$. Invoking fairness, this implies that there also is a configuration $\c''$ reachable from $\c'$ such that $\card{Q_{\true}}{\c''} = |\c''|$. From the proof of~\Cref{lem:int}, we also know that if $\psi(\c_0) = \true$, then a fixed configuration is eventually reached. Therefore, some configuration $\c_\tau$ such that $O(\c_\tau) = \true$ is reached; it remains $\true$ for any configuration $\c_{\tau'}$ such that $\c_\tau \reach{*} \c_{\tau'}$.

  \medskip
  \noindent\emph{Case: $\psi(\c_0) = \false$.} From~\Cref{lem:int}, we know that for some $u\in\N$ and for any $v\geq u$, at least one task is unassigned for the controller in $\c_v$. Assume that there exists some $v\geq u$ such that some $q\in\act{\c_v}$ satisfies $\out{q}$. Then, since the controller has at least one task that is unassigned, rule~\intrule{11} is enabled (with the controller propagating the $\false$ output) and there exists a configuration $\c'$ reachable from $\c_v$ such that $\card{Q_{\true}}{\c_v} > \card{Q_{\true}}{\c'}$. Note that the controller never propagates the $\true$ output for $\c_v$; it follows that $\card{Q_{\true}}{\c_v} \geq \card{Q_{\true}}{\c^*}$ for any $\c^*$ reachable from $\c_v$. Thus, by fairness, this implies that there exists a configuration $\c''$ reachable from $\c'$ such that $\card{Q_{\true}}{\c''} = 0$. So, some configuration $\c_\tau$ such that $O(\c_\tau) = \false$ is reached. Any configuration $\c_{\tau'}$ reachable from $\c_\tau$ also satisfies $O(\c_\tau) = \false$.
\end{proof}

\end{document}